\newcommand*{\Scale}[2][4]{\scalebox{#1}{$#2$}}%
\newcommand{\PPP}[1]{\ensuremath{\Phi_{#1}}}
\newcommand{\h}[2]{\ensuremath{h_{{#2}}}}
\newcommand{\pow}[2]{\ensuremath{\bar{P}_{{#1}{#2}}}}
\def\Ri{\mathcal{R}_i}
\def\std{\ensuremath{\zeta}}
\def\mod{\ensuremath{{R}}}
\newcommand{\condA}[1]{\ensuremath{{\ncalA_{#1|{\ncalV}_0}}}}
\newcommand{\condAs}[1]{\ensuremath{{\ncalA_{#1|{\nu}_0}}}}
\def\matern{{Mat{\'e}rn}}
\def\lam{\ensuremath{\overline{\lambda}}}
\def\lamc{\ensuremath{\overline{\lambda'}}}
\def\serv{\ensuremath{W}}
\def\user{\ensuremath{\mathbf{Z}_u^{(i)}}}
\newcommand{\fservcond}[1]{\ensuremath{f_{{\serv}_{#1}|{\ncalV_0}}}}
\def\G{G(\alpha,\tau)}
\def\H{H(\alpha,\tau)}
\def\pcU{\pc_{j|\ncalV_0}^{(i),U}}
\def\pcL{\pc_{j|\ncalV_0}^{(i),L}}
\def\nb0{{\mathbf{0}}}
\def\nb1{{\mathbf{1}}}
\def\ncalA{{\mathcal{A}}}
\def\ncalB{{\mathcal{B}}}
\def\ncalF{{\mathcal{F}}}
\def\ncalH{{\mathcal{H}}}
\def\ncalI{{\mathcal{I}}}
\def\ncalK{{\mathcal{K}}}
\def\ncalL{{\mathcal{L}}}
\def\ncalM{{\mathcal{M}}}
\def\ncalN{{\mathcal{N}}}
\def\ncalP{{\mathcal{P}}}
\def\ncalR{{\mathcal{R}}}
\def\ncalV{{\mathcal{V}}}
\def\ncalZ{{\mathcal{Z}}}
\def\nbbE{{\mathbb{E}}}
\def\nbbP{{\mathbb{P}}}
\newtheorem{lemma}{Lemma}
\newtheorem{ndef}{Definition}
\newtheorem{theorem}{Theorem}
\newtheorem{prop}{Proposition}
\newtheorem{cor}{Corollary}
\newtheorem{remark}{Remark}
\def\figref#1{Fig.\,\ref{#1}}%
\def\pc{\mathtt{P_c}}
\def\sinr{\mathtt{SINR}}			
\def\sir{\mathtt{SIR}}
\def\delequal{\mathrel{\ensurestackMath{\stackon[1pt]{=}{\scriptstyle\Delta}}}}
\newcommand{\myeq}[1]{\mathrel{\overset{\makebox[0.07pt]{\mbox{(#1)}}}{=}}}
\title{Enriched $K$-Tier HetNet Model to Enable the Analysis of User-Centric Small Cell Deployments}
\author{Chiranjib Saha, Mehrnaz Afshang, and Harpreet S. Dhillon
\thanks{The authors are with Wireless@VT, Department of ECE, Virginia Tech, Blacksburg, VA, USA. Email: \{csaha, mehrnaz, hdhillon\}@vt.edu. The support of the US NSF (Grants CCF-1464293 and CNS-1617896) is gratefully acknowledged. This work was presented in part at the IEEE ICC in Kuala Lumpur, Malaysia, in May 2016 \cite{Saha1605:Downlink}.}
}
\begin{document}

\maketitle
\vspace{-.9cm}
\begin{abstract}
One of the principal underlying assumptions of current approaches to the analysis of heterogeneous
cellular networks (HetNets) with random spatial models is the uniform distribution of users independent
of the base station (BS) locations. This assumption is not quite accurate, especially for user-centric capacity-driven small cell deployments where low-power BSs are deployed in the areas of high user density, thus inducing a natural correlation in the BS and user locations. In order to capture this correlation,
we enrich the existing $K$-tier Poisson Point Process (PPP) HetNet model by considering user locations as Poisson Cluster Process (PCP) with the BSs at the cluster centers.  In particular, we provide the formal analysis of the downlink coverage probability in terms of a general density functions describing the locations of users around the BSs. The derived results are specialized for two cases of interest: (i) \textit{Thomas cluster process},  where the locations of the users around BSs are Gaussian distributed, and (ii) \textit{\matern\ cluster process}, where the users are uniformly distributed inside a disc of a given radius. Tight closed-form bounds for the coverage probability in these two cases are also derived. Our results demonstrate that the coverage probability decreases as the size of user clusters around BSs increases, ultimately collapsing to the result obtained under the assumption of PPP distribution of  users independent of the BS locations  when the cluster size goes to infinity. Using these results, we also handle {\em mixed} user distributions consisting of two types of users: (i) uniformly distributed, and (ii) clustered around certain tiers.   
\end{abstract}

\begin{IEEEkeywords}
Stochastic geometry, heterogeneous cellular network, Poisson cluster process, Poisson point process, user-centric deployments.
\end{IEEEkeywords}
\section{Introduction}\label{sec::intro}
Increasing popularity of Internet-enabled mobile devices, such as smartphones and tablets, has led to an unprecendented increase in the global mobile data traffic, which has in turn necessitated the need to dramatically increase the capacity of cellular networks. Not surprisingly, a key enabler towards increasing  network capacity at such a rate is to reuse spectral resources over space and time more aggressively.  This is already underway in the form of capacity-driven deployment of several types of low-power BSs in the areas of high user density, such as coffee shops, airport terminals, and downtowns of large cities \cite{3gpp, 3gpp2}. Due to the coexistence of the various types of low-power BSs, collectively called {\em small cells}, with the conventional high-power  macrocells, the resulting network is often termed as a heterogeneous cellular network (HetNet). Because of the increasing irregularity of BS locations in HetNets, random spatial models have become  preferred choice for the accurate modeling and tractable analysis of these networks. The most popular approach is to 
model the locations of different classes of BSs  by independent PPPs and perform the downlink analysis at a typical user chosen independent of the BS locations; see \cite{ElSHosJ2013,andrews2016primer,mukherjee2014analytical} and the references therein. 
However, none of the prior works has focused on developing tools for the more realistic case of  user-centric deployments in which the user and BS locations are correlated. 
Developing new tools to fill this gap is the main goal of this paper.
\subsection{Related Works}\label{subsec::relv}
Stochastic geometry has recently emerged as a useful tool for the analysis of cellular networks. Building on the single-tier cellular model developed in~\cite{andrews}, a multi-tier HetNet model was first developed in\cite{dhillon_tractable_2011, dhillonHetNet}, which was then extended in~\cite{xia, Mukherjee_Hetnet, MadhusudhananHetnet2016}. While the initial works were mainly focused 
on the downlink coverage and rate analyses, the models have since been extended in multiple ways, such as for load aware modeling of HetNets in \cite{dhillon_load-aware_2013}, traffic offloading in \cite{singh_offloading_2013}, and throughput optimization in \cite{cheung_throughput_2012}. 
Please refer to~\cite{ElSHosJ2013,Heath_hetnet_2013,andrews2016primer,mukherjee2014analytical,
elsawy2016modeling} for more pedagogical treatment of the topic as well as extensive surveys of the  prior art.
While PPP remains a popular abstraction of spatial distribution of cellular BSs randomly and independently coexisting over a finite but large area, a common assumption of the aforementioned analysis, as noted above, is that the users are uniformly distributed independent of the BS locations. However, in reality, the users form hotspots, which are where some types of small cells, such as picocells are deployed to enhance coverage and capacity \cite{jaziri_system_2016}. As a result, the user-centric deployment of small cells is one of the dominant themes in future wireless architectures \cite{boccardi_five_2014}. In such architectures, one can envision small cells being deployed to serve {\em clusters} of users. Such models are also being used by the standardization bodies, such as 3GPP~\cite{3gpp,3gpp2}. While there have been attempts to model such clusters of small cells by using PCP, e.g., see \cite{taranetz_performance_2012,ying_characterizing_2014, chen_downlink_2013,MultiChannel2013,HetPCPGhrayeb2015}, the user distribution is usually still assumed to be independent of the BS locations.

As noted above, modeling and performance analysis of user-centric capacity-driven deployment of small cells require accurate characterization of not only the spatial distribution of users but also correlation between the BS and user locations.  Existing works, however sparse, on the analysis of correlated non-uniform user distributions can be classified into two main directions. The first is to characterize the performance through detailed system-level simulations \cite{qualcom,lee_spatial_2014,mirahsahn,WangNonUnif2014}. As expected, the general philosophy is to capture the capacity-centric deployments by assuming higher user densities in the vicinity of small cell BSs, e.g., see \cite{qualcom}. In \cite{lee_spatial_2014}, the authors proposed non-uniform correlated traffic pattern generation over space and time based on log-normal or Weibull distribution.  On similar lines, \cite{mirahsahn} has introduced a low complexity PPP simulation approach for HetNets with correlated user and BS locations. 
 System level simulation shows that network performance significantly deteriorates with increased heterogeneity of users if there exists no correlation among the users and the small cell BS locations. But the HetNet performance  improves if the small cell BSs are placed at the cluster centers which are determined by means of clustering algorithms from a given user distribution \cite{WangNonUnif2014}.  
 
  The second direction, in which the contributions are even sparser, is to use analytic tools from stochastic geometry to characterize the performance of HetNets with non-uniform user distributions. One notable contribution in this direction is the generative model proposed in \cite{dh_non}, where non-uniform user distribution is generated from the homogeneous PPP by thinning the BS field independently, conditional on the active link from a typical user to its serving BS. While the resulting model is tractable, it suffers from two shortcomings: (i) it is restricted to single-tier networks and extension to HetNet is not straightforward, and (ii) even for single-tier networks, it does not allow the inclusion of any general non-uniform distribution of users in the model. In~\cite{li_mixed_2015}, the authors proposed a mixture of correlated and uncorrelated user distribution with respect to small cell BS  deployment and evaluated the enhancement in coverage probability as a function of correlation coefficient. Correlation has been introduced by generating users initially  as an independent PPP and later  shifting them towards the BSs with some probability. 
 In \cite{cheung_throughput_2012}, the authors have considered clustered users around femto-BSs as uniformly distributed on the circumference of a circle with fixed radius.
 Besides, some other attempts have been made at including non-uniform user distributions using simple models, especially in the context of indoor communications, e.g., see~\cite{urban}.  {In \cite{HaeDependence2015}, both the user and small scale BS locations are modeled as correlated \matern\ cluster processes having the same ``parent'' point process. But the analysis is simplified by  assuming  the distance between a user and its serving small cell BS either being fixed or a uniformly distributed random variable.} Overall, we are still somewhat short-handed when it comes to handling the analysis of user-centric deployments, which is the main focus of this paper.
With this brief overview of the prior art, we now discuss our contributions next.
\subsection{Contributions and Outcomes}
\subsubsection{New HetNet Model}
In this paper, we develop a new and more practical HetNet model  for accurately capturing the non-uniform user distribution as well as correlation between the locations of the users and BSs. In particular, the user locations have been modeled as superposition of PCPs. Correlation between the users and BSs under  user-centric capacity-driven deployment   has been captured by  assuming the BS locations as the parent point processes of the  cluster processes of users.  This  model is flexible enough to include any kind of user distribution around any arbitrary number of BS tiers as well as user distribution that is homogeneous and independent of the BS locations. This approach builds on our recent work on modeling device-to-device networks using PCPs \cite{AfsDhiJ2015a,AfsDhiJ2015}. 
\subsubsection{Downlink Analysis}
We derive exact expression for the coverage probability of a typical user chosen randomly from one of the clusters in this setup. 
The key step of 
our approach is the treatment of the cluster center  as an individual singleton 
tier. This enables the characterization of key distance distributions, which ultimately lead to easy-to-use expressions for the Laplace transform of interference distribution in all cases of interest. Using these components, we derive the   coverage probability of a randomly chosen user from one of the user clusters.  After characterizing the coverage probability under a general distribution of users, we  specialize our results for two popular PCPs, viz. Thomas and \matern\ cluster processes. Next, we provide  upper and lower bounds on coverage probability which are computationally more efficient than the exact expressions and reduce to closed form expressions for no shadowing when the user distribution is modelled as Thomas or \matern\ cluster process.
\begin{figure*}
  \centering
  \subfloat[Uniformly distributed users independent of the BS locations (prior art).\label{fig::deployment_uniform}]{\includegraphics[width=0.3\textwidth]{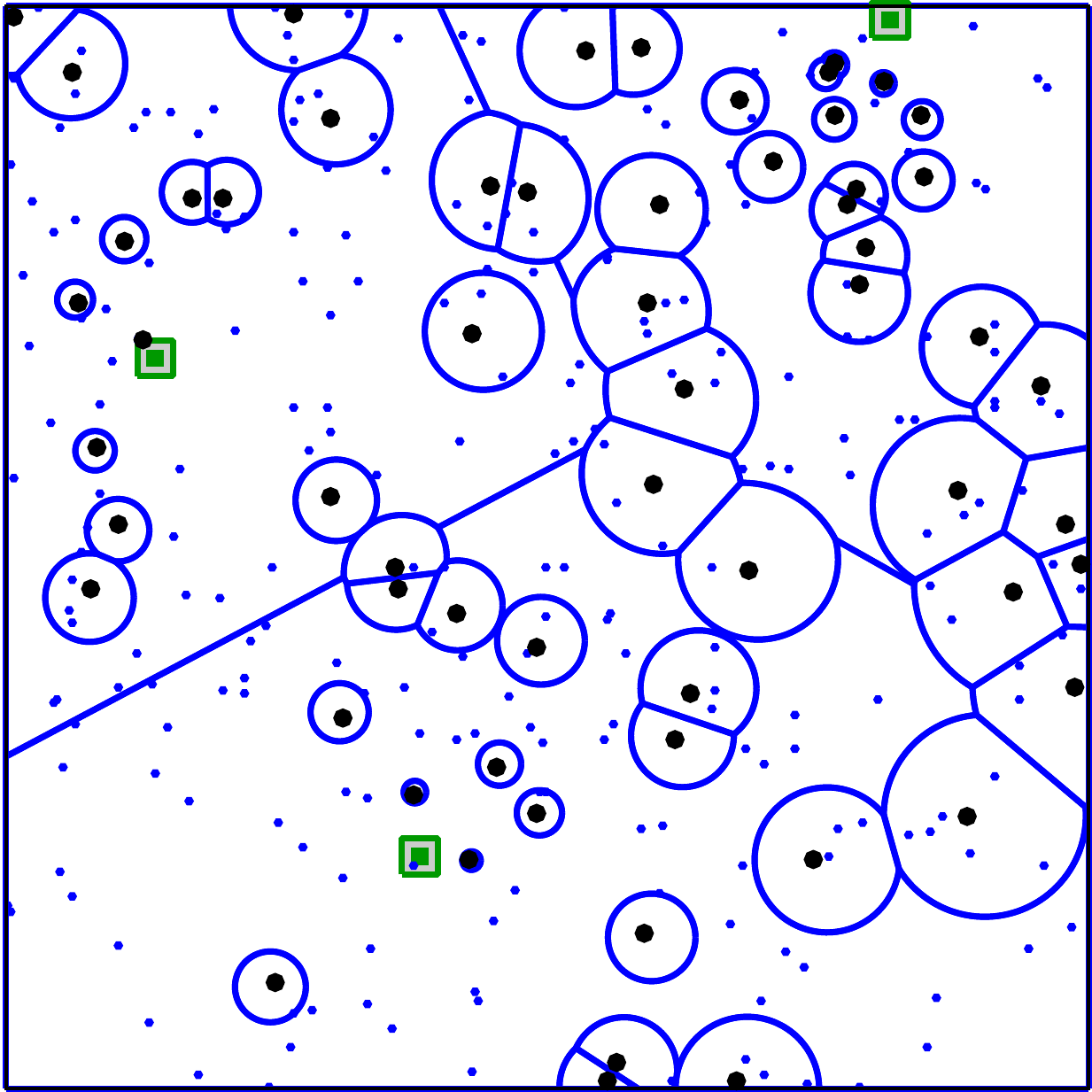}}\quad
  \subfloat[Users clustered around small cell BSs (this paper). \label{fig::deployment_thomas}]{\includegraphics[width=0.3\textwidth]{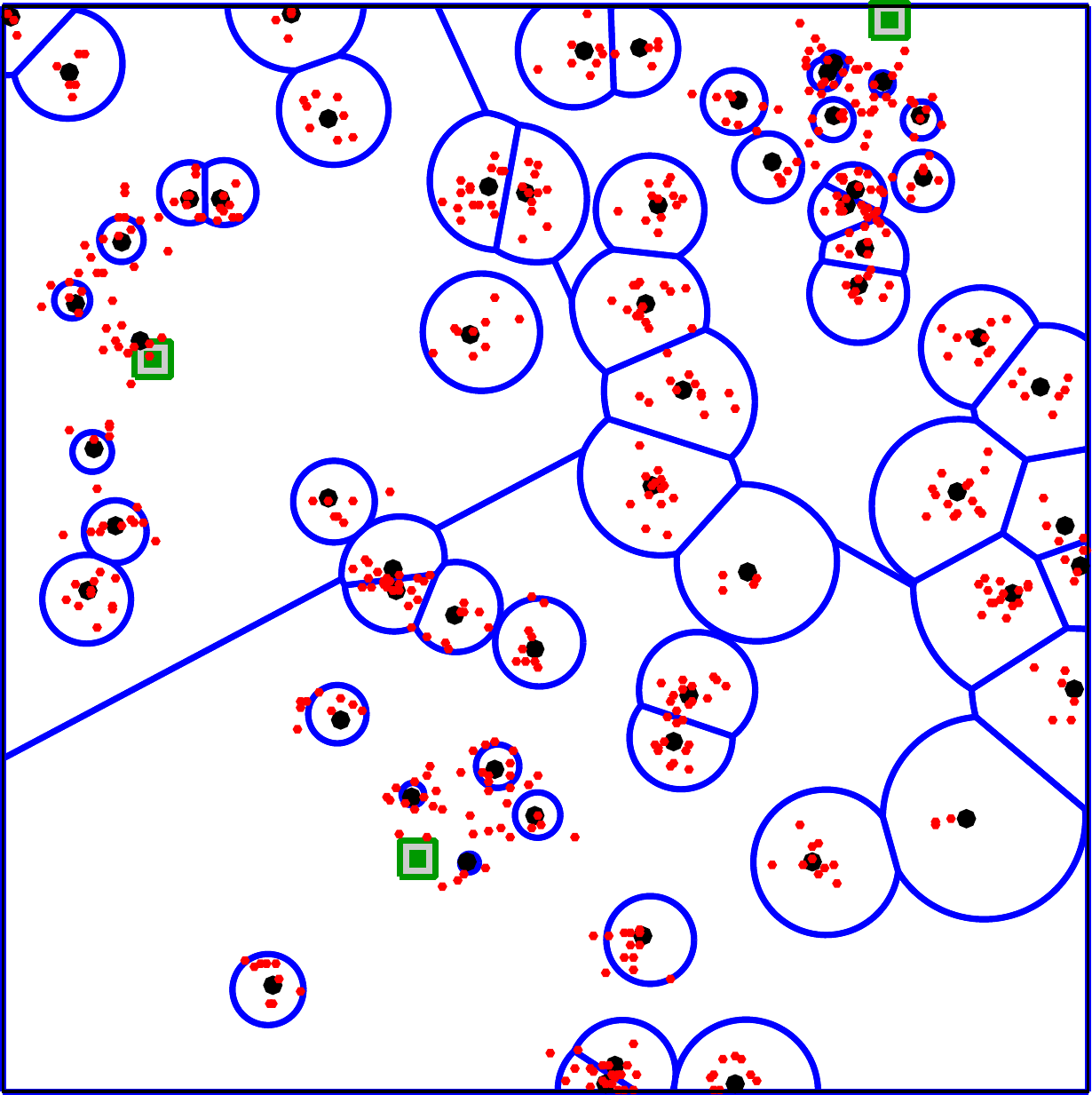}}\quad
  \subfloat[{\em Mixed} (clustered and uniformly distributed) user distribution (this paper).\label{fig::deployment_mixed}]{\includegraphics[width=0.3\textwidth]{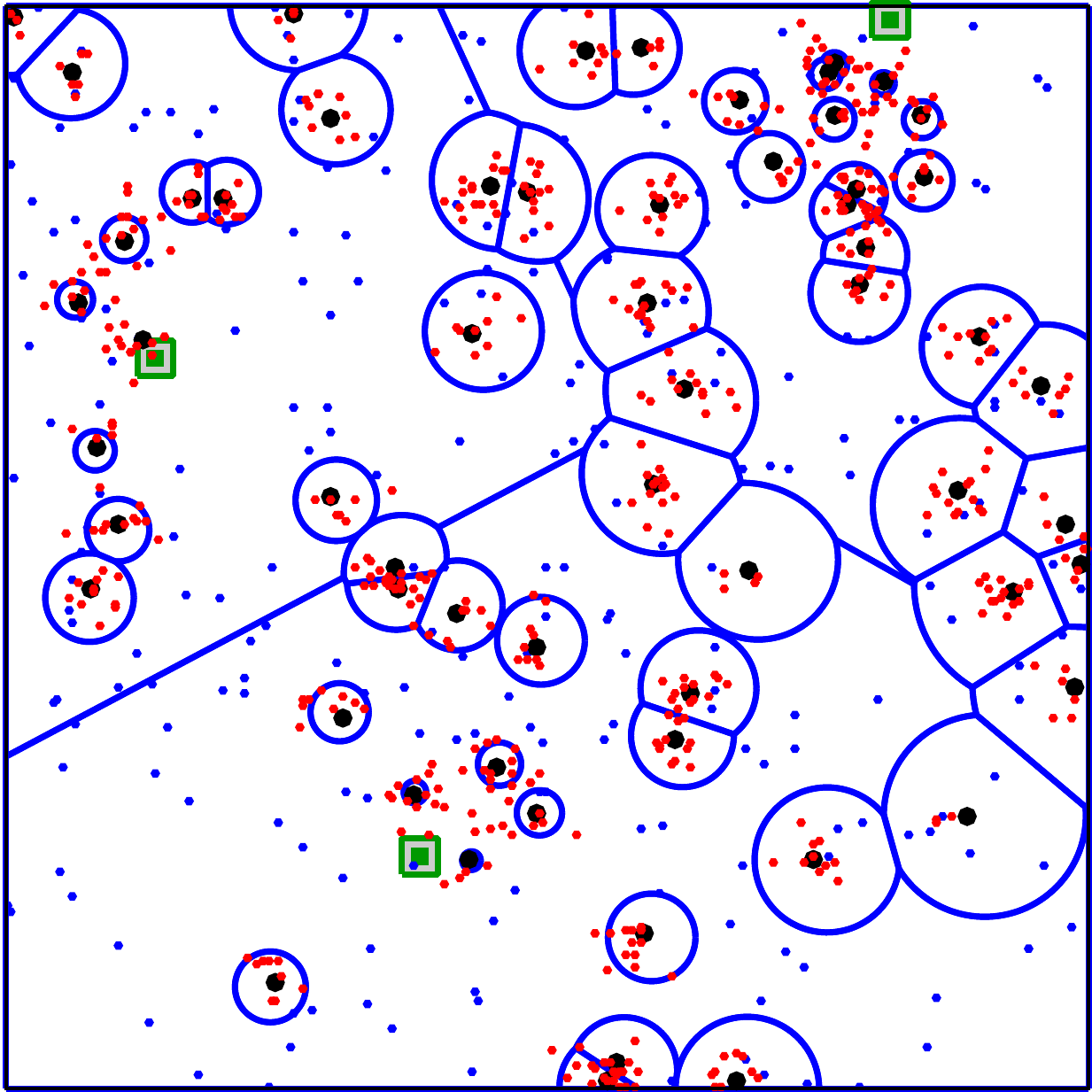}}
  \caption{{\small Macro (green squares) and small cell BSs (black  dots) are distributed as independent PPPs $\lambda_2=\lambda_2'=\lambda_1/10$. The uniformly distributed users are represented by small blue dots and clustered users by small red dots. The average number of users per cluster (wherever applicable) is $10$.}}
  \label{fig::deployment}
\end{figure*}
Although our analysis primarily focuses on users clustered around BS locations, we also consider users that are independently and homogeneously located over the network modeled as a PPP and use previously derived results for coverage \cite{xia}
 in conjunction to evaluate the overall coverage probability for any randomly chosen user in our HetNet setup  with mixed user distribution.
\subsubsection{System Design Insights} Our analysis leads to several system-level design insights.   
First, it can be observed that the coverage probability under the assumption of BS-user correlation is significantly greater than that derived under the assumption of independence.  While the assumption of independence of BS and user locations does simplify analyses, the resulting coverage probability predictions may be significantly pessimistic. That being said, our results concretely demonstrate that  the difference between the coverage probabilities corresponding to user-centric and independent BS deployment  becomes less significant as the cluster sizes (of user cluster) increase. In the limit of cluster size going to infinity, the new coverage results are shown to mathematically converge to the results obtained under independent user distribution assumption.
Second, as opposed to the previous works, the coverage probability of users clustered around BSs under interference-limited open access network is a function of BS transmission power. Our analysis shows that coverage probability can be improved by  increasing transmission power of small cell BSs located at centers of the user clusters.    
 
\section{System Model}\label{sec::system_model}
\subsection{BS Deployment}
Consider a $K$-tier HetNet, where BSs across tiers (or classes) differ in terms of their transmit powers and deployment densities. For mathematical convenience and notational simplicity, define $\ncalK=\{1,2,\dots,K\}$ as the indices of the $K$ tiers. The locations of the $k^{th}$-tier BSs are modelled by an independent homogeneous PPP $\Phi_k^{\rm (BS)}$ of density $\lambda_k^{\rm (BS)}>0$ ($k\in\ncalK$). The $k^{th}$-tier BSs are assumed to transmit at the same power $P_k$. 
 As is usually the case, we assume that a fraction of $k^{th}$-tier BSs are in open access for the user of interest while the rest are in closed access. The $k^{th}$-tier open and closed access BSs are modelled by two independent PPPs $\Phi_k^{\rm(BS,o)}$ and $\Phi_k^{\rm(BS,c)}$ with densities $\lambda_k$ and $\lambda_k'$, respectively, where $\Phi_k^{\rm (BS)} =\Phi_k^{\rm(BS,o)}\cup \Phi_k^{\rm(BS,c)}$ and $\lambda_k^{\rm(BS)}= \lambda_k + \lambda_k'$.

\subsection{User Distribution}\label{subsec::user_distribution}
Unlike prior art that focused almost entirely on the performance analysis of users that are uniformly distributed in the network independent of the BS locations, we focus on a correlated setup where users are more likely to lie closer to the BSs. Since small cells are usually deployed in the areas of high user density, this is a much more accurate approach for modeling HetNets compared to the one where users and BSs are both modeled as independent PPPs. We model this scenario by modeling the locations of the users by a PCP with one small cell deployed at the center of each user cluster. To maintain generality, we assume that a subset $\ncalB \subset \ncalK$ tiers out of  $K$ tiers  have clusters of users around the BSs. In particular, given the location of {a BS} in the $i^{th}$ tier acting as cluster center ($i\in\ncalB$), the users of the cluster are assumed to be symmetrically, independently, and identically distributed (i.i.d.) around it.   Union of all such locations of users around the  BSs of the $i^{th}$ tier forms a {PCP}~\cite{Stoyan_stochastic_1996, ganti}, denoted by $\PPP{i}^u$, where the  parent point process of $\PPP{i}^u$ is $\Phi_i^{\rm(BS)}$. 
  To maintain generality, we assume that the user location $\user\in \mathbb{R}^2$ with respect to its cluster center follows some arbitrary distribution with probability density function (PDF) $f_{\user}(\cdot)$, which may not necessarily be the same across tiers. This allows to capture the fact the cluster size may affect the choice of small cell to be deployed there. For instance, it may be sufficient to deploy a low power femtocell to serve a small cluster of users in a coffee shop, whereas a relatively higher power picocell may be needed to serve a cluster of users at a big shopping mall or at an airport. After deriving all the results in terms of the general distributions, we will specialize them to two cases of interest where $\Phi_i^{u}$ is modeled as: (i) {\em Thomas cluster process} in which the users are scattered according to a symmetric normal distribution of variance $\sigma^2_i$ around the BSs of {$\Phi_i^{\rm(BS)}$} \cite{HaeB2013}, hence,
\begin{equation*} 
f_{\user}(\mathbf{z})=\frac{1}{2\pi\sigma^{2}_i}\exp\left(-\frac{\|\mathbf{z}\|^2}{2\sigma^2_i}\right),\ \  \mathbf{z} \in \mathbb{R}^2,
\end{equation*} and (ii) {\em \matern\ cluster process} which assumes symmetric uniform spatial distribution of users around the cluster center within a circular disc of radius $\Ri$, thus 
\begin{align*}
f_{\user}(\mathbf{z})&=\begin{cases}\frac{1}{\pi\Ri^2} &\text{if } \|\mathbf{z}\|\leq \Ri\\
0&\text{otherwise}
\end{cases},
\end{align*}where $\mathbf{z}$ is a realization of the random vector $\user$. While our primary interest is in these clustered users, we also consider users that are homogeneously distributed over the network independent of the BS locations, for instance, pedestrians and users in transit. These users are better modeled by a PPP as done in literature (see \cite{dhillonHetNet,Mukherjee_Hetnet, xia,madhusudhanan_downlink_2014,di_renzo_average_2013} for a small subset).  Thus, in addition to the user clusters modeling users in the hotspots, we also consider an independent point process of users $\Phi^{u\rm(PPP)}$ which is a PPP of density $\lambda^{\rm(PPP)}$. {\figref{fig::deployment} shows the two-tier HetNet setup with high power macro-BSs overlaid with an independent PPP of denser but low power small cell BSs. \figref{fig::deployment_uniform} illustrates the popular system model used in the literature where users are modeled as $\Phi^{u\rm(PPP)}$ \cite{dhillon_tractable_2011, dhillonHetNet,xia, Mukherjee_Hetnet, madhusudhanan_downlink_2014}.  \figref{fig::deployment_thomas} highlights the correlated setup where users are only clustered around small cell BSs. The general scenario i.e. the {\em mixed} user distribution formed by the  
 superposition of PPP and PCP has been  depicted in  \figref{fig::deployment_mixed}.}


Since the downlink analysis at the location of a typical user of $\Phi^{u\rm (PPP)}$ is well-known, in this paper we will focus exclusively on the downlink performance of  a {typical user} of $\Phi_i^{u}$, which is a randomly chosen user from a randomly chosen cluster of  $\Phi_i^{u}$, also termed as the  {\em representative cluster}. In other words, we will primarily focus on the scenario depicted in \figref{fig::deployment_thomas} (and then extend our results and insights to scenario depicted in \figref{fig::deployment_mixed}). 
Since the PPPs are stationary, we can transform the origin to the location of this typical user. Quite reasonably, we assume that the BS at the center of the representative cluster is in open access mode. This assumption can be easily relaxed without much effort.  Denote the location of the representative cluster center by ${\bf y}_0 \in \Phi_i^{\rm(BS,o)}$. Now $\Phi_i^{\rm(BS,o)}$ can be partitioned into two sets: (i) representative cluster center ${\bf y}_0$, and (ii) the rest of the points $\Phi_i^{\rm(BS,o)} \setminus {\bf y}_0$. By Slivnyak's theorem, it can be argued that $\Phi_i^{\rm(BS,o)} \setminus {\bf y}_0$ has the same distribution as $\Phi_i^{\rm(BS,o)}$ \cite{HaeB2013}. For notational simplicity, we form an additional tier (call it tier $0$) consisting of a single point ${\bf y}_0$, i.e., $\Phi_0^{\rm(BS)} \equiv \Phi_0^{\rm(BS,o)} \equiv \{{\bf y}_0\}$. Then, the set of indices of all tiers is enriched to $\ncalK_1=\{0\}\cup \ncalK=\{0,1,2,\dots,K\}$.  The user can either connect to its own cluster center i.e. the BS in $\Phi_0^{\rm(BS,o)}$, or  to some other BS belonging to one of the tiers $\Phi_1^{\rm(BS,o)},\dots,\Phi_K^{\rm(BS,o)}$. {It will be evident in sequel that this construction will allow us to handle the link from the typical user to its cluster center separately.}
\subsection{Channel Model and User Association}\label{subsec::chmodel}
 The received power at the location of the typical user at origin from a BS at $\mathbf{y}_k\in\Phi_k^{\rm(BS)}$ ($k\in\ncalK_1$) {is modelled as} $P(\mathbf{y}_k)=P_kh_k\ncalV_k\|\mathbf{y}_k\|^{-\alpha}$, where, {$\alpha>2$} is the path loss exponent, $\h{y}{k}$ is the small-scale fading gain and $\ncalV_k$ is the shadowing gain. Under Rayleigh fading assumption, $\{h_k\}$ is a sequence of  i.i.d. exponential random variables (RVs) with $\h{x}{k}\sim \exp(1)$. For large scale shadowing, we assume $\{\ncalV_k\}$ to be sequence of i.i.d. log-normal RVs , i.e.,  $10\log\ncalV_k \sim \ncalN(\mu_k, \eta_k^2)$, with $\mu_k$ and $\eta_k$ respectively being the mean and standard deviation (in dB) of the channel power under shadowing. 
In this model, we assume average received power based cell selection in which a typical user connects to the BS that provides maximum received power averaged over small-scale fading. 
\begin{table}
\centering{
\caption{Summary of Notations}
\scalebox{0.8}{
\label{tab::notation}
\begin{tabular}{clc}

  \hline
   \hline

\textbf{Notation }                              & \textbf{Description                                                                                                                  } \\
\hline \hline 
$\Phi_k^{\rm(BS,o)},\lambda_k$ & PPP of  BSs of $k^{th}$ open access tier, density of $\Phi_k^{\rm(BS,o)}$ \\
$\Phi_k^{\rm(BS,c)},\lambda_k'$ & PPP of  BSs of $k^{th}$ closed access tier, density of $\Phi_k^{\rm(BS,c)}$ \\
$\ncalB$                              &Set of BS tiers that have users clustered around them\\  
$\Phi_i^{u}$                          & Point process modeling users clustered around BSs of $\Phi_i^{\rm(BS)}$ \\
$\Phi^{u\rm(PPP)}$                    & Locations of uniformly distributed users modeled as a  PPP\\ 
$\mathbf{y}_0, Y_0$                   & Location of cluster center in Euclidean space, $Y_0=\|\mathbf{y}_0\|$\\
$\Phi_0^{\rm(BS,o)}$                  & Tier 0 containing only the cluster center\\
$\Phi_k,\lam_k$                       & Equivalent PPP of $\Phi_k^{\rm(BS,o)}$ to incorporate shadowing, density of $\Phi_k$  \\
$\Phi_k',\lamc_k$                     & Equivalent PPP of $\Phi_k^{\rm(BS,c)}$ to incorporate shadowing, density of $\Phi_k'$  \\
$P_k, h_k, \ncalV_k$                  & Transmit power, small scale fading gain, shadowing gain \\
$\mathbf{y}_k$        & Actual location of a BS in $\Phi_k^{\rm(BS)}$   \\
$\mathbf{x}_k$  & Location of BS in transformed space ($\mathbf{x_k}=\mathcal{V}_k^{-\frac{1}{\alpha}}\mathbf{y}_k$)\\
$N_i$                                 & Average number of users per cluster  of $\Phi^u_{i}$            \\                                                                                   
$R_k$                                 & Modified distance of nearest BS $\in \Phi_k$, $R_k=\min \|{\bf x}_k\|$     \\
$b(\mathbf{0},r)$                  & {Disc with radius $r$ centered at origin}\\
$\ncalI_{o(j,k)}$                       & Interference from all BSs $\in\Phi_k$ when user connects to a BS $\in\Phi_j$\\
$\ncalI_{c(k)}$                         & Interference from all BSs $\in\Phi_k'$ \\
$\pc^{(i)}, \pc^{\rm(PPP)}$             & Coverage probability of a typical user in $\Phi_i^{u}$, $\Phi^{u\rm(PPP)}$  \\
$\pc$                                      &Overall coverage probability \\
\hline
\end{tabular}
}}
\end{table} 
The serving BS will be one from the $K+1$ candidate BSs from each tier. The location of such candidate serving BS from $\Phi_k^{\rm(BS,o)}$ can be denoted as: 
\begin{align*}
\mathbf{y}^*_k=\arg\max_{\Scale[0.85]{\substack{\mathbf{y_k}\in\\\Phi_k^{\rm(BS,o)}}}}P_k\ncalV_k\|\mathbf{y}_k\|^{-\alpha}=\arg\max\limits_{\substack{\mathbf{y_k}\in\\\Phi_k^{\rm(BS,o)}}}\Scale[0.85]{P_k\big(\ncalV_k^{-\frac{1}{\alpha}}\|\mathbf{y}_k\|\big)^{-\alpha}}.
\end{align*}
 Since the $0^{th}$ tier consists of only a single BS, i.e., the cluster center, there is {only one} choice of the candidate serving BS from $\Phi_0^{\rm(BS,o)}$, i.e., $\mathbf{y}_0^*\equiv \mathbf{y}_0$. The serving BS will be one of these candidate serving BSs, denoted by
 \begin{align*}
 {\mathbf{y}^*}=\arg\max\limits_{\bf{y}\in\{\bf{y_k^*}\}}P_k\ncalV_k\|\mathbf{y}\|^{-\alpha}.
\end{align*}    
{Using the displacement theorem of PPPs \cite[Section 1.3.3]{baccelli2009stochastic} , it was shown in \cite{DhiAndJ2014,keeler_sinr-based_2013} that if each point in a PPP $\Phi_k^{(\rm BS,o)}$ ($\Phi_k^{(\rm BS,c)}$) is independently displaced such that the transformed location becomes $\mathbf{x}_k=\ncalV_k^{-\frac{1}{\alpha}}\mathbf{y}_k$, then, the resultant point process remains a PPP, which we denote by $\Phi_k$ ($\Phi_k'$)   with density $\lam_k=\lambda_k\nbbE\left[\ncalV_k^{\frac{2}{\alpha}}\right]$ ($\lamc_k=\lambda_k'\nbbE\left[\ncalV_k^{\frac{2}{\alpha}}\right]$).  This transformation is valid for  any arbitrary distribution of $\ncalV_k$ with PDF $f_{\ncalV_k}(\cdot)$ as long as $\nbbE(\ncalV_k^{\frac{2}{\alpha}})$ is finite, which is indeed true for log-normal distribution.     Consequently, we can express instantaneous  received power from a BS~$\in\Phi_k$ as $P_kh_k\|\mathbf{x}_{k}\|^{-\alpha}$. Then, the location of candidate serving BS in $\Phi_k$  can be written as} 
\begin{align*} 
  \mathbf{x}_k^*=\arg\max\limits_{\mathbf{x_k}\in\Phi_k}P_k\|\mathbf{x}_k\|^{-\alpha}.
 \end{align*}
For $k=0$,  we apply  similar transformation to the point  ${\bf y}_0\in\Phi_0^{\rm(BS,o)}$ and denote the transformed process as $\Phi_0$ where $\mathbf{x}_0\equiv\ncalV_0^{-\frac{1}{\alpha}}\mathbf{y}_0$. Then, the serving BS at $\bf{x}^{*}$ will be
 \begin{align*}
 {\mathbf{x}^*}=\arg\max\limits_{\bf{x}\in\{\bf{x_k^*}\}}P_k\|\mathbf{x}\|^{-\alpha}.
 \end{align*}  
It is worth noting that in the absence of shadowing, the candidate serving BS from a given tier will be the BS closest to the typical user from that tier in terms of the Euclidean distance. This is clearly not true in the presence of shadowing because of the possibility of a farther off BS providing higher average received power than the closest BS. However, by applying displacement theorem, the effect of shadowing gains has been incorporated at the modified locations ${\bf x}_k$ such that the strongest BS in the equivalent PPP $\Phi_k$ is also the closest in terms of the Euclidean distance. As demonstrated in the literature (e.g., see\cite{ keeler_sinr-based_2013,DhiAndJ2014}) and the next two Sections, this simplifies the coverage probability analysis in the presence of shadowing significantly.  
For notational simplicity, let us define the association event to tier $j$ as $S_{\Phi_j}$ such that $\mathbf{1}_{S_{\Phi_j}}=\mathbf{1}(\mathbf{x}^*=\mathbf{x}^*_j)$ (here $\mathbf{1}(\cdot)$ is  the indicator function). The  Signal-to-Interference and Noise Ratio ($\sinr$) experienced by a typical user at origin when $\mathbf{1}_{S_{\Phi_j}}=1$ can be expressed as:
   \begin{align}
   \sinr(\|\mathbf{x}^*\|)\equiv\frac{P_j h_j  \|\mathbf{x}^*\|^{-\alpha}}{N_0+\sum_{k\in\ncalK_1}\sum_{{\mathbf{x}_k\in\Phi_k\cup\Phi_k'\setminus \{\mathbf{x}^*\}}}P_kh_k\|\mathbf{x}_k\|^{-\alpha}},
   \label{eq::sir_def}
\end{align} where $N_0$ is the thermal noise power.   For quick reference, the notations used in this paper are summarized in  Table~\ref{tab::notation}.
\begin{remark}
While we transform all the PPPs to equivalent PPPs to incorporate shadowing,  
the impact of shadowing on the link between the typical user and  its cluster center, i.e., ${\bf x}_0=\ncalV_0^{\frac{1}{\alpha}}{\bf y}_0$,
needs to be handled separately. For this, we have   two alternatives. First is to find the distribution of ${\bf x}_0$ as a function of the distributions  of $\ncalV_0$ and ${\bf y}_0$. Second is to proceed with the analysis by conditioning on shadowing variable $\ncalV_0$ and decondtioning at the last step. We take the second approach since it gives simpler intermediate results which can be readily used for no-shadowing scenario by putting $\ncalV_k\equiv 1$. 
\label{rem::rem1}
\end{remark}
\section{Association Probability and Serving Distance}\label{sec::association}
This is the first technical section of the paper, where we derive the probability that a typical user is served by a given tier $j \in \ncalK_1$, which is usually termed as the {\em association probability}. We will then derive the distribution of $\|\mathbf{x}^*\|$ conditioned on $S_{\Phi_j}$, i.e., the distance from the typical user to its serving BS conditioned on the  the event that it belongs to the $j^{th}$ open access tier.  
Recall that the candidate serving BS located at ${\bf x}_k$ from the equivalent PPP $\Phi_k$ is the one that is nearest to the typical user located at the origin.
Let us {call} $R_k=\|\mathbf{x}^*_k\|$ as the RV denoting the  distance from the typical user  to the nearest point of $\Phi_k$.  Since $\Phi_k$ ($k\in\ncalK$) are independent homogeneous PPPs, the distribution of $R_k$,  $k\in\ncalK$, is~\cite{HaeB2013}
\begin{subequations}
\begin{align}
&\text{PDF:}\ &f_{R_k}(r_k)&=2\pi\lam_k r_k\exp(-\pi\lam_k r_k^2) & r_k \geq 0, \label{eq::nn_distribution} \\
&\text{CCDF:}\ &\overline{F}_{R_k}(r_k)&=\exp(-\pi\lam_k r_k^2) & r_k \geq 0. \label{eq::nn_distribution_ccdf}
\end{align}
\label{eq::nn}
\end{subequations}
In a similar way, we can define modified distance $R_0=\|{\bf x}_0\|=\ncalV_0^{-\frac{1}{\alpha}}\|{\bf y}_0\|$. As noted in Remark \ref{rem::rem1}, we will proceed with the analysis by conditioning on the shadowing gain $\ncalV_0$ and then deconditioning on $\ncalV_0$ at the very end. Since $R_0$ is just a scaled version of $\|{\bf y}_0\|$, it suffices to find the distribution of $Y_0 \equiv \|{\bf y}_0\|$, which we do next.

 Recall that the typical user is located at the origin, which means the relative location of the cluster center with respect to the typical user, i.e., ${\bf y}_0$, has the same distribution as that of $\user$. Using standard transformation technique from Cartesian to polar coordinates, we can obtain the distribution of distance $Y_0$ from the joint distribution of position coordinates $(t_1, t_2)$, where ${\mathbf{y}}_0=(t_1,t_2)$ is in Cartesian domain. Let us denote the joint PDF of the polar coordinates $(Y_0,\Theta)$ as  $f_{Y_0,\Theta}(\cdot)$. Then  
\begin{align}
f_{Y_0,\Theta}(y_0,\theta)=f_{{\mathbf{y}}_0}(t_{1},t_{2})\times \left| \partial \left(\dfrac{t_1,t_2}{y_0,\theta}\right)\right|,
\end{align}where
\begin{align*}
\partial \left(\dfrac{t_1,t_2}{y_0,\theta}\right) = \begin{bmatrix}
  \dfrac{\partial t_{1}}{\partial y_0} & \dfrac{\partial t_{1}}{\partial\theta}\\[1em]
  \dfrac{\partial t_{2}}{\partial y_0} & \dfrac{\partial t_{2}}{\partial\theta} \end{bmatrix}.
\end{align*}
From the joint distribution, the marginal distribution of distance $Y_0$ can now be computed by integrating over $\theta$ as%
\begin{align*}
f_{Y_{0}}(y_0)&=\int_0^{2\pi}f_{Y_0,\Theta}(y_0,\theta) {\rm d}\theta.
\end{align*}
\begin{remark}
In the special case when $\Phi_i^{u}$ is a Thomas cluster process, user coordinates in Cartesian domain are i.i.d. normal RVs with variance $\sigma_i^2$.  Then, $Y_0$ is Rayleigh distributed with PDF and CCDF~\cite{AfsDhiJ2015}:
\begin{subequations}
\begin{align}
&\text{PDF:}\ &f_{Y_0}(y_0)&=\frac{y_0}{\sigma_i^2}\exp\left(\frac{-y_0^2}{2\sigma_i^2}\right),&\ y_0\geq 0, \label{eq::dist_thomas}\\
&\text{CCDF:}\ &\overline{F}_{Y_0}(y_0)&=\exp\left(\frac{-y_0^2}{2\sigma_i^2}\right),&\ y_0\geq 0. \label{eq::dist_thomas_ccdf}
\end{align}
\label{eq::thomas}
\end{subequations}
\end{remark}
\begin{remark}
If $\Phi_i^{u}$ is a \matern\ cluster process, the PDF and CCDF of $Y_0$ are:
\begin{subequations}
\begin{align}
&\text{PDF:}\ &f_{Y_0}(y_0)&=\frac{2y_0}{\Ri^2},& 0\leq y_0\leq\Ri, \label{eq::dist_matern}\\
&\text{CCDF:}\ &\overline{F}_{Y_0}(y_0)&=\frac{\Ri^2-y_0^2}{\Ri^2}, & 0\leq y_0\leq\Ri. \label{eq::dist_matern_ccdf}
\end{align}
\label{eq::matern}
\end{subequations}
\end{remark}
\subsection{Association Probability}
To derive association probability, {let us first  characterize the association event $S_{\Phi_j}$ as}: $\mathbf{1}_{S_{\Phi_j}}=$
\begin{equation}
\mathbf{1}(\arg\max\limits_{k\in\ncalK_1}P_k R_k^{-\alpha}=j)
=\bigcap\limits_{{k\in\ncalK_1}}\mathbf{1}\left(R_k>\pow{j}{k}R_j\right),
\label{eq::def_s_phi}
\end{equation}
where $\pow{j}{k}=\left(\frac{P_k }{P_j }\right)^{\scriptscriptstyle 1/\alpha}$ and $\mathbf{1}(\cdot)$ is the indicator function of the random vector $\mathbf{R}=[R_0,R_1,...,R_k]$. Note that since the $0^{th}$ tier is derived from the $i^{th}$ tier, we have  $P_0 \equiv P_i$. The association probability for each tier is now defined as follows.
\begin{ndef}{\text{Association probability,}} $\ncalA_j$ for $j^{th}$ tier, $\forall j \in \ncalK_1$ is defined as  the probability that the typical user will be served by the $j^{th}$ tier. It can be mathematically expressed as
\begin{equation}
\ncalA_j=\nbbP(S_{\Phi_j}).
\label{eq::association_prob_defn}
\end{equation}
\end{ndef}
 The following Lemma deals with the conditional association probability to $\Phi_j$.
\begin{lemma}
\label{lemma::association_prob_sh}
 Conditional association probability of the $j^{th}$ tier given $\ncalV_0=v_0$ is
\begin{align}
\ncalA_{j|v_0}=\begin{cases}
\nbbE_{Y_0}\left[\prod\limits_{k=1}^K
\overline{F}_{R_k}(\pow{0}{k}v_0^{-\frac{1}{\alpha}}Y_0)\right]& \text{if }j=0;\\
\nbbE_{\mod_j}\bigg[\overline{F}_{Y_0}(v_0^{\frac{1}{\alpha}}\pow{j}{0}R_j)\prod\limits_{\substack{k=1\\k\neq j}}^K\overline{F}_{\mod_k}(\pow{j}{k}R_j)\bigg] &\text{if }j\in\ncalK.
\end{cases}
\label{eq::association_prob_main_lemma}
\end{align}    
\end{lemma}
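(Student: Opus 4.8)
The plan is to work directly from the characterization of the association event in \eqref{eq::def_s_phi}, namely $\mathbf{1}_{S_{\Phi_j}}=\bigcap_{k\in\ncalK_1}\mathbf{1}(R_k>\pow{j}{k}R_j)$, and to evaluate $\ncalA_{j|v_0}=\nbbP(S_{\Phi_j}\mid\ncalV_0=v_0)$ by exploiting two structural facts. The first is that the equivalent PPPs $\Phi_1,\dots,\Phi_K$ are mutually independent and independent of both the cluster-center location $\mathbf{y}_0$ and the shadowing gain $\ncalV_0$; here the independence from $\mathbf{y}_0$ even for the parent tier $i$ is exactly the Slivnyak argument already invoked in the system model, whereby $\Phi_i^{\rm(BS,o)}\setminus\mathbf{y}_0$ has the same law as $\Phi_i^{\rm(BS,o)}$. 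Consequently $R_1,\dots,R_K$ are independent with the CCDFs in \eqref{eq::nn_distribution_ccdf}. The second fact is that, conditioned on $\ncalV_0=v_0$, the tier-$0$ distance is the deterministic rescaling $R_0=v_0^{-1/\alpha}Y_0$, so all randomness in $R_0$ is carried by $Y_0\equiv\|\mathbf{y}_0\|$, whose law equals that of $\|\user\|$. The derivation then splits into the two branches of the statement, differing only in which variable we condition on.

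For $j=0$ I would condition on $Y_0$. The reference distance is $R_0=v_0^{-1/\alpha}Y_0$, and the $k=0$ term of the intersection is vacuous (it reduces to the reference constraint $R_0\geq R_0$), so that
\[
\ncalA_{0|v_0}=\nbbE_{Y_0}\Bigl[\nbbP\Bigl(\textstyle\bigcap_{k=1}^K\{R_k>\pow{0}{k}v_0^{-1/\alpha}Y_0\}\,\Big|\,Y_0\Bigr)\Bigr].
\]
Independence of $R_1,\dots,R_K$ and their independence from $Y_0$ factor the inner probability into $\prod_{k=1}^K\overline{F}_{R_k}(\pow{0}{k}v_0^{-1/\alpha}Y_0)$, which is precisely the first branch.

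For $j\in\ncalK$ I would instead condition on $R_j$ and isolate the tier-$0$ constraint. The $k=0$ factor of the intersection is $\{R_0>\pow{j}{0}R_j\}$; substituting $R_0=v_0^{-1/\alpha}Y_0$ and multiplying through by $v_0^{1/\alpha}>0$ (a monotone map, so the inequality is preserved) rewrites it as $\{Y_0>v_0^{1/\alpha}\pow{j}{0}R_j\}$, whose conditional probability given $R_j$ is $\overline{F}_{Y_0}(v_0^{1/\alpha}\pow{j}{0}R_j)$. The remaining factors for $k\in\ncalK\setminus\{j\}$ each contribute $\overline{F}_{R_k}(\pow{j}{k}R_j)$ by \eqref{eq::nn_distribution_ccdf}, the $k=j$ term again being the vacuous reference constraint. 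Collecting these and averaging over $R_j$ yields the second branch.

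The main obstacle here is bookkeeping rather than depth: one must keep tier $0$ on a consistent footing with the ordinary tiers, remembering that $R_0$ is not a PPP nearest-neighbour distance but a rescaled cluster-center distance. Three points require care — that the inequality direction survives the substitution $Y_0=v_0^{1/\alpha}R_0$; that the conditioning variable is chosen so the surviving distances remain mutually independent ($Y_0$ when $j=0$, $R_j$ when $j\in\ncalK$); and that deconditioning on $\ncalV_0$ is postponed to a later step, as flagged in Remark~\ref{rem::rem1}. Once the independence structure and the monotone rescaling are in place, both branches follow by direct factorization.
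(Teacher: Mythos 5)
Your proposal is correct and follows essentially the same route as the paper's own proof: both start from the event decomposition $\mathbf{1}_{S_{\Phi_j}}=\bigcap_{k}\mathbf{1}(R_k>\pow{j}{k}R_j)$, factor the joint probability using the mutual independence of the $R_k$'s, and handle tier $0$ by the rescaling $R_0=v_0^{-1/\alpha}Y_0$, conditioning on $Y_0$ when $j=0$ and on $R_j$ (with the tier-$0$ constraint converted to $\overline{F}_{Y_0}(v_0^{1/\alpha}\pow{j}{0}R_j)$) when $j\in\ncalK$. Your explicit remarks on the Slivnyak-based independence and the sign-preserving monotone substitution are just slightly more careful versions of steps the paper leaves implicit.
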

\begin{proof}
See Appendix~\ref{app::association_prob_sh}.
\end{proof}
\begin{remark}
Association probabilities of the $j^{th}$ tier can be obtained by taking expectation over $\condA{j}$ with respect to $\ncalV_0$, i.e.,
\begin{align*}
\ncalA_j=\nbbE_{\ncalV_0}(\condA{j}).
\end{align*}
\end{remark}
From   Lemma \ref{lemma::association_prob_sh}, we can obtain the expressions for the association probabilities to different open access tiers when $\Phi_i^{u}$ is Thomas or \matern\ cluster process. The conditional probabilities in these cases can be reduced to closed form expressions. The results are presented next.
\begin{cor}\label{corr:thomas_association}
When $\Phi_i^{u}$ is a Thomas cluster process, conditional  association probability of the $j^{th}$ tier given $\ncalV_0=v_0$ is: 
\begin{align}
  &\ncalA_{j|v_0}=\frac{\lam_j}{\sum\limits_{k=0}^K \pow{j}{k}^2\lam_k}, \ \forall j\in\ncalK_1,
  \label{eq::association_thomas_corr_1}
  \end{align}
where $\lam_0$ is defined as
$ \lam_0=\frac{v_0^{\frac{2}{\alpha}}}{2\pi\sigma_i^2}.$
\label{cor::association_thomas}
\end{cor}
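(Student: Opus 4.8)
The plan is to substitute the Thomas-specific distance distributions into the two branches of Lemma~\ref{lemma::association_prob_sh} and verify that both collapse to the single ratio claimed. The two ingredients are the nearest-BS distance CCDF $\overline{F}_{R_k}(r)=\exp(-\pi\lam_k r^2)$ from~\eqref{eq::nn_distribution_ccdf} and, for the Thomas process, the Rayleigh CCDF $\overline{F}_{Y_0}(y)=\exp(-y^2/(2\sigma_i^2))$ from~\eqref{eq::dist_thomas_ccdf} together with its density $f_{Y_0}$. The guiding observation is that the quoted definition $\lam_0=v_0^{2/\alpha}/(2\pi\sigma_i^2)$ is exactly what is needed to rewrite $\overline{F}_{Y_0}$ in the PPP form $\exp(-\pi\lam_0(\cdot)^2)$; once this is done, tier $0$ becomes formally indistinguishable from the PPP tiers and the computation is symmetric in $k\in\{0,1,\dots,K\}$.

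For $j\in\ncalK$ I would start from the second branch of~\eqref{eq::association_prob_main_lemma}. Substituting the CCDFs, the $Y_0$ factor gives $\overline{F}_{Y_0}(v_0^{1/\alpha}\pow{j}{0}R_j)=\exp(-v_0^{2/\alpha}\pow{j}{0}^2R_j^2/(2\sigma_i^2))=\exp(-\pi\lam_0\pow{j}{0}^2R_j^2)$, using the definition of $\lam_0$. Each remaining factor contributes $\overline{F}_{R_k}(\pow{j}{k}R_j)=\exp(-\pi\lam_k\pow{j}{k}^2R_j^2)$, so the whole product inside the expectation equals $\exp(-\pi R_j^2\sum_{k=0,k\neq j}^{K}\lam_k\pow{j}{k}^2)$. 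Taking the expectation over $R_j$ with density $f_{R_j}(r)=2\pi\lam_j r\exp(-\pi\lam_j r^2)$ leaves a single Gaussian-type integral in $r$; the substitution $u=r^2$ turns it into $\pi\lam_j\int_0^\infty e^{-\pi u S}\,\mathrm{d}u$ with $S=\lam_j+\sum_{k=0,k\neq j}^{K}\lam_k\pow{j}{k}^2=\sum_{k=0}^{K}\pow{j}{k}^2\lam_k$ (using $\pow{j}{j}=1$), which evaluates to $\lam_j/S$, the claimed expression.

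For $j=0$ I would treat the first branch analogously. Here $\overline{F}_{R_k}(\pow{0}{k}v_0^{-1/\alpha}Y_0)=\exp(-\pi\lam_k\pow{0}{k}^2v_0^{-2/\alpha}Y_0^2)$, so the product over $k=1,\dots,K$ is $\exp(-\pi v_0^{-2/\alpha}Y_0^2\sum_{k=1}^{K}\lam_k\pow{0}{k}^2)$. Multiplying by the Rayleigh density $f_{Y_0}$ and integrating in $Y_0$ (again via $u=Y_0^2$) gives $[1+2\pi\sigma_i^2 v_0^{-2/\alpha}\sum_{k=1}^{K}\lam_k\pow{0}{k}^2]^{-1}$. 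Using $2\pi\sigma_i^2 v_0^{-2/\alpha}=1/\lam_0$ and $\pow{0}{0}=1$, this rearranges to $\lam_0/(\lam_0+\sum_{k=1}^{K}\pow{0}{k}^2\lam_k)=\lam_0/\sum_{k=0}^{K}\pow{0}{k}^2\lam_k$, matching the formula.

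The computation itself is routine once the reparametrization is in place, so the only real ``obstacle'' is recognizing the correct definition of $\lam_0$: the content of the corollary is precisely that the distance distribution of the conditioned cluster center (given $\ncalV_0=v_0$) coincides with the nearest-neighbor distance distribution of a homogeneous PPP of effective density $\lam_0=v_0^{2/\alpha}/(2\pi\sigma_i^2)$. With that identification, tier $0$ is absorbed into the family of PPP tiers and the result is the standard maximum-average-power association probability for a superposition of independent PPPs.
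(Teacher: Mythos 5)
Your proposal is correct and follows essentially the same route as the paper's own proof: substitute the Rayleigh CCDF/PDF of $Y_0$ and the PPP nearest-distance CCDF into the two branches of Lemma~\ref{lemma::association_prob_sh}, evaluate the resulting exponential integrals, and identify $\lam_0=v_0^{2/\alpha}/(2\pi\sigma_i^2)$ to unify tier $0$ with the PPP tiers. The only difference is presentational — you make the ``tier $0$ behaves like a PPP of density $\lam_0$'' reparametrization explicit up front, whereas the paper introduces $\lam_0$ only after computing the integrals — but the calculations coincide.
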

\begin{proof}
See Appendix~\ref{app::association_thomas}.
\end{proof}
\begin{cor}\label{cor::association_matern}
If $\Phi_i^{u}$ is {a} \matern\ cluster process, conditional  association probability of the $j^{th}$ tier given $\ncalV_0=v_0$ is: $\ncalA_{j|v_0}=$
\begin{align}\label{eq::association::matern}
\begin{cases}
 \frac{v_0^{\frac{2}{\alpha}}}{\Ri^2\ncalZ_0}\bigg(1-\exp\bigg(-{v_0^{-\frac{2}{\alpha}}}\ncalZ_0\Ri^2\bigg)\bigg) &\text{if $j=0$}\\
\frac{\pi\lam_j}{{\cal Z}_j}-\frac{\lam_j\pi\pow{j}{0}^2v_0^{\frac{2}{\alpha}}}{\Ri^2{\cal Z}_j^2}\bigg(1-\exp\big(-\frac{{\cal Z}_j\Ri^2}{\pow{j}{0}^2v_0^{\frac{2}{\alpha}}}\big)\bigg)
 &\text{if } j\in\ncalK
\end{cases},
\end{align}
where $\ncalZ_j=\pi\sum\limits_{k=1}^K\lam_k\pow{j}{k}^2,$ $\forall\:j\in\ncalK_1$.
\end{cor}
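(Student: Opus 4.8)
The plan is to specialize Lemma~\ref{lemma::association_prob_sh} to the \matern\ case by substituting the closed-form distributions already available: the CCDF of $R_k$ from \eqref{eq::nn_distribution_ccdf}, namely $\overline{F}_{R_k}(r)=\exp(-\pi\lam_k r^2)$, together with the \matern\ distribution of $Y_0$ from \eqref{eq::matern}, noting in particular that $\overline{F}_{Y_0}$ is compactly supported on $[0,\Ri]$. Both cases then reduce to elementary Gaussian-type integrals that can be evaluated in closed form.

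For the case $j=0$, I would substitute $\overline{F}_{R_k}(\pow{0}{k}v_0^{-1/\alpha}Y_0)=\exp(-\pi\lam_k\pow{0}{k}^2 v_0^{-2/\alpha}Y_0^2)$ into the product over $k=1,\dots,K$. The product collapses into a single exponential with exponent $-v_0^{-2/\alpha}Y_0^2\ncalZ_0$, since $\ncalZ_0=\pi\sum_{k=1}^K\lam_k\pow{0}{k}^2$ by definition. Taking the expectation against the \matern\ PDF $f_{Y_0}(y_0)=2y_0/\Ri^2$ over $[0,\Ri]$ and using the substitution $u=y_0^2$ turns the expectation into $\frac{1}{\Ri^2}\int_0^{\Ri^2}e^{-v_0^{-2/\alpha}\ncalZ_0 u}\,\mathrm{d}u$, which integrates directly to the stated expression. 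This case is routine.

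For the case $j\in\ncalK$, I would again combine the $K-1$ remaining CCDFs of the $R_k$ into a single exponential. Using $\pow{j}{j}=1$ and the definition of $\ncalZ_j$, the exponent of this product is $-(\ncalZ_j-\pi\lam_j)R_j^2$; multiplying by the PDF $f_{R_j}(r)=2\pi\lam_j r\,e^{-\pi\lam_j r^2}$ then yields the clean factor $2\pi\lam_j r\,e^{-\ncalZ_j r^2}$. The compact support of $\overline{F}_{Y_0}$ is what makes this case more delicate: the factor $\overline{F}_{Y_0}(v_0^{1/\alpha}\pow{j}{0}R_j)=(\Ri^2-v_0^{2/\alpha}\pow{j}{0}^2 R_j^2)/\Ri^2$ is valid only while its argument stays below $\Ri$, so the expectation becomes a \emph{finite}-range integral with upper limit $\Ri/(v_0^{1/\alpha}\pow{j}{0})$. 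After substituting $u=r^2$, the integral splits into $\int u\,e^{-\ncalZ_j u}\,\mathrm{d}u$ and $\int e^{-\ncalZ_j u}\,\mathrm{d}u$ over this finite range, the former handled by integration by parts.

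I expect the main obstacle to be the bookkeeping in this last step: after integration by parts, the two exponential contributions evaluated at the upper limit must cancel, which they do precisely because $v_0^{2/\alpha}\pow{j}{0}^2\cdot(\Ri/(v_0^{1/\alpha}\pow{j}{0}))^2=\Ri^2$. Tracking this cancellation carefully is what collapses the expression into the two-term closed form $\frac{\pi\lam_j}{\ncalZ_j}-\frac{\pi\lam_j\pow{j}{0}^2 v_0^{2/\alpha}}{\Ri^2\ncalZ_j^2}\big(1-\exp(-\ncalZ_j\Ri^2/(\pow{j}{0}^2 v_0^{2/\alpha}))\big)$ reported in \eqref{eq::association::matern}.
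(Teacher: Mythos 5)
Your proposal is correct and follows essentially the same route as the paper: specialize Lemma~\ref{lemma::association_prob_sh} by substituting the CCDF of $R_k$ from Eq.~\eqref{eq::nn_distribution_ccdf} and the \matern\ PDF/CCDF of $Y_0$ from Eq.~\eqref{eq::matern}, truncate the $j\in\ncalK$ integral at $\Ri/(v_0^{1/\alpha}\pow{j}{0})$ due to the compact support, and evaluate the resulting exponential integrals (your integration-by-parts bookkeeping, including the cancellation at the upper limit, is exactly the computation the paper summarizes by ``proceeding according to the proof of Corollary~\ref{cor::association_thomas}'').
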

\begin{proof}
See Appendix~\ref{app::association_matern}.
\end{proof}
\subsection{Serving Distance Distribution}
In this section, we derive the distribution of $\|\mathbf{x}^*\|$ when $\mathbf{1}_{S_{\Phi_j}}=1$ , i.e., the serving  distance from the typical user to its serving BS when it is in $\Phi_j$. We will call this RV $\serv_j$. Conditioned on $S_{\Phi_j}$,  $\serv_j$ is simply the distance to the nearest BS in $\Phi_j$. Hence $\serv_j$ is related to $R_j$ as $\serv_j=R_j|S_{\Phi_j}$. The conditional PDF of $\serv_j$ given $\ncalV_0=v_0$ is derived in the next Lemma.  
\begin{lemma}
\label{lemma::serving_distance_dist_sh}
Conditional distribution of serving distance $\serv_j$ at $\ncalV_0=v_0$ is obtained by $ \fservcond{j}(w_j|v_0)=$
 \begin{align} \label{eq::serving_distance_dist}
 \begin{cases}
 \frac{1}{\condAs{0}}\prod\limits_{k=1}^{K}v_0^{\frac{1}{\alpha}}\overline{F}_{\mod_k}\left(\pow{0}{k}w_0\right)f_{Y_0}(v_0^{\frac{1}{\alpha}}w_0),&\text{if }j=0,\\
 \frac{1}{\condAs{j}}\overline{F}_{Y_{0}}(v_0^{\frac{1}{\alpha}}\pow{j}{0}w_j)\prod\limits_{\substack{k=1\\k\neq j}}^K\overline{F}_{\mod_{k}}(\pow{j}{k}w_j)f_{\mod_j}(w_j),& \text{if }j\in\ncalK.
\end{cases}
 \end{align}
\end{lemma}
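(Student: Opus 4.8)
The plan is to realize $\fservcond{j}(w_j|v_0)$ as a normalized joint density of the serving distance and the association event, where the normalizing constant is exactly the conditional association probability $\condAs{j}$ of Lemma~\ref{lemma::association_prob_sh}. Since $\serv_j=R_j\,|\,S_{\Phi_j}$, the conditional CDF is $\nbbP(R_j\le w_j,\,S_{\Phi_j}\mid v_0)/\condAs{j}$, so differentiating in $w_j$ gives
\[
\fservcond{j}(w_j|v_0)=\frac{1}{\condAs{j}}\,\frac{\partial}{\partial w_j}\,\nbbP\!\left(R_j\le w_j,\,S_{\Phi_j}\mid \ncalV_0=v_0\right).
\]
For $j\in\ncalK$ I would first write $\nbbP(R_j\le w_j,S_{\Phi_j}\mid v_0)=\int_0^{w_j}\nbbP(S_{\Phi_j}\mid R_j=r,v_0)\,f_{R_j}(r)\,{\rm d}r$, so that the $w_j$-derivative collapses to $\nbbP(S_{\Phi_j}\mid R_j=w_j,v_0)\,f_{R_j}(w_j)$.

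The key step is factorizing $\nbbP(S_{\Phi_j}\mid R_j=w_j,v_0)$. Using the characterization $\mathbf{1}_{S_{\Phi_j}}=\bigcap_{k\in\ncalK_1,\,k\neq j}\mathbf{1}(R_k>\pow{j}{k}R_j)$ from \eqref{eq::def_s_phi} together with the mutual independence of the nearest-BS distances $\{R_k\}_{k\in\ncalK}$ and of the cluster-center distance $Y_0$ (the $\Phi_k$ are independent PPPs and $Y_0$ is independent of them), this conditional probability factors into a product of complementary CDFs. For $j\in\ncalK$ the $k=0$ constraint $R_0>\pow{j}{0}w_j$ is rewritten via $R_0=v_0^{-1/\alpha}Y_0$ as $Y_0>v_0^{1/\alpha}\pow{j}{0}w_j$, giving $\overline{F}_{Y_0}(v_0^{1/\alpha}\pow{j}{0}w_j)\prod_{k\neq j}\overline{F}_{R_k}(\pow{j}{k}w_j)$. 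Multiplying by $f_{R_j}(w_j)$ and dividing by $\condAs{j}$ yields the second line of \eqref{eq::serving_distance_dist}; because the integrand produced here is precisely the one sitting inside the expectation $\nbbE_{R_j}[\cdot]$ that defines $\condAs{j}$ in Lemma~\ref{lemma::association_prob_sh}, the density normalizes to one automatically.

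The main obstacle is the $j=0$ case, where the serving BS is the cluster center and the serving distance $\serv_0$ is a scaled copy of $Y_0$ rather than a PPP nearest-neighbor distance. Here I would first obtain the conditional density of $R_0=v_0^{-1/\alpha}Y_0$ through the change of variables $w_0=v_0^{-1/\alpha}y_0$, which contributes a Jacobian and gives $f_{R_0\mid v_0}(w_0)=v_0^{1/\alpha}f_{Y_0}(v_0^{1/\alpha}w_0)$; this single $v_0^{1/\alpha}$ prefactor is the correct reading of the first line of \eqref{eq::serving_distance_dist}. Conditioned on $R_0=w_0$, the event $S_{\Phi_0}$ requires $R_k>\pow{0}{k}w_0$ for all $k\in\ncalK$, which by independence contributes $\prod_{k=1}^{K}\overline{F}_{R_k}(\pow{0}{k}w_0)$. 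Combining these factors and dividing by $\condAs{0}$ gives the stated expression. As a consistency check, applying the same substitution $w_0=v_0^{-1/\alpha}Y_0$ inside the expectation $\nbbE_{Y_0}[\cdot]$ defining $\condAs{0}$ reproduces this normalizer exactly, confirming that $\fservcond{0}(\cdot\mid v_0)$ integrates to unity.
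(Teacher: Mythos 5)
Your proposal is correct and follows essentially the same route as the paper: both express the conditional distribution of $\serv_j$ via Bayes' rule as the joint probability of the distance event and $S_{\Phi_j}$ normalized by $\condAs{j}$, factorize using the independence of $\{R_k\}$ and $Y_0$, and differentiate (you work with the CDF and collapse the derivative by the fundamental theorem of calculus, while the paper differentiates the complementary CDF --- a trivial distinction). Your reading of the $j=0$ case, with a single Jacobian factor $v_0^{1/\alpha}$ arising from the change of variables $R_0=v_0^{-1/\alpha}Y_0$, matches exactly what the paper's own derivation produces.
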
 
\begin{proof}
 See Appendix~\ref{app::serving_distance_dist_sh}.
\end{proof}

Further, we obtain closed-form expressions of $\fservcond{j}(\cdot)$ for Thomas and \matern\   cluster processes by putting the corresponding PDFs and CCDFs in the following Corollaries.  
\begin{cor}\label{corr::thomas_serving}
If $\Phi_i^{u}$ is Thomas cluster process, conditional PDF of serving distance given $\ncalV_0=v_0$ can be expressed as
\begin{align}
   \fservcond{j}(w_j|v_0)&=
   \Scale[0.95]{\frac{2\pi\lam_j}{\ncalA_{j|v_0}}}\exp\bigg(-\Scale[0.9]{\pi\big(\sum\limits_{k=0}^{K}\pow{j}{k}^2\lam_k}\big)w_j^{2}\bigg)w_j,\forall j\in \ncalK_1.
\end{align}
\label{cor::f_dist_thomas}
\end{cor}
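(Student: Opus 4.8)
The plan is to obtain the result by direct substitution into the general serving-distance PDF of Lemma~\ref{lemma::serving_distance_dist_sh}, specializing every factor to the Thomas case. The only ingredients required are the Rayleigh PDF/CCDF of $Y_0$ in \eqref{eq::thomas}, the nearest-neighbor PDF/CCDF of $R_k$ in \eqref{eq::nn}, and the definition $\lam_0 = v_0^{2/\alpha}/(2\pi\sigma_i^2)$ supplied by Corollary~\ref{cor::association_thomas}. No new probabilistic argument is needed; this is essentially a verification that the generic formula collapses to the stated closed form.

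The key observation, which unifies the two branches $j=0$ and $j\in\ncalK$, is that the choice $\lam_0 = v_0^{2/\alpha}/(2\pi\sigma_i^2)$ makes the scaled $0^{th}$-tier link behave exactly like a nearest-neighbor distance in a homogeneous PPP of density $\lam_0$. Concretely, using $v_0^{2/\alpha}/(2\sigma_i^2) = \pi\lam_0$, I would check that the scaled Rayleigh density satisfies $v_0^{1/\alpha} f_{Y_0}(v_0^{1/\alpha}w) = 2\pi\lam_0 w \exp(-\pi\lam_0 w^2)$ and that the scaled CCDF satisfies $\overline{F}_{Y_0}(v_0^{1/\alpha}\pow{j}{0}w) = \exp(-\pi\lam_0 \pow{j}{0}^2 w^2)$. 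These are precisely the forms that $f_{R_0}$ and $\overline{F}_{R_0}$ would take for a PPP tier of density $\lam_0$, so the singleton $0^{th}$ tier can be placed on the same footing as the ordinary tiers.

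With this in hand the two cases reduce to a single computation. For $j\in\ncalK$ I would insert the CCDFs $\overline{F}_{R_k}(\pow{j}{k}w_j) = \exp(-\pi\lam_k \pow{j}{k}^2 w_j^2)$ for $k\neq j$, the density $f_{R_j}(w_j) = 2\pi\lam_j w_j \exp(-\pi\lam_j w_j^2)$, and the scaled $Y_0$ CCDF above; for $j=0$ I would insert the $R_k$ CCDFs together with the scaled Rayleigh density in its PPP form. In either case, collecting the Gaussian exponents yields one factor $\exp(-\pi w_j^2[\lam_j + \sum_{k=0,\,k\neq j}^K \pow{j}{k}^2\lam_k])$, and since $\pow{j}{j}=1$ the term $\lam_j=\pow{j}{j}^2\lam_j$ folds into the sum to give $\sum_{k=0}^{K}\pow{j}{k}^2\lam_k$ in the exponent. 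The surviving prefactor is $2\pi\lam_j w_j/\ncalA_{j|v_0}$, which is exactly the claimed expression.

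Since this is a verification rather than a discovery, I do not anticipate a genuine obstacle. The only point demanding care is the bookkeeping for $j=0$: one must recognize that the Jacobian factor $v_0^{1/\alpha}$ arising from $R_0 = v_0^{-1/\alpha}Y_0$ combines with the Rayleigh density to reproduce the PPP nearest-neighbor form, and that the definition of $\lam_0$ is precisely what renders the $0^{th}$ tier indistinguishable from a standard PPP tier in the final formula, thereby merging both branches into the uniform expression valid for all $j\in\ncalK_1$.
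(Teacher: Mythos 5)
Your proposal is correct and follows essentially the same route as the paper's own proof: direct substitution of the Rayleigh PDF/CCDF of $Y_0$ and the nearest-neighbor distributions of the $R_k$ into Lemma~\ref{lemma::serving_distance_dist_sh}, with the definition $\lam_0 = v_0^{2/\alpha}/(2\pi\sigma_i^2)$ collapsing both the $j=0$ and $j\in\ncalK$ branches into the single stated expression. Your explicit remark that this choice of $\lam_0$ makes the scaled cluster-center link indistinguishable from a PPP tier of density $\lam_0$ is a clean way of packaging the same bookkeeping the paper carries out case by case.
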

\begin{proof}
See Appendix~\ref{app::f_dist_thomas}.
\end{proof}
\begin{cor}
If $\Phi_i^{u}$ is \matern\ cluster process, the  conditional distribution of serving distance $\serv_j$ given $\ncalV_0=v_0$ can be expressed as $\fservcond{j}(w_j|v_0)=$
\begin{align*} 
&\begin{cases}
   \frac{1}{\ncalA_{0|v_0}}\exp\left(-\pi\sum\limits_{k=1}^{K}\lam_k\pow{0}{k}^2w_0^2\right)\frac{2v_0^{\frac{2}{\alpha}}w_0}{\Ri^2}&\text{if }j=0\\
\frac{1}{\ncalA_{j|v_0}}2\pi\lam_j\exp\left(-\pi\sum\limits_{k=1}^{K}\lam_k\pow{j}{k}^2w_j^{2}\right)
\frac{\Ri^2-v_0^{\frac{2}{\alpha}}w_j^2}{\Ri^2}w_j&\text{if }j\in\ncalK\\
\end{cases},
\end{align*}
where $0\leq w_j\leq \Ri$. For $w_j>\Ri$, $\fservcond{j}(w_j|v_0)=0,\ \forall j\in\ncalK_1$.
\end{cor}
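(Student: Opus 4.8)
The plan is to obtain this result by specializing the general conditional serving-distance density of Lemma~\ref{lemma::serving_distance_dist_sh}, i.e.,~\eqref{eq::serving_distance_dist}, to the \matern\ case; no new probabilistic argument is needed, only the substitution of the appropriate distance laws followed by algebraic consolidation. Concretely, I would insert into~\eqref{eq::serving_distance_dist} the equivalent nearest-BS distributions from~\eqref{eq::nn}, namely $\overline{F}_{\mod_k}(r)=\exp(-\pi\lam_k r^2)$ and $f_{\mod_j}(r)=2\pi\lam_j r\exp(-\pi\lam_j r^2)$, together with the \matern\ cluster-center distance laws from~\eqref{eq::matern}, namely $f_{Y_0}(y_0)=2y_0/\Ri^2$ and $\overline{F}_{Y_0}(y_0)=(\Ri^2-y_0^2)/\Ri^2$ valid on $0\leq y_0\leq\Ri$.

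For the branch $j=0$, I would first note that the product of complementary CDFs combines into a single exponential, since a product of exponentials is the exponential of the summed exponents, giving $\prod_{k=1}^{K}\overline{F}_{\mod_k}(\pow{0}{k}w_0)=\exp\!\big(-\pi\sum_{k=1}^{K}\lam_k\pow{0}{k}^2w_0^2\big)$. The remaining factor $f_{Y_0}(v_0^{1/\alpha}w_0)=2v_0^{1/\alpha}w_0/\Ri^2$, multiplied by the Jacobian $v_0^{1/\alpha}$ that Lemma~\ref{lemma::serving_distance_dist_sh} carries from the change of variable $R_0=v_0^{-1/\alpha}Y_0$, yields the factor $2v_0^{2/\alpha}w_0/\Ri^2$, so the $j=0$ expression follows after dividing by $\ncalA_{0|v_0}$. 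For the branch $j\in\ncalK$, the same exponent consolidation applied to $\prod_{k\neq j}\overline{F}_{\mod_k}(\pow{j}{k}w_j)$, combined with the observation $\pow{j}{j}=1$ so that the exponential in $f_{\mod_j}(w_j)$ folds into the product, recovers the full sum $\exp\!\big(-\pi\sum_{k=1}^{K}\lam_k\pow{j}{k}^2w_j^2\big)$ and the prefactor $2\pi\lam_j w_j$; substituting $\overline{F}_{Y_0}(v_0^{1/\alpha}\pow{j}{0}w_j)$ then supplies the affine-in-$w_j^2$ truncation factor $(\Ri^2-v_0^{2/\alpha}\pow{j}{0}^2w_j^2)/\Ri^2$ that appears in the statement.

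The one point that requires genuine care, rather than mechanical substitution, is the compact support of the \matern\ distribution. Unlike the Thomas case of Corollary~\ref{cor::f_dist_thomas}, where $Y_0$ is Rayleigh with unbounded support, here $f_{Y_0}$ and $\overline{F}_{Y_0}$ carry implicit indicators $\mathbf{1}(0\leq y_0\leq\Ri)$. Hence the $j=0$ branch is valid only where $v_0^{1/\alpha}w_0\leq\Ri$ and the $j\in\ncalK$ branch only where $v_0^{1/\alpha}\pow{j}{0}w_j\leq\Ri$, the density vanishing beyond these ranges; I would therefore track these indicators explicitly so that the stated vanishing for $w_j>\Ri$ is justified rather than assumed. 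As a consistency check I would verify that integrating each branch over its support returns the corresponding association probability $\ncalA_{j|v_0}$ of Corollary~\ref{cor::association_matern}, which simultaneously confirms the normalization and the correct placement of the $v_0^{2/\alpha}$ factors.
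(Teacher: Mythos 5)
Your proposal is correct and follows essentially the same route as the paper's own proof: direct substitution of the \matern\ distance laws (Eqs.~\ref{eq::dist_matern} and~\ref{eq::dist_matern_ccdf}) and the nearest-BS distributions (Eq.~\ref{eq::nn}) into the general result of Lemma~\ref{lemma::serving_distance_dist_sh}, followed by exponent consolidation and explicit tracking of the compact support. One remark: your careful bookkeeping yields the truncation factor $(\Ri^2-v_0^{2/\alpha}\pow{j}{0}^2w_j^2)/\Ri^2$ and support $w_j\leq \Ri/(v_0^{1/\alpha}\pow{j}{0})$ for $j\in\ncalK$, which is the correct general form; the corollary as printed drops the $\pow{j}{0}^2$ factor and states the support as $[0,\Ri]$ (exact only when $P_j=P_i$ and $v_0=1$), so your version does not merely reproduce the statement but in fact corrects a typo in it.
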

\begin{proof}
Substituting $f_{Y_0}(\cdot)$ for \matern\ cluster process from Eq.~\ref{eq::dist_matern} and CCDF of $R_k$ from Eq.~\ref{eq::nn_distribution_ccdf} in Eq.~\ref{eq::serving_distance_dist} and proceeding as before, $\fservcond{0}(\cdot)$ can be derived. Similarly, $\fservcond{j}(\cdot)$ is obtained by substituting $\overline{F}_{Y_0}(\cdot)$ from Eq.~\ref{eq::dist_matern_ccdf}.  For $w_j>\Ri$, $\fservcond{j}(w_j|v_0)=0$, $\forall j\in\ncalK_1$ as $f_{Y_0}(\cdot)$ and $\overline{F}_{Y_0}(\cdot)$ take zero value beyond this range. 
\end{proof}
\section{Coverage Probability Analysis}\label{sec::coverage_prob}
This is the second technical section of the paper where we use the association probability and the distance distribution results obtained in the previous section to derive easy-to-use expressions for the coverage probability of a typical user of $\Phi_i^{u}$ in a {\em user-centric} deployment. 

 According to the association policy, it is easy to deduce that if the typical user is served by a BS $\in \Phi_j$ located at a distance $W_j$,  there exist no $k^{th}$ tier BSs, $\forall k \in \ncalK_1$, within a disc of radius $\pow{j}{k}W_j$ centered at the location of typical user (origin). We denote this \textit{exclusion disc}  by $b(\mathbf{0},\pow{j}{k}W_j)$. Assuming association with {the}  $j^{th}$ tier, the total interference experienced by the typical user originates from two independent sets of BSs: (i) $\cup_{k\in \ncalK_1} \Phi_k \setminus b(\mathbf{0},\serv_j)$, the set of open access BSs lying beyond the exclusion zone $b(\mathbf{0},\serv_j)$ and (ii) $\cup_{k\in \ncalK} \Phi'_k$, the set of closed access BSs. As all the interferers from the $k^{th}$ open access tier will lie outside $b(\mathbf{0},\pow{j}{k}\serv_j)$, we define interference from the $k^{th}$ open-access tier as $\ncalI_{o(j,k)}(\serv_j)= \sum_{\mathbf{x}_k\in \Phi_k \setminus b(\mathbf{0},\pow{j}{k}\serv_j)}P_k \h{x}{k} \|\mathbf{x}_k\|^{-\alpha}$.
We express the total contribution of interference from all open access tiers as 
\begin{align*}
\ncalI_{o(j)}(\serv_j)=\sum_{k =0}^{K}\ncalI_{o(j,k)}(\serv_j). 
\end{align*}
It is clear that the interference from the open-access tiers defined above depends on the serving distance $\serv_j$. However, it is not the case with the closed access tiers. Recall that since the closed access tiers do not participate in the cell selection procedure, there is no exclusion zone in their interference field. In particular, the closed access BSs may lie closer to the typical user than its serving BS. We denote the closed access interference by $\ncalI_c=\sum_{k=1}^K\ncalI_{c(k)}$, where $\ncalI_{c(k)}$ is the interference from all the BSs of the $k^{th}$ closed access tier $\Phi_k'$. Using the variables defined above, we can now express $\sir$ defined in Eq.~\ref{eq::sir_def} at the typical user when it is served by the BS located at a distance $\serv_j$  in a compact form as a function of the RV $\serv_j$ as: $
\sir(\serv_j)=
\frac{P_j\h{x}{j} \serv_j^{-\alpha}}{ {\ncalI_{o(j)}(\serv_j)}
+\ncalI_{c}}.$

\subsection{Coverage Probability}  \label{subsec::coverage_prob}
{A typical user is said to be in coverage if} $\sir(\serv_j)>\tau$, where $\tau$ denotes modulation-coding specific $\sir$ threshold required for successful reception. The coverage probability can now be formally defined as follows. 
\begin{ndef}[Coverage probability] Per-tier coverage probability for $\Phi_j$ can be defined as the probability that the typical user of $\Phi_i^{u}$ is in coverage conditioned on the fact that it is served by a  BS from $\Phi_j$. Mathematically,
\begin{align}
   \pc_j^{(i)}&=\nbbE({\bf 1}(\sir(\serv_j)>\tau)).
           \label{eq::coverage_per_tier_no_sh}
\end{align}
The total coverage probability $\pc^{(i)}$ can now be defined in terms of the per-tier coverage probability as  
\begin{align}
   \pc^{(i)}&=\sum\limits_{j=0}^{K}\ncalA_j\pc^{(i)}_j,
           \label{eq::coverage_total_no_sh}
\end{align}
where $\ncalA_j$ is given by Eq.~\ref{eq::association_prob_defn}.
\end{ndef}
%
With the expressions of $\condA{j}$ and $\fservcond{j}(\cdot)$ at hand, we focus on the derivation of  coverage probability $\pc^{(i)}$. Note that using the Rayleigh fading assumption along with the fact that the open access interference terms $\{\ncalI_{o(j,k)}\}$ and the closed access interference terms $\{\ncalI_{c(k)}\}$ are all independent of each other, we can express the per-tier coverage probability in terms of the product of Laplace transforms of these interference terms. This result was presented for a special case of Thomas cluster process in the conference version of this paper~\cite{Saha1605:Downlink} (for $K$-tier HetNets) as well as in~\cite{mankarmodeling} (for single-tier cellular networks). 
\begin{theorem}[Coverage probability]  \label{thm:coverage_theorem}
Conditional per-tier coverage probability of the typical user from $\Phi_i^{u}$ {given that} the serving BS being from the $j^{th}$ tier and $\ncalV_0=v_0$ is:   $\pc^{(i)}_{j|v_0}=$
\begin{align}
&\int\limits_{\scriptscriptstyle w_j>0}\ncalL_{\ncalI_{o(j,0)|\ncalV_0}}\left(\frac{\tau w_j^{\alpha}}{P_j}|v_0\right)\prod\limits_{k =1}^{K}\scalebox{0.95}{$\ncalL_{\ncalI_{o(j,k)}}\left(\frac{\tau w_j^{\alpha}}{P_i}\right)$}\scalebox{0.95}{$\ncalL_{\ncalI_{c(k)}}\left(\frac{\tau w_j^{\alpha}}{P_j}\right)$}\:\Scale[0.89]\notag\\&{\qquad\qquad   \times \exp\bigg(-\frac{\tau N_0 w_j^{\alpha}}{P_j}\bigg)\:\fservcond{j}(w_j|v_0)\:{\rm d}w_j},
 \label{eq::coverage_theorem_sh}
\end{align}
and the coverage probability of a typical user from $\Phi_i^u$ can be expressed as
\begin{equation}
\pc^{(i)}=\nbbE_{\ncalV_0}\left[\sum\limits_{j=0}^K\ncalA_{j|\ncalV_0}\pc^{(i)}_{j|\ncalV_0}\right],
\label{eq::coverage_theorem_coverage}
\end{equation}
where $\ncalL_{\ncalI_{o(j,0)|\ncalV_0}}\left(s|v_0\right)$ is the conditional Laplace transform of $\ncalI_{o(j,0)}$, i.e., $\ncalL_{\ncalI_{o(j,0)|\ncalV_0}}\left(s|v_0\right)\equiv\nbbE \left[\exp(-s \ncalI_{o(j,0)})|\ncalV_0=v_0\right]$ and $\ncalL_{\ncalI_{o(j,k)}}(s) \equiv \nbbE \left[\exp(-s \ncalI_{o(j,k)}) \right]$;  $\ncalL_{\ncalI_{c(k)}}(s) \equiv \nbbE \left[\exp(-s \ncalI_{c(k)}) \right]$ respectively denote the Laplace transforms of interference of  the $k^{th}$ open and closed access tiers ($k\in\ncalK$).
\label{th::coverage}
\end{theorem}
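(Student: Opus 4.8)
The plan is to first establish the per-tier expression \eqref{eq::coverage_theorem_sh} and then assemble the total coverage probability \eqref{eq::coverage_theorem_coverage} by conditioning and deconditioning on the cluster-center shadowing gain $\ncalV_0$. I would begin by conditioning on the serving distance. By definition, $\pc^{(i)}_{j|v_0}$ is the probability of the coverage event $\{\sinr(\serv_j) > \tau\}$ given association to the $j^{th}$ tier and $\ncalV_0 = v_0$; since $\serv_j = R_j \mid S_{\Phi_j}$ has conditional density $\fservcond{j}(\cdot \mid v_0)$ from Lemma~\ref{lemma::serving_distance_dist_sh}, I write
$$\pc^{(i)}_{j|v_0} = \int_{0}^{\infty} \P(\sinr(w_j) > \tau \mid \serv_j = w_j, \ncalV_0 = v_0)\, \fservcond{j}(w_j \mid v_0)\, {\rm d}w_j.$$

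Next I would unpack the inner probability using the compact form $\sinr(\serv_j) = P_j \h{x}{j} \serv_j^{-\alpha}/(N_0 + \ncalI_{o(j)}(\serv_j) + \ncalI_{c})$. Rearranging, the coverage event is equivalent to $\h{x}{j} > (\tau w_j^{\alpha}/P_j)(N_0 + \ncalI_{o(j)}(w_j) + \ncalI_{c})$. The crucial step exploits Rayleigh fading: since $\h{x}{j} \sim \exp(1)$ is independent of the interference and noise, the complementary CDF $\P(h > x) = e^{-x}$ converts this probability directly into the Laplace-transform form $\nbbE[\exp(-(\tau w_j^{\alpha}/P_j)(N_0 + \ncalI_{o(j)}(w_j) + \ncalI_{c}))]$, and the deterministic noise pulls out as $\exp(-\tau N_0 w_j^{\alpha}/P_j)$.

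The heart of the argument is then the factorization of $\nbbE[\exp(-s(\ncalI_{o(j)}(w_j) + \ncalI_c))]$ over tiers, with the common scaling $s = \tau w_j^{\alpha}/P_j$. Writing $\ncalI_{o(j)}(w_j) = \sum_{k=0}^K \ncalI_{o(j,k)}(w_j)$ and $\ncalI_c = \sum_{k=1}^K \ncalI_{c(k)}$, I would argue that conditioned on association to tier $j$ and on $\serv_j = w_j$, the interfering points of each open-access tier $k$ form the transformed PPP $\Phi_k$ restricted to the exterior of the exclusion disc $b(\mathbf{0}, \pow{j}{k} w_j)$, while the closed-access interferers are the full PPPs $\Phi_k'$; because the parent PPPs are mutually independent and the exclusion discs are deterministic given $w_j$, these contributions are independent and the Laplace transform of the sum factors as $\prod_{k=0}^K \ncalL_{\ncalI_{o(j,k)}}(s) \prod_{k=1}^K \ncalL_{\ncalI_{c(k)}}(s)$. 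The only term needing special care is $k=0$: the singleton tier $0$ is the cluster center, whose modified distance $R_0 = \ncalV_0^{-1/\alpha} Y_0$ depends on the shadowing gain, so its contribution is retained as the conditional Laplace transform $\ncalL_{\ncalI_{o(j,0)|\ncalV_0}}(s \mid v_0)$, which equals $1$ when $j=0$ since the lone tier-$0$ point is then the serving BS and contributes no interference. Substituting back gives \eqref{eq::coverage_theorem_sh}.

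Finally I would assemble the total. Conditioned on $\ncalV_0 = v_0$, the law of total probability over the mutually exclusive association events $\{S_{\Phi_j}\}_{j=0}^K$ gives the conditional total coverage $\sum_{j=0}^K \condAs{j}\, \pc^{(i)}_{j|v_0}$, using the conditional association probabilities of Lemma~\ref{lemma::association_prob_sh}; taking the expectation over $\ncalV_0$ then yields \eqref{eq::coverage_theorem_coverage}. I expect the main obstacle to be the rigorous justification of the cross-tier factorization under the joint conditioning on association and serving distance: one must invoke the void-probability and reduced-Palm structure of the PPP to argue that, given the nearest tier-$j$ point at distance $w_j$, the residual fields outside the induced exclusion discs remain independent PPPs across tiers, and separately carry the distance-dependent, $\ncalV_0$-conditioned tier-$0$ interferer through the computation.
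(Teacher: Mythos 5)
Your proposal is correct and follows essentially the same route as the paper's proof: conditioning on the serving distance $\serv_j$ and on $\ncalV_0$, invoking $\h{x}{j}\sim\exp(1)$ to convert the coverage event into a Laplace-transform expression with the noise term factored out, exploiting independence across tiers to split the transform into the per-tier products (with $\ncalI_{o(j,0)}$ kept conditional on $\ncalV_0$ and absent when $j=0$), and then deconditioning via the law of total probability over the association events and finally over $\ncalV_0$. The cross-tier factorization issue you flag as the main obstacle is exactly what the paper defers to its separate Laplace-transform lemmas (Lemmas~\ref{lemma::L_I_0(j,k)}--\ref{lemma::interferecne_closed}), so no gap remains.
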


\begin{proof}
See Appendix~\ref{app::coverage}. 
\end{proof}
{Note that the conditioning on $\ncalV_0$  appears only in the first term, i.e. the Laplace transform of $\ncalI_{o(j,0)}$ since the interference from the BS at cluster center is  only influenced by $\ncalV_0$ while the other interference terms are independent of $\ncalV_0$.}
\subsection{Laplace Transform of Interference}
 As evident from Theorem~\ref{th::coverage}, the Laplace transform of interference from different tiers are the main components of  the  coverage probability expression. 
The following three Lemmas deal with the Laplace transforms of the interference from different tiers. We first focus on the interference originating from all the open access tiers except the interference from the $0^{th}$ tier (i.e. the BS at cluster center) which requires separate treatment. 
\begin{lemma}
Given a typical user of $\Phi_i^{u}$ is served by a BS $\in \Phi_j$ ($j\in \ncalK$) at a distance $W_j=w_j$,  Laplace transform of $\ncalI_{o(j,k)}$, $\forall k\in \ncalK$, evaluated at $s=\frac{\tau w_j^{\alpha}}{P_j}$ is
 \begin{align}
 \ncalL_{\ncalI_{o(j,k)}}\left(\frac{\tau w_j^{\alpha}}{P_j}\right)&=\exp\biggl(-\pi\pow{j}{k}^2\lam_k \G w_j^2\biggr),
\label{eq::I_o_jk}\\
\text{with}\quad {\G} &=\frac{2\tau }{\alpha-2} {}_2\ncalF_{1}\left[1,1-\frac{2}{\alpha};2-\frac{2}{\alpha},-\tau\right]\label{eq::G}
,\end{align}
where ${}_2\ncalF_{1}[a,b,c,t]=\frac{\Gamma(c)}{\Gamma(b)\Gamma(c-b)}\int\limits_{0}^1\frac{z^{b-1}(1-z)^{c-b-1}}{(1-tz)^{a}}{\rm d}z$ is  Gaussian Hypergeometric function.
\label{lemma::L_I_0(j,k)}
\end{lemma}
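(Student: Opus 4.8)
The plan is to compute the Laplace transform directly via the probability generating functional (PGFL) of the homogeneous PPP $\Phi_k$, exploiting the facts that the fading gains $\{h_k\}$ are i.i.d. $\exp(1)$ and independent of $\Phi_k$, and that conditioning on $\serv_j=w_j$ fixes the exclusion disc $b(\mathbf{0},\pow{j}{k}w_j)$ outside which all $k^{th}$-tier interferers lie. First I would write, from the definition of $\ncalI_{o(j,k)}$,
\begin{align*}
\ncalL_{\ncalI_{o(j,k)}}(s)=\nbbE\Bigg[\prod_{\mathbf{x}_k\in\Phi_k\setminus b(\mathbf{0},\pow{j}{k}w_j)}\nbbE_{h_k}\big[\exp(-sP_kh_k\|\mathbf{x}_k\|^{-\alpha})\big]\Bigg],
\end{align*}
and use the exponential fading to collapse the inner expectation to $\nbbE_{h_k}[\exp(-sP_kh_kr^{-\alpha})]=(1+sP_kr^{-\alpha})^{-1}$.

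Next I would apply the PGFL of the PPP of density $\lam_k$ over $\mathbb{R}^2\setminus b(\mathbf{0},\pow{j}{k}w_j)$ and pass to polar coordinates, obtaining
\begin{align*}
\ncalL_{\ncalI_{o(j,k)}}(s)=\exp\left(-2\pi\lam_k\int_{\pow{j}{k}w_j}^{\infty}\frac{sP_kr^{-\alpha}}{1+sP_kr^{-\alpha}}\,r\,{\rm d}r\right).
\end{align*}
I would then specialize $s=\frac{\tau w_j^{\alpha}}{P_j}$; since $\pow{j}{k}^{\alpha}=P_k/P_j$ we have $sP_k=\tau\pow{j}{k}^{\alpha}w_j^{\alpha}$, and rescaling the radial variable by $r=\pow{j}{k}w_jt$ eliminates the dependence on $w_j$ and $\pow{j}{k}$ except for a prefactor $\pow{j}{k}^2w_j^2$, reducing the exponent to the dimensionless form $-2\pi\lam_k\pow{j}{k}^2w_j^2\int_1^{\infty}\frac{\tau t}{t^{\alpha}+\tau}\,{\rm d}t$.

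Finally, to reach the claimed closed form it suffices to identify $2\int_1^{\infty}\frac{\tau t}{t^{\alpha}+\tau}\,{\rm d}t=\G$. Substituting $x=t^{-\alpha}$ transforms the integral into $\frac{2\tau}{\alpha}\int_0^1\frac{x^{-2/\alpha}}{1+\tau x}\,{\rm d}x$, which I would recognize as the Euler integral representation of the Gaussian hypergeometric function with $a=1$, $b=1-\frac{2}{\alpha}$, $c=2-\frac{2}{\alpha}$, $z=-\tau$. Using the identity $\Gamma(2-\frac{2}{\alpha})=(1-\frac{2}{\alpha})\Gamma(1-\frac{2}{\alpha})$ to simplify the Beta-function prefactor then yields exactly $\frac{2\tau}{\alpha-2}\,{}_2\ncalF_{1}[1,1-\frac{2}{\alpha};2-\frac{2}{\alpha};-\tau]=\G$, completing the proof.

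The delicate points are convergence and the validity of the hypergeometric representation, both of which hinge on $\alpha>2$: the radial integral converges at infinity precisely because $\alpha>2$ (finite mean interference), and the Euler integral requires $\operatorname{Re}(c)>\operatorname{Re}(b)>0$, i.e. $1-\frac{2}{\alpha}>0$, the same condition. I expect the main obstacle to be carrying the Gamma-function bookkeeping through the change of variables cleanly so that the constant matches $\frac{2}{\alpha-2}$ rather than an off-by-a-factor expression; the PGFL and fading steps are routine.
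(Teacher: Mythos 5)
Your proof is correct and takes essentially the same route as the paper's: factor the Laplace transform using i.i.d.\ exponential fading, apply the PGFL of the homogeneous PPP $\Phi_k$ outside the exclusion disc $b(\mathbf{0},\pow{j}{k}w_j)$ in polar coordinates, and evaluate the resulting radial integral to obtain $\G$. The only difference is cosmetic: the paper dispatches the final integral by citing a tabulated result (Gradshteyn--Ryzhik, Eq.~3.194.1), whereas you carry out the substitution and the Euler-integral identification of ${}_2\ncalF_1$ explicitly, with the correct Gamma-function bookkeeping and the correct observation that both convergence and the Euler representation require $\alpha>2$.
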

\begin{proof} 
The proof follows on the same lines as \cite[Theorem 1]{xia}. For completeness, the proof is provided in Appendix~\ref{app::I_o_jk}. 
 \end{proof}
After dealing with the interference from all open access tiers $\Phi_k$ $\forall k\in \ncalK$, we now focus on the $0^{th}$ tier, which consists of only the cluster center. 
\begin{lemma}
\label{lemma::I_o_j0}
Given a typical user of $\Phi_i^{u}$ connects to the BS $\in \Phi_j$ with $j\in\ncalK$ at a distance $W_j=w_j$, the Laplace transform of $\ncalI_{o(j,0)}$ at $s=\frac{\tau w_j^{\alpha}}{P_j}$ conditioned on $\ncalV_0=v_0$ is:  $\ncalL_{\ncalI_{o(j,0)|\ncalV_0}}\big(\frac{\tau w_j^{\alpha}}{P_j}\bigg|v_0\big)=$
\begin{align}
\label{eq::Io_j0_general}
\int\limits_{y_0>v_0^{\frac{1}{\alpha}}\pow{j}{i}w_j}\frac{1}{1+\tau\bigg(\frac{y_0}{v_0^{\frac{1}{\alpha}}\pow{j}{i}w_j}\bigg)^{-\alpha}}\frac{f_{Y_0}(y_{0})}{\overline{F}_{Y_0}(v_0^{\frac{1}{\alpha}}\pow{j}{i}w_j)}\:{\rm d}y_{0}.
\end{align}
\end{lemma}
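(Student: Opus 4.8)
The plan is to exploit the fact that the $0^{th}$ tier contains a single point, so that the interference $\ncalI_{o(j,0)}$ collapses to the contribution of one BS, namely the cluster center located at $\mathbf{x}_0$. First I would write $\ncalI_{o(j,0)} = P_0 h_0 \|\mathbf{x}_0\|^{-\alpha} = P_i h_0 \|\mathbf{x}_0\|^{-\alpha}$, using $P_0 \equiv P_i$, and then carry the shadowing through the modified distance via $\|\mathbf{x}_0\| = R_0 = \ncalV_0^{-1/\alpha} Y_0$, so that conditioned on $\ncalV_0 = v_0$ one has $\|\mathbf{x}_0\|^{-\alpha} = v_0 Y_0^{-\alpha}$. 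The quantity to compute is therefore $\nbbE[\exp(-s P_i h_0 v_0 Y_0^{-\alpha})]$ at $s = \tau w_j^{\alpha}/P_j$, where the expectation is over the fading $h_0$ and the cluster-center distance $Y_0$, taken under the conditioning that the user is served by $\Phi_j$ at distance $w_j$.

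The key conceptual step is to identify the conditional law of $Y_0$. Because the typical user associates with $\Phi_j$ ($j \in \ncalK$), the association event $S_{\Phi_j}$ forces every tier's candidate BS, including the cluster center, to lie outside the corresponding exclusion disc; the component of $S_{\Phi_j}$ involving tier $0$ is exactly $R_0 > \pow{j}{0} R_j$, which with $R_j = w_j$ and $\pow{j}{0} = \pow{j}{i}$ (since $P_0 \equiv P_i$) becomes $Y_0 > v_0^{1/\alpha}\pow{j}{i} w_j$. Since $Y_0$ (equivalently $R_0$) is independent of the point processes $\Phi_1,\ldots,\Phi_K$, every remaining constraint in $S_{\Phi_j}$ is independent of $Y_0$, so the conditional density of $Y_0$ reduces to its unconditional density truncated to this half-line, namely $f_{Y_0}(y_0)/\overline{F}_{Y_0}(v_0^{1/\alpha}\pow{j}{i}w_j)$ for $y_0 > v_0^{1/\alpha}\pow{j}{i}w_j$.

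With this in hand I would first average over the exponential fading $h_0 \sim \exp(1)$, using $\nbbE_{h_0}[e^{-a h_0}] = (1+a)^{-1}$ with $a = (\tau w_j^{\alpha}/P_j)\, P_i v_0 y_0^{-\alpha}$. Simplifying the argument through $P_i/P_j = \pow{j}{i}^{\alpha}$ gives $a = \tau (v_0^{1/\alpha}\pow{j}{i}w_j / y_0)^{\alpha} = \tau (y_0/(v_0^{1/\alpha}\pow{j}{i}w_j))^{-\alpha}$, which is precisely the fraction appearing in the integrand of \eqref{eq::Io_j0_general}. Averaging this expression over $Y_0$ against its truncated density then produces the claimed integral. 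I expect the only subtle point to be the correct derivation of that truncated conditional distribution of $Y_0$ — that is, arguing that conditioning on the full association-and-serving-distance event collapses, by the independence of $Y_0$ from the other tiers, to the single exclusion constraint $Y_0 > v_0^{1/\alpha}\pow{j}{i}w_j$; the fading average and the power-ratio bookkeeping are routine by comparison.
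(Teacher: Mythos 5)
Your proposal is correct and follows essentially the same route as the paper's own proof: reduce $\ncalI_{o(j,0)}$ to the single cluster-center interferer with $P_0\equiv P_i$, truncate the law of $Y_0$ to $y_0>v_0^{1/\alpha}\pow{j}{i}w_j$ via the exclusion disc implied by association with $\Phi_j$, average the exponential fading to get the $(1+sP_iv_0y_0^{-\alpha})^{-1}$ kernel, and integrate against the truncated density. If anything, you spell out two steps the paper leaves implicit -- the independence argument showing the conditioning collapses to the single constraint on $Y_0$, and the bookkeeping $P_i/P_j=\pow{j}{i}^{\alpha}$ that puts the integrand in the stated form.
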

\begin{proof}{Recall that since the $0^{th}$ tier is created by the BS at cluster center which actually belongs to  the $i^{th}$ open-access tier,  the transmit power $P_0\equiv P_i$.}  If the serving BS  $\in\Phi_j$ ($j\in\ncalK$) lies at a distance $W_j=w_j$, due to the formation of virtual exclusion zone around the typical user, the cluster center acting as an interferer will lie outside $b(\mathbf{0},\pow{j}{i}w_j)$. {Thus, the PDF of  distance from the typical user to cluster center conditioned on $Y_0>v_0^{\frac{1}{\alpha}}\pow{j}{i}w_j$ is $f_{Y_0}(y_{0}|Y_0>v_0^{\frac{1}{\alpha}}\pow{j}{i}w_j) = \frac{f_{Y_0}(y_{0})}{\overline{F}_{Y_0}\left(v_0^{\frac{1}{\alpha}}\pow{j}{i}w_j\right)}$, where $y_{0}> v_0^{\frac{1}{\alpha}}\pow{j}{i}w_j$.} The conditional Laplace transform $\ncalL_{\ncalI_{o(j,0)|\ncalV_0}}(s|v_0)$  can be expressed as: $\nbbE_{Y_0}\big(\nbbE_{h_0}\big(\exp\big(-sP_ih_0v_0Y_0^{-\alpha}\big)\big)|\mod_0>\pow{j}{i}w_j\big)$
 \begin{align} &\myeq{a}\nbbE_{Y_0}\left[\frac{1}{1+sP_iv_0Y_{0}^{-\alpha}}|Y_0>v_0^{\frac{1}{\alpha}}\pow{j}{i}w_{j}\right]\label{eq::interferece_cc_intermediate_step}\\
 &=\int\limits_{\scalebox{0.55}{$y_0>v_0^{\frac{1}{\alpha}}\pow{j}{i}w_{j}$}}\frac{1}{1+sP_iv_0y_0^{-\alpha}}f_{Y_0}(y_{0}|Y_0>v_0^{\frac{1}{\alpha}}\pow{j}{i}w_j)\:{\rm d}y_{0}\notag\\&=\int\limits_{\scalebox{0.55}{$y_0>v_0^{\frac{1}{\alpha}}\pow{j}{i}w_{j}$}}\frac{1}{1+sP_iv_0y_{0}^{-\alpha}}\frac{f_{Y_0}(y_{0})}{\overline{F}_{Y_0}\left(v_0^{\frac{1}{\alpha}}\pow{j}{i}w_j\right)}\:{\rm d}y_{0},\notag
\end{align}
where (a) follows from $h_0\sim\exp(1)$. This completes the proof. 
\end{proof}
In the next Corollary, we provide closed form upper and lower bounds on the Laplace transform of interference from the BS at $0^{th}$ tier.  The lower bound is obtained by placing the BS located at the cluster-center of the typical user on the boundary of the exclusion disc $b({\bf 0}, \pow{j}{i}w_j)$. The upper bound is found by simply ignoring the interference from this BS. These bounds will be used later in this section to derive   tight bounds on coverage probability. 
\begin{cor}
Conditional Laplace transform of $\ncalI_{o(j,0)}$ given $\ncalV_0=v_0$ at $s=\frac{\tau w_j^{\alpha}}{P_j}$ is bounded by 
\begin{align}
\frac{1}{1+\tau}\leq \ncalL_{\ncalI_{o(j,0)|\ncalV_0}}\left(\frac{\tau w_j^{\alpha}}{P_j}\bigg|v_0\right)\leq 1.\label{eq::Interference_cc_lower_bound}
\end{align}
\label{corr::interference_cc_lower_bound}
\end{cor}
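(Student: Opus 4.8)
The plan is to work directly from the exact integral representation for the conditional Laplace transform established in Lemma~\ref{lemma::I_o_j0} and to obtain both bounds by controlling the single integrand appearing there. Abbreviating the exclusion-disc radius in the $Y_0$-domain as $c\equiv v_0^{1/\alpha}\pow{j}{i}w_j$, Lemma~\ref{lemma::I_o_j0} gives
\begin{align*}
\ncalL_{\ncalI_{o(j,0)|\ncalV_0}}\left(\frac{\tau w_j^{\alpha}}{P_j}\bigg|v_0\right)=\int\limits_{y_0>c}\frac{1}{1+\tau\left(\frac{y_0}{c}\right)^{-\alpha}}\frac{f_{Y_0}(y_{0})}{\overline{F}_{Y_0}(c)}\:{\rm d}y_{0},
\end{align*}
which I would read as the conditional expectation $\mathbb{E}\big[g(Y_0)\mid Y_0>c\big]$ of the function $g(y_0)=\big(1+\tau(y_0/c)^{-\alpha}\big)^{-1}$ taken against the proper conditional density $f_{Y_0}(\cdot)/\overline{F}_{Y_0}(c)$, which is nonnegative and integrates to one over $\{y_0>c\}$.

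The key step is then a pointwise bound on $g$ over the support. Since the domain of integration is exactly $\{y_0>c\}$, we have $y_0/c>1$, and because $\alpha>2>0$ this forces $(y_0/c)^{-\alpha}\in(0,1)$. Feeding this range into the denominator gives $1+\tau(y_0/c)^{-\alpha}\in(1,1+\tau)$, so the integrand is squeezed as
\begin{align*}
\frac{1}{1+\tau}\leq g(y_0)\leq 1,\qquad y_0>c.
\end{align*}
These two extremes carry the physical interpretation advertised before the statement: the lower bound is attained as $y_0\to c^{+}$, i.e.\ placing the interfering cluster-center BS on the boundary of the exclusion disc $b(\mathbf{0},\pow{j}{i}w_j)$ (worst-case interference, smallest Laplace transform), whereas the upper bound corresponds to $y_0\to\infty$, i.e.\ discarding this interferer altogether (zero interference, Laplace transform equal to one).

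Finally, I would integrate the pointwise inequality against the conditional density. Because that density is nonnegative and normalized to one over $\{y_0>c\}$, monotonicity of the integral transports the bounds from the integrand to the integral, yielding exactly $\frac{1}{1+\tau}\leq \ncalL_{\ncalI_{o(j,0)|\ncalV_0}}\!\left(\frac{\tau w_j^{\alpha}}{P_j}\big|v_0\right)\leq 1$.

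I do not anticipate a genuine analytical obstacle: the entire argument rests on confirming that $(y_0/c)^{-\alpha}$ is confined to $(0,1)$ on the support, which is where the exclusion-zone constraint $y_0>c$ (itself inherited from the max-power association rule) does all the work. The only mild point worth stating explicitly is that the conditional density integrates to one, so that the squeeze survives integration. A pleasant feature of this route is that it never uses the explicit form of $f_{Y_0}$, so the same bound holds uniformly for the Thomas and \matern\ cluster processes and is what makes the later closed-form coverage bounds possible.
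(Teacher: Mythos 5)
Your proof is correct and follows essentially the same route as the paper's: both arguments exploit the exclusion-zone constraint $y_0>v_0^{1/\alpha}\pow{j}{i}w_j$ to confine the integrand $\bigl(1+\tau(y_0/c)^{-\alpha}\bigr)^{-1}$ between its boundary value $\frac{1}{1+\tau}$ and its limit $1$ at infinity, and then pass the bounds through the conditional expectation. Your version merely makes explicit the pointwise squeeze and the normalization of the conditional density, which the paper leaves implicit.
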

\begin{proof}

Following from Eq.~\ref{eq::interferece_cc_intermediate_step}: 
\begin{align*}
\ncalL_{\ncalI_{o(j,0)|\ncalV_0}}(s|v_0)&=\nbbE_{Y_0}\big[\Scale[0.9]{\frac{1}{1+sP_iv_0Y_{0}^{-\alpha}}|Y_0>v_0^{\frac{1}{\alpha}}\pow{j}{i}w_{j}}\big]\\
&\geq \frac{1}{1+sP_iv_0y_{0}^{-\alpha}}\bigg|_{y_0=v_0^{\frac{1}{\alpha}}\pow{j}{i}w_{j}}.
\end{align*}
Substitution of $s=\frac{\tau w_j^{\alpha}}{P_j}$ gives the final result. The upper bound can be obtained by 
\begin{align*}
\ncalL_{\ncalI_{o(j,0)|\ncalV_0}}(s|v_0)=&\nbbE_{Y_0}\big[\Scale[0.9]{\frac{1}{1+sP_iv_0Y_{0}^{-\alpha}}}|Y_0>v_0^{\frac{1}{\alpha}}\pow{j}{i}w_{j}\big]
\\& \leq \lim\limits_{y_0\to\infty}\frac{1}{1+sP_iv_0y_{0}^{-\alpha}}=1.
\end{align*}
\end{proof}
\begin{lemma}
\label{lemma::interferecne_closed}
 Given the typical user of $\Phi_i^{u}$ connects to any  BS $\in \Phi_j$ at a distance $W_j=w_{j}$, $\forall j\in \ncalK_1$,  {the} Laplace transform of $\ncalI_{c(k)}$ at $s=\frac{\tau w_{j}^\alpha}{P_j}$ is
 \begin{align}
&\ncalL_{\ncalI_{c(k)}}\bigg(\frac{\tau w_{j}^{\alpha}}{P_j}\bigg)=\exp\bigg(-\pi\lamc_k \H(\pow{j}{k}w_{j})^2\bigg), 
\label{eq::I_c_laplace}
\end{align}
where
 \begin{align}
\H={\tau}^{2/\alpha}\frac{2\pi\csc(\frac{2\pi}{\alpha})}{\alpha}.\label{eq::H}
\end{align}
 %
\end{lemma}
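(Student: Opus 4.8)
The plan is to compute $\ncalL_{\ncalI_{c(k)}}(s)$ by exactly the mechanism used for the open-access interference in Lemma~\ref{lemma::L_I_0(j,k)}, exploiting that $\Phi_k'$ is a homogeneous PPP of density $\lamc_k$. The one structural difference --- and the reason the answer comes out as a closed-form cosecant rather than a Gaussian hypergeometric --- is that the closed-access BSs do not participate in cell selection, so there is \emph{no} exclusion disc around the typical user: a closed-access interferer may lie arbitrarily close to the origin. First I would write $\ncalL_{\ncalI_{c(k)}}(s)=\E\big[\exp(-s\sum_{\mathbf{x}_k\in\Phi_k'}P_k\h{x}{k}\|\mathbf{x}_k\|^{-\alpha})\big]$, expand the exponential of the sum as a product over the points of $\Phi_k'$, and average over the i.i.d.\ fading gains $\h{x}{k}\sim\exp(1)$ via $\E_h[\exp(-ah)]=(1+a)^{-1}$. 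This leaves $\E_{\Phi_k'}\big[\prod_{\mathbf{x}_k\in\Phi_k'}(1+sP_k\|\mathbf{x}_k\|^{-\alpha})^{-1}\big]$.

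Next I would apply the probability generating functional (PGFL) of the homogeneous PPP $\Phi_k'$ and pass to polar coordinates. Because there is no exclusion region, the radial integral runs over all of $[0,\infty)$, giving
\begin{align*}
\ncalL_{\ncalI_{c(k)}}(s) &= \exp\left(-2\pi\lamc_k\int_0^\infty\left(1-\frac{1}{1+sP_k r^{-\alpha}}\right)r\,{\rm d}r\right)\\
&= \exp\left(-2\pi\lamc_k\int_0^\infty\frac{sP_k\,r}{r^\alpha+sP_k}\,{\rm d}r\right).
\end{align*}
This should be contrasted with the open-access case, where the lower limit would be $\pow{j}{k}w_j$ rather than $0$; that finite lower limit is precisely what forces the hypergeometric representation in Lemma~\ref{lemma::L_I_0(j,k)}, whereas the full-line integral here collapses to an elementary form.

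To evaluate the remaining integral I would substitute $t=r^\alpha$, so that $r\,{\rm d}r=\tfrac{1}{\alpha}t^{2/\alpha-1}\,{\rm d}t$, reducing it to $\tfrac{sP_k}{\alpha}\int_0^\infty\frac{t^{2/\alpha-1}}{t+sP_k}\,{\rm d}t$. The standard Mellin-type identity $\int_0^\infty\frac{t^{p-1}}{t+c}\,{\rm d}t=\pi\csc(\pi p)\,c^{p-1}$, valid for $0<p<1$ (here $p=2/\alpha\in(0,1)$ since $\alpha>2$), evaluates this to $\tfrac{\pi}{\alpha}\csc(\tfrac{2\pi}{\alpha})(sP_k)^{2/\alpha}$. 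Finally I would insert $s=\frac{\tau w_j^\alpha}{P_j}$, use $sP_k=\tau w_j^\alpha P_k/P_j=\tau(\pow{j}{k}w_j)^\alpha$ (since $\pow{j}{k}^\alpha=P_k/P_j$), so that $(sP_k)^{2/\alpha}=\tau^{2/\alpha}(\pow{j}{k}w_j)^2$, and collect the constants into $\H=\tau^{2/\alpha}\frac{2\pi\csc(2\pi/\alpha)}{\alpha}$ to recover Eq.~\eqref{eq::I_c_laplace}.

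The computation is essentially routine; the only genuine care is in justifying convergence of the full-line integral (finite precisely because $\alpha>2$ controls the tail while $2/\alpha<1$ controls the behaviour near the origin) and in recognizing that the \emph{absence} of the exclusion disc is exactly what replaces the hypergeometric of Lemma~\ref{lemma::L_I_0(j,k)} by the closed-form cosecant. I would expect no step to pose a serious obstacle, so the main value of the write-up is to flag this structural contrast with the open-access tiers explicitly.
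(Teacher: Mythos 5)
Your proposal is correct and follows essentially the same route as the paper: the paper likewise reuses the PGFL/exponential-fading computation of Lemma~\ref{lemma::L_I_0(j,k)} with the lower integration limit set to zero (no exclusion disc, since closed-access BSs do not participate in cell selection), and then evaluates the resulting full-line integral via a standard Gamma-function identity (equivalent to your $\int_0^\infty t^{p-1}/(t+c)\,\mathrm{d}t=\pi\csc(\pi p)\,c^{p-1}$) to obtain the cosecant form of $\H$. Your write-up simply makes explicit the "algebraic manipulations" the paper leaves to the reader, including the convergence check requiring $\alpha>2$.
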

\begin{proof}
The proof of this fairly well-known result follows in the same lines as that of Lemma~\ref{lemma::L_I_0(j,k)}, with the only difference being the fact that $\ncalI_{c(k)}$ is independent of  $\serv_j$ and hence, the lower limit of the integral in Eq.~\ref{eq::lemma_3_intermediate} will be zero.
The final form can be obtained by some algebraic manipulations and using the properties of Gamma function \cite[Eq.~3.241.2]{zwillinger2014table}.
\end{proof}
The expressions of Laplace transforms  of interference derived in the above three Lemmas are substituted in Eq.~\ref{eq::coverage_theorem_sh} to get the coverage probability. The results for no shadowing is readily obtained by  putting $\ncalV_k\equiv 1$, which omits the final deconditioning step with respect to $\ncalV_0$.
\begin{cor}[No shadowing] Under the assumption of no shadowing, the  coverage probability of a typical user belonging to $\Phi^u_{i}$ can be expressed as 
\begin{equation}
 \pc^{(i)}=\ncalA_{0}\pc_{0}^{(i)}+\sum\limits_{j=1}^K\ncalA_{j}\pc_{j}^{(i)},\quad \text{with}\label{eq::coverage_no_shadowing} 
 \end{equation}
 \begin{align} \label{eq::p_c_0_compact}
& \pc_{0}^{(i)}=
\frac{1}{\ncalA_{0}}\int\limits_{w_0>0}\exp\bigg(-\frac{\tau N_0 w_0^{\alpha}}{P_0}-\pi\sum\limits_{k=1}^K\pow{0}{k}^{2}\times\notag\\&\bigg(\lambda_k(\G+1)+\lambda_k'\H\bigg)w_{0}^2\bigg)\: f_{Y_0}(w_0)\:{\rm d}w_{0},\\
  \pc_{j}^{(i)}&=\frac{2\pi\lambda_j}{\ncalA_{j}}\int\limits_{w_{j}>0}\exp\bigg(-\frac{\tau N_0 w_j^{\alpha}}{P_j}-\pi\sum\limits_{k=1}^K\pow{j}{k}^2\times\notag\\&\bigg(\lambda_k(\G+1)+\lambda_k'\H\bigg)w_{j}^2\bigg)
\notag\\&\times\int\limits_{y_0>\pow{j}{i}w_{j}}{\frac{f_{Y_0}(y_{0})}{1+\tau(\frac{y_{0}}{\pow{j}{i}w_{j}})^{-\alpha}}} {\rm d}y_0 \:w_{j}\:{\rm d}w_{j},
\label{eq::p_c_j_compact}
 \end{align}
 where $\ncalA_j$ is the association probability to $\Phi_j^{(\rm BS,o)}$ given by: $ \ncalA_j=\nbbE_{Y_j}\prod\limits_{\substack{k=0\\k\neq j}}^K\overline{F}_{Y_k}(\pow{j}{k}Y_j), \ \forall k\in \ncalK_1.$
 
%
%
%
%
%
%
%
%
%
\label{corr::coverage_no_shadow_exact}
  \end{cor}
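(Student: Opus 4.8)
The plan is to specialize the general coverage expression of Theorem~\ref{th::coverage} to the no-shadowing regime and then insert the three Laplace-transform Lemmas. Putting $\ncalV_k\equiv 1$ forces the conditioning variable $v_0=1$, so the outer expectation $\nbbE_{\ncalV_0}[\cdot]$ in Eq.~\ref{eq::coverage_theorem_coverage} disappears and the equivalent densities reduce to $\lam_k=\lambda_k$ and $\lamc_k=\lambda_k'$. Hence $\pc^{(i)}=\sum_{j=0}^{K}\ncalA_j\pc_j^{(i)}$ with $\pc_j^{(i)}=\pc^{(i)}_{j|v_0=1}$, which I split into the $j=0$ term and the $j\in\ncalK$ sum exactly as in Eq.~\ref{eq::coverage_no_shadowing}. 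It then remains to evaluate the integral of Eq.~\ref{eq::coverage_theorem_sh} at $v_0=1$ for each tier.

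For the cluster-center tier $j=0$ the serving BS is the unique point of $\Phi_0$, so it produces no interference and the leading factor $\ncalL_{\ncalI_{o(0,0)|\ncalV_0}}$ in Eq.~\ref{eq::coverage_theorem_sh} equals $1$. I would substitute the serving-distance density $\fservcond{0}(w_0|1)$ from Eq.~\ref{eq::serving_distance_dist}, whose CCDF product satisfies $\prod_{k=1}^{K}\overline{F}_{R_k}(\pow{0}{k}w_0)=\exp\bigl(-\pi\sum_{k=1}^{K}\lambda_k\pow{0}{k}^2 w_0^2\bigr)$ by Eq.~\ref{eq::nn_distribution_ccdf}, along with the open-access transforms $\ncalL_{\ncalI_{o(0,k)}}$ from Eq.~\ref{eq::I_o_jk} and the closed-access transforms $\ncalL_{\ncalI_{c(k)}}$ from Eq.~\ref{eq::I_c_laplace}. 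Collecting the Gaussian exponents, the $\lambda_k$ from the CCDF merges with the $\lambda_k\G$ from $\ncalL_{\ncalI_{o(0,k)}}$ into $\lambda_k(\G+1)$, the closed-access piece contributes $\lambda_k'\H$, and appending the noise factor $\exp(-\tau N_0 w_0^\alpha/P_0)$ while retaining the residual $f_{Y_0}(w_0)$ and the prefactor $1/\ncalA_0$ reproduces Eq.~\ref{eq::p_c_0_compact}.

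For $j\in\ncalK$ the cluster center is a genuine interferer, so I keep the conditional transform $\ncalL_{\ncalI_{o(j,0)|\ncalV_0}}(\cdot|1)$ from Eq.~\ref{eq::Io_j0_general}. The step I expect to require the most care is the cancellation of the truncating CCDF: the serving density $\fservcond{j}(w_j|1)$ of Eq.~\ref{eq::serving_distance_dist} carries a factor $\overline{F}_{Y_0}(\pow{j}{0}w_j)$, while Eq.~\ref{eq::Io_j0_general} carries $1/\overline{F}_{Y_0}(\pow{j}{i}w_j)$; since $P_0\equiv P_i$ makes $\pow{j}{0}=\pow{j}{i}$, these two factors cancel exactly and leave only the bare integral $\int_{y_0>\pow{j}{i}w_j} f_{Y_0}(y_0)/(1+\tau(y_0/(\pow{j}{i}w_j))^{-\alpha})\,{\rm d}y_0$ appearing in Eq.~\ref{eq::p_c_j_compact}. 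The remaining factors combine as before: the CCDFs $\prod_{k\neq j}\overline{F}_{R_k}(\pow{j}{k}w_j)$ together with the open-access transforms yield $\lambda_k(\G+1)$, the closed-access transforms yield $\lambda_k'\H$, and $f_{R_j}(w_j)=2\pi\lambda_j w_j\exp(-\pi\lambda_j w_j^2)$ from Eq.~\ref{eq::nn_distribution} supplies the prefactor $2\pi\lambda_j$ and folds the $k=j$ index (with $\pow{j}{j}=1$) into the same exponential sum. Adding the noise term and the normalizer $1/\ncalA_j$ then yields Eq.~\ref{eq::p_c_j_compact}, completing the derivation.
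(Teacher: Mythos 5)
Your proposal is correct and follows essentially the same route as the paper, which obtains this corollary by substituting the Laplace transforms of Lemmas~\ref{lemma::L_I_0(j,k)}, \ref{lemma::I_o_j0}, and \ref{lemma::interferecne_closed} into Theorem~\ref{th::coverage} and setting $\ncalV_k\equiv 1$ (so that $\lam_k=\lambda_k$, $\lamc_k=\lambda_k'$, and the deconditioning over $\ncalV_0$ disappears). Your explicit treatment of the cancellation between $\overline{F}_{Y_0}(\pow{j}{0}w_j)$ in the serving-distance density and $1/\overline{F}_{Y_0}(\pow{j}{i}w_j)$ in the cluster-center Laplace transform (valid since $P_0\equiv P_i$ gives $\pow{j}{0}=\pow{j}{i}$) correctly fills in the detail the paper leaves implicit.
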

 Note that the PDF and CCDF of $Y_k$ for $k\in\ncalK$ can be obtained by replacing $\lam_k$ ($\lamc_k$) by $\lambda_k$ ($\lambda_k'$) in Eq.~\ref{eq::nn}.
 \subsection{Bounds on Coverage Probability} In this section, we derive upper and lower bounds on  coverage probability $\pc^{(i)}$ by using the results obtained in Corollary~\ref{corr::interference_cc_lower_bound}. 
  \begin{prop}[Bounds on Coverage] 
\label{prop::bound_cov}
The conditional per-tier coverage probability for $j\in \ncalK$ can be bounded as $\pcL\leq\pc_{j|\ncalV_0}^{(i)}\leq \pcU$, where
\begin{align}
\pcU&=\int\limits_{\scriptscriptstyle w_j>0}\Scale[0.95]{\exp\bigg(-\frac{\tau N_0 w_j^{\alpha}}{P_j}\bigg)}\prod\limits_{k =1}^{K}\scalebox{0.95}{$\ncalL_{\ncalI_{o(j,k)}}\left(\frac{\tau w_j^{\alpha}}{P_i}\right)$}\scalebox{0.95}{$\ncalL_{\ncalI_{c(k)}}\left(\frac{\tau w_j^{\alpha}}{P_j}\right)$}\notag\\&\qquad\qquad\times\fservcond{j}(w_j|v_0){\rm d}w_j,\quad\text{and}
\label{eq::per_tier_cov_upper_bound}
\end{align} 
\begin{align}
\pcL&=\frac{1}{1+\tau}\int\limits_{\scriptscriptstyle w_j>0}\Scale[0.95]{\exp\bigg(-\frac{\tau N_0 w_j^{\alpha}}{P_j}\bigg)}\prod\limits_{k =1}^{K}\scalebox{0.95}{$\ncalL_{\ncalI_{o(j,k)}}\left(\frac{\tau w_j^{\alpha}}{P_i}\right)$}\notag\\&\qquad\qquad\times\scalebox{0.95}{$\ncalL_{\ncalI_{c(k)}}\left(\frac{\tau w_j^{\alpha}}{P_j}\right)$}\fservcond{j}(w_j|v_0){\rm d}w_j.
\label{eq::per_tier_cov_lower_bound}
\end{align}
Hence, from Eq.~\ref{eq::coverage_theorem_coverage}, coverage probability $\pc^{(i)}$ can be bounded by
\begin{align*}
 \pc^{(i), { L}}\leq\pc^{(i)}\leq \pc^{(i), { U}},
\end{align*}
where
\begin{align*}
  \pc^{(i), { L}}&=\nbbE_{\ncalV_0}{\ncalA_{0|\ncalV_0}\pc^{(i)}_{j|\ncalV_0}+\sum\limits_{j=1}^K\ncalA_{j|\ncalV_0}\pcL}],\\
  \pc^{(i), { U}}&=\nbbE_{\ncalV_0}{\ncalA_{0|\ncalV_0}}\pc^{(i)}_{j|\ncalV_0}\notag+\sum\limits_{j=1}^K\ncalA_{j|\ncalV_0}\pcU].
\end{align*}
\label{prob::bound}
\end{prop}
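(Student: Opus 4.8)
The plan is to obtain both bounds directly from the exact conditional per-tier coverage expression in Theorem~\ref{th::coverage} by inserting the pointwise bounds on the cluster-center interference transform established in Corollary~\ref{corr::interference_cc_lower_bound}. For $j\in\ncalK$, Theorem~\ref{th::coverage} writes $\pc^{(i)}_{j|v_0}$ as the integral over $w_j>0$ of the product of $\ncalL_{\ncalI_{o(j,0)|\ncalV_0}}(\tau w_j^{\alpha}/P_j\,|\,v_0)$ with the factors $\prod_{k=1}^{K}\ncalL_{\ncalI_{o(j,k)}}\ncalL_{\ncalI_{c(k)}}$, the noise term $\exp(-\tau N_0 w_j^{\alpha}/P_j)$, and the conditional serving-distance density $\fservcond{j}(w_j|v_0)$. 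The key observation is that every factor other than $\ncalL_{\ncalI_{o(j,0)|\ncalV_0}}$ is non-negative for all $w_j>0$, since the Laplace transforms of the non-negative interference random variables are positive, the noise exponential is positive, and $\fservcond{j}(\cdot|v_0)$ is a probability density.

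Because the remaining integrand is non-negative, I would invoke monotonicity of the integral. Applying the upper bound $\ncalL_{\ncalI_{o(j,0)|\ncalV_0}}(\tau w_j^{\alpha}/P_j\,|\,v_0)\leq 1$ from Eq.~\ref{eq::Interference_cc_lower_bound} simply deletes this factor and reproduces exactly $\pcU$ as written in Eq.~\ref{eq::per_tier_cov_upper_bound}. Applying the lower bound $\ncalL_{\ncalI_{o(j,0)|\ncalV_0}}(\tau w_j^{\alpha}/P_j\,|\,v_0)\geq \frac{1}{1+\tau}$ is equally immediate; since this lower bound is a constant independent of $w_j$, it factors out of the integral as a scalar prefactor, yielding precisely $\pcL$ in Eq.~\ref{eq::per_tier_cov_lower_bound}. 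This establishes $\pcL\leq\pc^{(i)}_{j|\ncalV_0}\leq\pcU$ for each $j\in\ncalK$.

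To pass from the per-tier bounds to the total coverage probability, I would use the decomposition $\pc^{(i)}=\nbbE_{\ncalV_0}[\sum_{j=0}^{K}\ncalA_{j|\ncalV_0}\pc^{(i)}_{j|\ncalV_0}]$ from Eq.~\ref{eq::coverage_theorem_coverage}. The $j=0$ summand is left exact: when the typical user is served by its own cluster center, tier $0$ contributes no interference, so $\pc^{(i)}_{0|\ncalV_0}$ carries no $\ncalL_{\ncalI_{o(j,0)|\ncalV_0}}$ factor to bound. For the remaining terms $j\in\ncalK$, substituting the per-tier bounds and using that the association weights $\ncalA_{j|\ncalV_0}$ are non-negative preserves the inequalities term by term; monotonicity of $\nbbE_{\ncalV_0}$ then delivers the stated bounds $\pc^{(i),L}\leq\pc^{(i)}\leq\pc^{(i),U}$.

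The argument is essentially mechanical, so I do not anticipate a substantive obstacle. The only points requiring care are recognizing that the constant factor $\frac{1}{1+\tau}$ can be pulled outside the integral, which is what renders $\pcL$ a clean scalar multiple of the bounding integrand, and the bookkeeping observation that the $j=0$ summand must be retained exactly and not perturbed by the bounding step.
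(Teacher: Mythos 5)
Your proposal is correct and takes essentially the same route as the paper's own proof: both obtain the bounds by substituting $\frac{1}{1+\tau}\leq \ncalL_{\ncalI_{o(j,0)|\ncalV_0}}\left(\cdot\right)\leq 1$ from Corollary~\ref{corr::interference_cc_lower_bound} into the exact per-tier expression of Theorem~\ref{th::coverage}, pulling the constant $\frac{1}{1+\tau}$ outside the integral for the lower bound, and leaving the $j=0$ term of Eq.~\ref{eq::coverage_theorem_coverage} exact. You merely make explicit the non-negativity and monotonicity bookkeeping that the paper's one-line proof leaves implicit.
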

\begin{proof}
Using Corollary~\ref{corr::interference_cc_lower_bound}, bounds on $\pc_{j|\ncalV_0}^{(i)}$ can be directly obtained by substituting the  bounds on $\ncalL_{\ncalI_{o(j,0)|\ncalV_0}}(\cdot)$ from Eq.~\ref{eq::Interference_cc_lower_bound} in Eq.~\ref{eq::coverage_theorem_sh}.
\end{proof}
\begin{remark}\label{rem::intuition}
The intuition behind the upper and lower bound on coverage probability is underestimating and overestimating the interference from the BS at cluster center when the typical user does not connect to it (refer to Corollary~\ref{corr::interference_cc_lower_bound}). Given that the user connects to some tier $j\in\ncalK$, no BS including that at the cluster center (equivalently the BS of tier $0$) must lie beyond the exclusion disc of radius $\bar{P}_{ji}w_j$, $w_j$ being the serving distance. Upper bound on coverage will be obtained if the interfering BS at cluster center is pushed away to infinity and lower bound is obtained if it is assumed to be located on the boundary of the exclusion disc.   
\end{remark}
For no shadowing, we can write simpler expressions for the upper and lower bound of $\pc^{(i)}$. This result is presented in the following Proposition. 

\begin{prop}[Bounds on Coverage: No Shadowing]$\pc_j^{(i)}$ can be bounded by
\begin{align}
&\frac{2\pi\lambda_j}{\ncalA_{j}(1+\tau)}\ncalH_j\leq\pc_{j}^{(i)}\leq
\frac{2\pi\lambda_j}{\ncalA_{j}}\ncalH_j,\label{eq::coverage_no_shadow_bounds}
\end{align}\label{prop::bound_no_shadow}
where $\ncalH_j = \int\limits_{{w_j>0}}\exp\bigg(-{\frac{\tau N_0 w_j^{\alpha}}{P_j}}-w_{j}^2\bigg)\overline{F}_{Y_0}(\pow{j}{0}w_j)  \:w_{j}\:{\rm d}w_{j}$. The upper and lower bounds on $\pc^{(i)}$ can be obtained by substituting $\pc^{(i)}_j$ with its upper and lower bounds in Eq.~\ref{eq::coverage_no_shadowing}.
\end{prop}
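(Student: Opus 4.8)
The plan is to obtain this as the no-shadowing specialization of Proposition~\ref{prob::bound}. Throughout I set $\ncalV_0\equiv 1$ (equivalently $v_0=1$), which collapses the outer expectation over $\ncalV_0$ in Eq.~\ref{eq::coverage_theorem_coverage} and lets me use the no-shadowing serving-distance density. Because the tier-$0$ summand is treated exactly in both the upper and lower bounds of Proposition~\ref{prob::bound} (there the cluster center \emph{is} the serving BS, so no ambiguity about its interference arises), it suffices to bound the per-tier conditional coverage $\pc^{(i)}_{j|\ncalV_0}$ of Theorem~\ref{th::coverage} for each $j\in\ncalK$ and then reassemble the total via Eq.~\ref{eq::coverage_no_shadowing}.

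Fix $j\in\ncalK$. The integrand of Eq.~\ref{eq::coverage_theorem_sh} factors into the cluster-center Laplace transform $\ncalL_{\ncalI_{o(j,0)|\ncalV_0}}$, the open-access product $\prod_{k=1}^{K}\ncalL_{\ncalI_{o(j,k)}}$, the closed-access product $\prod_{k=1}^{K}\ncalL_{\ncalI_{c(k)}}$, the noise exponential, and the serving-distance density $\fservcond{j}$. The decisive observation is that the bounds on the cluster-center term supplied by Corollary~\ref{corr::interference_cc_lower_bound}, namely $\tfrac{1}{1+\tau}\le \ncalL_{\ncalI_{o(j,0)|\ncalV_0}}\le 1$, are \emph{constants} in the integration variable $w_j$, so they pull straight out of the $w_j$-integral: replacing that factor by $1$ yields $\pcU$ and by $\tfrac{1}{1+\tau}$ yields $\pcL$, exactly as in Eqs.~\ref{eq::per_tier_cov_upper_bound}--\ref{eq::per_tier_cov_lower_bound}. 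I would then substitute the closed forms $\ncalL_{\ncalI_{o(j,k)}}=\exp(-\pi\pow{j}{k}^2\lambda_k\G w_j^2)$ from Lemma~\ref{lemma::L_I_0(j,k)}, $\ncalL_{\ncalI_{c(k)}}=\exp(-\pi\lambda_k'\H\pow{j}{k}^2 w_j^2)$ from Lemma~\ref{lemma::interferecne_closed}, and the no-shadowing serving density obtained from Lemma~\ref{lemma::serving_distance_dist_sh} at $v_0=1$, which reads $\tfrac{2\pi\lambda_j}{\ncalA_j}\,\overline{F}_{Y_0}(\pow{j}{0}w_j)\exp(-\pi\sum_{k=1}^{K}\pow{j}{k}^2\lambda_k w_j^2)\,w_j$ after merging $f_{R_j}$ (which contributes the $k=j$ term since $\pow{j}{j}=1$) with $\prod_{k\ne j}\overline{F}_{R_k}$.

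The residual calculation is bookkeeping: the three Gaussian factors combine into a single quadratic exponent with coefficient $\pi\sum_{k=1}^{K}\pow{j}{k}^2(\lambda_k(\G+1)+\lambda_k'\H)$, the prefactor $\tfrac{2\pi\lambda_j}{\ncalA_j}$ comes out, and what remains under the integral is exactly the quantity $\ncalH_j$. This gives $\tfrac{2\pi\lambda_j}{\ncalA_j(1+\tau)}\ncalH_j\le\pc_{j}^{(i)}\le\tfrac{2\pi\lambda_j}{\ncalA_j}\ncalH_j$, and inserting these into Eq.~\ref{eq::coverage_no_shadowing} alongside the exact tier-$0$ summand produces the claimed bounds on $\pc^{(i)}$. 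I expect the only delicate point to be the careful aggregation of the three separately-indexed Gaussian exponents (open access, closed access, and the product CCDFs inside $\fservcond{j}$) into a single quadratic, together with confirming that the $k=j$ contribution is correctly carried by $f_{R_j}$; once the cluster-center bound is recognized as a $w_j$-free constant that factors out of the integral, the remainder is routine.
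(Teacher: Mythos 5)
Your proof is correct and matches the paper's (implicit) argument: the paper states this proposition without a separate proof precisely because it is the no-shadowing specialization of Proposition~\ref{prob::bound} --- pull the constant bounds $\tfrac{1}{1+\tau}\le\ncalL_{\ncalI_{o(j,0)}}\le 1$ of Corollary~\ref{corr::interference_cc_lower_bound} out of the $w_j$-integral, keep the tier-$0$ term exact, and substitute the closed-form Laplace transforms of Lemmas~\ref{lemma::L_I_0(j,k)} and \ref{lemma::interferecne_closed} together with the $v_0=1$ serving density from Lemma~\ref{lemma::serving_distance_dist_sh}. Your bookkeeping also makes explicit the coefficient $\pi\sum_{k=1}^{K}\pow{j}{k}^2\left(\lambda_k(\G+1)+\lambda_k'\H\right)$ multiplying $w_j^2$ in the exponent of $\ncalH_j$, which the proposition's statement abbreviates as a bare $w_j^2$ (evidently a typo, since consistency with Eq.~\ref{eq::p_c_j_compact} and with Propositions~\ref{prop::coverage_thomas_no_shadow_bound} and \ref{prop::coverage_matern_no_shadow_bound} requires the full coefficient).
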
  
It can be readily observed from Proposition~\ref{prop::bound_cov} and \ref{prop::bound_no_shadow} that the bounds on per-tier coverage probability (for $j\in \ncalK$) are simplified expressions due to the elimination of one integration by bounding  $\ncalL_{\ncalI_{o(j,0)}}(\cdot)$. In the following Propositions, we present closed form bounds on coverage probability under no shadowing for Thomas and \matern\  cluster processes in an interference limited network. The tightness of the proposed bounds will be investigated in Section~\ref{sub::tightness_of_bounds}. 
\begin{prop}[Bounds on Coverage: Thomas cluster process]\label{prop::coverage_thomas_no_shadow_bound}
For an interference limited network ($N_0=0$), when $\Phi^u_{i}$ is Thomas cluster process, $\pc^{(i)}$ can be bounded by
\begin{align}
&\frac{\lambda_0}{\ncalM_0}+\frac{1}{1+\tau}\sum\limits_{j=1}^K\frac{\lambda_j}{\ncalM_j}\leq\pc^{(i)}
\leq \sum\limits_{j=0}^K\frac{\lambda_j}{\ncalM_j},
\label{eq::bound_cov_thomas}
\end{align}
where $\ncalM_j=\lambda_0+\sum\limits_{k=1}^K\pow{j}{k}^2(\lambda_k(\G+1)+\lambda_k'\H)$. 
\end{prop}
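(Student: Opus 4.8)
The plan is to specialize the general no-shadowing result of Corollary~\ref{corr::coverage_no_shadow_exact} to the Thomas cluster process and evaluate the decomposition $\pc^{(i)}=\ncalA_0\pc_0^{(i)}+\sum_{j=1}^{K}\ncalA_j\pc_j^{(i)}$ of Eq.~\ref{eq::coverage_no_shadowing} term by term. The decisive structural observation is that every per-tier expression in Corollary~\ref{corr::coverage_no_shadow_exact} carries a reciprocal association factor (a $1/\ncalA_0$ in Eq.~\ref{eq::p_c_0_compact} and a $2\pi\lambda_j/\ncalA_j$ in Eq.~\ref{eq::p_c_j_compact}), so when each $\pc_j^{(i)}$ is reweighted by $\ncalA_j$ the association probabilities cancel and each summand collapses to a single radial integral. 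Moreover, for the Thomas process with no shadowing one has $\lambda_0=\frac{1}{2\pi\sigma_i^2}$, so the distance laws of Eq.~\ref{eq::thomas} become the Gaussian weights $\overline{F}_{Y_0}(y_0)=\exp(-\pi\lambda_0 y_0^2)$ and $f_{Y_0}(w_0)=2\pi\lambda_0 w_0\exp(-\pi\lambda_0 w_0^2)$; every integral therefore reduces to the elementary form $\int_0^\infty w\,e^{-aw^2}\,\mathrm{d}w=\tfrac{1}{2a}$, which is exactly what yields the rational closed forms $\lambda_j/\ncalM_j$.

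First I would treat the self-cluster term $j=0$ exactly, since the typical user connects to its own cluster center and no interference from tier $0$ enters. Substituting the Thomas $f_{Y_0}$ into Eq.~\ref{eq::p_c_0_compact} with $N_0=0$, the weighted term $\ncalA_0\pc_0^{(i)}$ becomes $\int_0^\infty 2\pi\lambda_0 w_0\exp(-\pi\ncalM_0 w_0^2)\,\mathrm{d}w_0$, where the Gaussian coefficient $\lambda_0+\sum_{k=1}^{K}\pow{0}{k}^2(\lambda_k(\G+1)+\lambda_k'\H)$ is precisely $\ncalM_0$ (the leading $\lambda_0$ being $\pow{0}{0}^2\lambda_0$). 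The elementary integral then gives $\ncalA_0\pc_0^{(i)}=\lambda_0/\ncalM_0$, a quantity that sits unchanged in both the upper and the lower bound.

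Next I would bound the cross-tier terms $j\in\ncalK$ via Proposition~\ref{prop::bound_no_shadow}, whose two-sided estimate descends from sandwiching the cluster-center interference transform between $\frac{1}{1+\tau}$ and $1$ in Corollary~\ref{corr::interference_cc_lower_bound}. Inserting the Thomas CCDF $\overline{F}_{Y_0}(\pow{j}{0}w_j)=\exp(-\pi\lambda_0\pow{j}{0}^2 w_j^2)$ into $\ncalH_j$ and setting $N_0=0$ fuses it with the interference exponential $\exp\!\big(-\pi\sum_{k=1}^{K}\pow{j}{k}^2(\lambda_k(\G+1)+\lambda_k'\H)w_j^2\big)$ into a single Gaussian whose coefficient the plan is to identify with $\ncalM_j$; the radial integral then returns $\ncalH_j=\frac{1}{2\pi\ncalM_j}$. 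After multiplying by $2\pi\lambda_j$ (the cancellation of $\ncalA_j$ noted above), this delivers $\frac{1}{1+\tau}\frac{\lambda_j}{\ncalM_j}\le\ncalA_j\pc_j^{(i)}\le\frac{\lambda_j}{\ncalM_j}$, the factor $\frac{1}{1+\tau}$ surviving only on the lower side.

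Finally I would reassemble the bounds through Eq.~\ref{eq::coverage_no_shadowing}: the exact term $\lambda_0/\ncalM_0$ plus the summed cross-tier estimates yields the lower bound $\frac{\lambda_0}{\ncalM_0}+\frac{1}{1+\tau}\sum_{j=1}^{K}\frac{\lambda_j}{\ncalM_j}$ and, folding the $j=0$ term back into the sum, the upper bound $\sum_{j=0}^{K}\frac{\lambda_j}{\ncalM_j}$, matching Eq.~\ref{eq::bound_cov_thomas}. I expect the main obstacle to be the coefficient bookkeeping in the cross-tier integrals: one must verify that the interference coefficients $\pow{j}{k}^2(\lambda_k(\G+1)+\lambda_k'\H)$ together with the contribution supplied by the cluster-center exclusion CCDF $\overline{F}_{Y_0}(\pow{j}{0}w_j)$ assemble into exactly the quantity $\ncalM_j$, and that the reweighting by $\ncalA_j$ cancels the $2\pi\lambda_j/\ncalA_j$ prefactor without residue. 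Once this alignment is checked, the remaining work is only the elementary Gaussian integrations already isolated above.
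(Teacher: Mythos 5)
Your proposal is correct and is essentially the paper's own proof spelled out in more detail: the $j=0$ term is computed exactly by substituting the Thomas PDF (Eq.~\ref{eq::dist_thomas}) into Eq.~\ref{eq::p_c_0_compact}, the terms $j\in\ncalK$ are bounded by substituting the Thomas CCDF (Eq.~\ref{eq::dist_thomas_ccdf}) into Proposition~\ref{prop::bound_no_shadow} (whose two-sided estimate descends from Corollary~\ref{corr::interference_cc_lower_bound}), and every integral reduces to $\int_0^\infty w\,e^{-aw^2}\,{\rm d}w=\tfrac{1}{2a}$, after which the weighted sum of Eq.~\ref{eq::coverage_no_shadowing} gives Eq.~\ref{eq::bound_cov_thomas}. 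On the bookkeeping point you rightly flag as the main obstacle: the cross-tier Gaussian coefficient actually assembles to $\pow{j}{0}^2\lambda_0+\sum_{k=1}^K\pow{j}{k}^2(\lambda_k(\G+1)+\lambda_k'\H)$, so the leading $\lambda_0$ in the stated $\ncalM_j$ should be read as carrying the factor $\pow{j}{0}^2=(P_i/P_j)^{2/\alpha}$ (equal to one only when $P_j=P_i$) --- a slip in the proposition's statement rather than in your argument, consistent with the $\pow{j}{0}^2\lam_0$ term appearing in the denominator of Corollary~\ref{cor::association_thomas}.
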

\begin{proof}
 {
  $\pc_0^{(i)}$ can be obtained by substituting $f_{Y_{0}}(\cdot)$ by Eq.~\ref{eq::dist_thomas} in Eq.~\ref{eq::p_c_0_compact}. This gives the first term $\lambda_0/\ncalM_0$ in the expressions of the two bounds.  $\pc_j^{(i),L}$ and $\pc_j^{(i),U}$ can be obtained by substituting $\overline{F}_{Y_0}(\cdot)$ from Eq.~\ref{eq::dist_thomas_ccdf} to Eq.~\ref{eq::coverage_no_shadow_bounds}. The result follows from evaluation of the integrals. }
\end{proof}
\begin{prop}[Bounds on Coverage: \matern\ cluster process]\label{prop::coverage_matern_no_shadow_bound}
For an interference limited network ($N_0=0$), when $\Phi^u_{i}$ is \matern\  cluster process, $\pc^{(i)}$ can be bounded by
\begin{align}
&\ncalP_0+\frac{1}{1+\tau}\sum\limits_{k=1}^K\ncalP_j\leq\pc^{(i)}
\leq \ncalP_0+\sum\limits_{k=1}^K\ncalP_j,
\label{eq::bound_cov_matern}
\end{align}
where $\ncalP_j=\ncalA_j\pc_j^{(i)}$ which can be obtained by replacing $\ncalZ_j$ by $\pi\sum\limits_{k=1}^K\pow{j}{k}^2(\lambda_k(\G+1)+\lambda_k'\H)$, $\lam_k\ (\lam_k)$ by $\lambda_k\ (\lambda'_k)$ and putting $\ncalV_0\equiv 1$ in the expression of conditional association probability when $\Phi^u_{i}$ is a \matern\ cluster process (Eq.~\ref{eq::association::matern}).     
\end{prop}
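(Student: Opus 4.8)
The plan is to mirror the proof of Proposition~\ref{prop::coverage_thomas_no_shadow_bound}, exploiting the no-shadowing specialization recorded in Proposition~\ref{prop::bound_no_shadow} together with the decomposition $\pc^{(i)}=\ncalA_0\pc_0^{(i)}+\sum_{j=1}^K\ncalA_j\pc_j^{(i)}$ from Eq.~\ref{eq::coverage_no_shadowing}. The key structural observation is that the tier-$0$ term needs no bounding (when the typical user connects to its own cluster centre there is no stray tier-$0$ interferer to over- or under-estimate), whereas every term with $j\in\ncalK$ inherits the two-sided bound of Proposition~\ref{prop::bound_no_shadow}. Hence I would split the argument into the exact tier-$0$ contribution and the bounded tiers $1,\dots,K$, and show that after inserting the \matern\ densities each piece collapses to the claimed $\ncalP_j$.

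First I would compute $\ncalP_0=\ncalA_0\pc_0^{(i)}$ exactly by substituting the \matern\ PDF $f_{Y_0}(w_0)=2w_0/\Ri^2$ from Eq.~\ref{eq::dist_matern} into Eq.~\ref{eq::p_c_0_compact} and setting $N_0=0$. Because $f_{Y_0}$ is supported on $[0,\Ri]$, the integral reduces to $\int_0^{\Ri} (2w_0/\Ri^2)\exp(-\ncalZ_0^{\rm eff} w_0^2)\,{\rm d}w_0$ with effective coefficient $\ncalZ_0^{\rm eff}=\pi\sum_{k=1}^K\pow{0}{k}^2(\lambda_k(\G+1)+\lambda_k'\H)$, which evaluates in closed form to $\frac{1}{\Ri^2\ncalZ_0^{\rm eff}}(1-\exp(-\ncalZ_0^{\rm eff}\Ri^2))$. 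This is precisely the $j=0$ branch of Eq.~\ref{eq::association::matern} under the prescribed substitution $\ncalZ_0\to\ncalZ_0^{\rm eff}$ and $\ncalV_0\equiv1$, so it enters both bounds unchanged, explaining the leading $\ncalP_0$ on each side of Eq.~\ref{eq::bound_cov_matern}.

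Next, for $j\in\ncalK$ I would invoke Proposition~\ref{prop::bound_no_shadow} and multiply through by $\ncalA_j$ to obtain $\tfrac{1}{1+\tau}\,2\pi\lambda_j\ncalH_j\le\ncalA_j\pc_j^{(i)}\le 2\pi\lambda_j\ncalH_j$, then insert the \matern\ CCDF $\overline{F}_{Y_0}(\pow{j}{0}w_j)=(\Ri^2-\pow{j}{0}^2 w_j^2)/\Ri^2$ from Eq.~\ref{eq::dist_matern_ccdf} into $\ncalH_j$ with $N_0=0$, the Gaussian exponent carrying the effective coefficient $\pi\sum_{k=1}^K\pow{j}{k}^2(\lambda_k(\G+1)+\lambda_k'\H)$. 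The finite support of the CCDF truncates the $w_j$-integral at $\Ri/\pow{j}{0}$, so $2\pi\lambda_j\ncalH_j$ becomes a combination of $\int w_j\exp(-(\cdot)w_j^2)\,{\rm d}w_j$ and $\int w_j^3\exp(-(\cdot)w_j^2)\,{\rm d}w_j$ over $[0,\Ri/\pow{j}{0}]$. I would recognise that this is exactly the integral already evaluated in the $j\in\ncalK$ branch of the \matern\ association probability (Corollary~\ref{cor::association_matern}), so $2\pi\lambda_j\ncalH_j$ equals the stated $\ncalP_j$ once $\ncalZ_j$ is replaced by $\pi\sum_{k=1}^K\pow{j}{k}^2(\lambda_k(\G+1)+\lambda_k'\H)$, the shadowed densities $\lam_k,\lamc_k$ are replaced by $\lambda_k,\lambda_k'$, and $\ncalV_0\equiv1$. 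Summing the exact tier-$0$ term with the bounded tiers and pulling the common factor $\tfrac{1}{1+\tau}$ out of the lower bounds then yields Eq.~\ref{eq::bound_cov_matern}.

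The main obstacle is the bookkeeping in the second integral: unlike the Thomas case, where $\overline{F}_{Y_0}$ is a pure Gaussian and the $w_j$-integral runs over $(0,\infty)$, the \matern\ CCDF is a truncated polynomial, so I must carry the finite upper limit $\Ri/\pow{j}{0}$ through both the $w_j$ and $w_j^3$ moments and verify term by term that the resulting $(1-\exp(\cdot))$ structure coincides with the closed form of Eq.~\ref{eq::association::matern} under the parameter substitution. The remainder is the routine evaluation of elementary Gaussian moment integrals and carries no conceptual difficulty once the correspondence with the association-probability integral is made explicit.
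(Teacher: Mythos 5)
Your proposal is correct and follows essentially the same route as the paper's own (very terse) proof: keep the tier-$0$ contribution exact via Eq.~\ref{eq::p_c_0_compact}, bound the tiers $j\in\ncalK$ via Proposition~\ref{prop::bound_no_shadow}, substitute the \matern\ PDF/CCDF from Eqs.~\ref{eq::dist_matern}--\ref{eq::dist_matern_ccdf}, and recognize that the resulting truncated Gaussian-moment integrals are exactly those already evaluated in Corollary~\ref{cor::association_matern}, so the closed form follows under the stated parameter substitution. The only difference is that you spell out the bookkeeping (the truncation at $\Ri/\pow{j}{0}$ and the $w_j$, $w_j^3$ moments) that the paper leaves implicit.
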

\begin{proof}{The proof follows  the similar lines of the previous one, except the substitution of  $f_{Y_0}(\cdot)$ and $\overline{F}_{Y_0}(\cdot)$ by the PDF and CCDF of $Y_0$ for \matern\ cluster process mentioned in Eq.~\ref{eq::dist_matern} and Eq.~\ref{eq::dist_matern_ccdf} respectively. The integrals will be exactly in the similar forms as those appearing in the proof of Corollary~\ref{cor::association_matern} and the final result follows on the same line of the proof.  }
\end{proof}

\subsection{Asymptotic Analysis of Coverage}
\label{subsec::asymptotic}
In this section, we examine the limiting behaviour of the coverage probability expressions with respect to the cluster size. As the cluster size increases, {the typical user is pushed away from the cluster center, which reduces its association probability with the BS located at its cluster center}. Also the interference and coverage provided by the BS at the cluster center will be diminished due to reduced received signal power from this BS.

As the cluster size increases, let us assume that the distance of the typical user from the cluster center $Y_0$ is scaled to $Z=\zeta Y_0$, where $\zeta>1$ is the scaling factor. Then, the PDF  of $Z$ is $f_Z(z)= \frac{1}{\zeta}f_{Y_0}(z/\zeta)$. The scaling of the distance from the cluster center and increasing the cluster size are equivalent, for instance,  when $\Phi_i^u$ is a \matern\ cluster process,  $0<Y_0<\Ri$ implies that $0<Z<\std \Ri$. When $\Phi^u_i$ is Thomas cluster process,  since the users have a Gaussian distribution around cluster center, $99.7\%$ of the total users in cluster will lie within a disc of radius $3 \: \sigma_i$. 
Thus $\sigma_i$ can be treated as the metric of cluster size and $\sigma_i$ scales with $\std$.  In the following lemma, we investigate the limiting nature of coverage as cluster size goes to infinity.   To retain the  simplicity of  expressions, we restrict the following analysis for  interference limited networks ($N_0 = 0$). However, this can be easily extended for $\sinr$-based coverage probability without much effort.    

\begin{lemma}[Convergence]\label{lemma::convergence}
If distance of the a typical user and cluster center $Y_0$ is scaled by $\zeta$ ($\zeta>1$), then the following limit can be established: $\lim\limits_{\substack{\std\to\infty}}\pc^{(i)} = $
\begin{align}
\pc^{\rm(PPP)}\delequal\sum\limits_{j=1}^K\frac{\lam_j}{\sum\limits_{k=1}^K\pow{j}{k}\bigg(\lam_k(\G+1)+\lamc_k\H\bigg)}.
\label{eq::lemma_limit}
\end{align}
\label{lemma::lemma_limit}
\end{lemma}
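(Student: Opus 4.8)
The plan is to start from the interference-limited ($N_0=0$) coverage expression of Corollary~\ref{corr::coverage_no_shadow_exact}, namely $\pc^{(i)} = \ncalA_0\pc_0^{(i)} + \sum_{j=1}^K \ncalA_j\pc_j^{(i)}$ with the per-tier terms $\ncalA_0\pc_0^{(i)}$ and $\ncalA_j\pc_j^{(i)}$ read off from Eq.~\ref{eq::p_c_0_compact} and Eq.~\ref{eq::p_c_j_compact}, and to track how each term behaves once $Y_0$ is replaced by the scaled distance $Z=\std Y_0$ (whose density is $f_Z(z)=\std^{-1}f_{Y_0}(z/\std)$) as $\std\to\infty$. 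The guiding heuristic is that a cluster center pushed off to infinity can neither serve nor interfere with the typical user, so tier $0$ should drop out of both the association and the interference bookkeeping, leaving exactly the standard $K$-tier PPP coverage $\pc^{\rm(PPP)}$.

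First I would dispose of the tier-$0$ contribution. Substituting $f_Z$ into Eq.~\ref{eq::p_c_0_compact} and changing variables $u=w_0/\std$ turns $\ncalA_0\pc_0^{(i)}$ into $\int_0^\infty \exp(-c_0\std^2 u^2)\,f_{Y_0}(u)\,{\rm d}u$, where $c_0=\pi\sum_{k=1}^K \pow{0}{k}^2(\lambda_k(\G+1)+\lambda_k'\H)>0$. Since the integrand is dominated by the integrable $f_{Y_0}(u)$ and vanishes pointwise for every $u>0$, dominated convergence gives $\ncalA_0\pc_0^{(i)}\to 0$.

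Next I would analyze each $\ncalA_j\pc_j^{(i)}$ for $j\in\ncalK$. Writing Eq.~\ref{eq::p_c_j_compact} with $f_Z$ and applying the same substitution $u=y_0/\std$ inside the inner integral, the lower limit $\pow{j}{i}w_j/\std\to 0$ and the factor $(\std u/(\pow{j}{i}w_j))^{-\alpha}\to 0$ pointwise, so a first dominated-convergence pass (dominating function $f_{Y_0}(u)$) shows the inner integral converges to $\int_0^\infty f_{Y_0}(u)\,{\rm d}u = 1$ for each fixed $w_j$. Because the resulting outer integrand is bounded uniformly in $\std$ by the integrable $\exp(-A_j w_j^2)\,w_j$ with $A_j=\pi\sum_{k=1}^K\pow{j}{k}^2(\lambda_k(\G+1)+\lambda_k'\H)$, a second dominated-convergence pass lets me move the limit inside the outer integral, leaving $2\pi\lambda_j\int_0^\infty \exp(-A_j w_j^2)\,w_j\,{\rm d}w_j = \lambda_j\big/\big(\sum_{k=1}^K\pow{j}{k}^2(\lambda_k(\G+1)+\lambda_k'\H)\big)$. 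Summing over $j=1,\dots,K$ then recovers $\pc^{\rm(PPP)}$.

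The main obstacle is the rigorous justification of these two interchanges of limit and integral, in particular the nested one: I must exhibit a single $\std$-independent integrable dominator for the inner and outer integrals simultaneously, which the bounds $\le f_{Y_0}(u)$ and $\le \exp(-A_j w_j^2)\,w_j$ supply, using only that every Laplace-transform and CCDF factor lies in $[0,1]$. A secondary point worth one sentence is that the argument is distribution-free, relying solely on $\int_0^\infty f_{Y_0}=1$ and the finiteness of the constants $c_0,A_j$; hence the same limit holds for the Thomas and \matern\ specializations, consistent with Propositions~\ref{prop::coverage_thomas_no_shadow_bound} and~\ref{prop::coverage_matern_no_shadow_bound}.
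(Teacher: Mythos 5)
Your proof is correct and follows essentially the same route as the paper's: both work from the no-shadowing compact expressions of Corollary~\ref{corr::coverage_no_shadow_exact}, show that the inner integral (the Laplace transform of the cluster-center interference) tends to $1$ under the scaling $Z=\std Y_0$, and show that the tier-$0$ contribution $\ncalA_0\pc_0^{(i)}$ vanishes; your version is in fact more careful, since the paper asserts the tier-$0$ term ``obviously'' goes to zero and passes limits inside the integrals without exhibiting dominators, both of which you justify explicitly by dominated convergence. The only thing to add is the paper's one-line reduction: the lemma is stated with the shadowing-transformed densities $\lam_k,\lamc_k$, so you should note (as the paper does) that conditioning on $\ncalV_0=v_0$ leaves every limiting argument and the final limit unchanged, whence your no-shadowing computation is without loss of generality.
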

\begin{proof}
See Appendix~\ref{app::lemma_limit}.
\end{proof}
\begin{remark}\label{rem::assymptote}
{In Lemma~\ref{lemma::convergence}, we formally claim that, irrespective of the distribution of $Y_0$, if the size of the cluster is expanded, the total coverage probability $\pc^{(i)}$ converges to $\pc^{(\rm PPP)}$, i.e., the coverage probability obtained for a typical user under the assumption of PPP distribution of users independent of the BS point processes, which is derived in \cite{xia}.}
\end{remark}

\subsection{Overall Coverage Probability}
The results so far are concerned with the users  belonging to $\Phi_i^{u}$. Recall that in our system model we considered that the users form a mixed point process consisting of $\Phi_i^{u}$ ($i\in\ncalB$)  and $\Phi^{u\rm(PPP)}$. So the overall coverage probability will be a combination of all these individual coverage probabilities $\pc^{(i)}$ ($i\in\ncalB$) and also $\pc^{\rm(PPP)}$ corresponding to the users in $\Phi^{u\rm(PPP)}$,  which are distributed independently of the BS locations. 
The overall user point process can be expressed as $\Phi^u\equiv\Phi^{u\rm(PPP)}\cup\left(\bigcup\limits_{i\in\ncalB}\Phi^u_{i}\right)$. The average number of points of $\Phi^u$ in any given set $A \subset \mathbb{R}^2$ is given by
\begin{align*}
\nbbE(\Phi^u(A))=\nbbE(\Phi^{u\rm(PPP)}(A))+\sum_{i\in\ncalB}\nbbE(\Phi^u_i(A)) ,
\end{align*}
  where  $\nbbE(\Phi^{u\rm(PPP)}(A))=\lambda^{\rm(PPP)}A, \nbbE(\Phi^u_i(A))=N_i\lambda_i A$. To avoid notational complication, we use the symbol $\Phi$ to denote a point process as well as the associated counting measure.   Since each point has an equal chance to be selected as location of the typical user, the probability that a randomly chosen user from $\Phi^u$ belongs to $\Phi^{u\rm(PPP)}$ $(\Phi_i^{u})$, denoted by $p_0$ $(p_i)$ respectively, is
  \begin{align*}
  p_0=\frac{\lambda^{\rm(PPP)}}{\lambda^{\rm(PPP)}+\sum\limits_{j\in\ncalB}N_j\lambda_j} \ \text{and}\ 
p_i=\frac{N_i\lambda_i}{\lambda^{\rm(PPP)}+\sum\limits_{j\in\ncalB}N_j\lambda_j},
\end{align*}   
where  $N_i$ is the average number of users per cluster of $\Phi_i^{u}$ ($i\in\ncalB$).   Now, using these probabilities $p_0$ and $p_i$, the overall coverage probability is formally stated in the next Theorem. 
\begin{theorem}[Overall Coverage Probability]\label{th::mixed} Overall  coverage probability with respect to any randomly chosen user in a $K$-tier HetNet with mixed user distribution is:
\begin{align}
\pc&=p_0\pc^{\rm(PPP)}+\sum\limits_{i\in\ncalB}p_i\pc^{(i)},
\end{align} 
where $\pc^{\rm(PPP)}$ and $\pc^{(i)}$ are given by Eq.~\ref{eq::lemma_limit} and Eq.~\ref{eq::coverage_theorem_coverage}, respectively.
\end{theorem}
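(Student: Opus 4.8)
The plan is to recognize Theorem~\ref{th::mixed} as a direct consequence of the law of total probability applied through the Palm theory of the superposed user process, exploiting the crucial fact that users do not generate interference: in the $\sinr$ expression of Eq.~\ref{eq::sir_def}, both the desired signal and the aggregate interference depend only on the BS processes $\{\Phi_k,\Phi_k'\}$ and on the \emph{location} of the typical user relative to them. Consequently, the only role played by the user point process is to fix the spatial law of the typical user's position with respect to the BSs, and the per-process coverage probabilities $\pc^{(i)}$ and $\pc^{\rm(PPP)}$ already encapsulate this for each constituent process.

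First I would write the overall coverage probability as a spatial average (Palm expectation) over all users of the mixed process $\Phi^u=\Phi^{u\rm(PPP)}\cup\big(\bigcup_{i\in\ncalB}\Phi_i^u\big)$. Concretely, for a large observation window $A$,
\begin{align*}
\pc=\lim_{A\uparrow\mathbb{R}^2}\frac{\nbbE\Big[\sum_{\mathbf{u}\in\Phi^u\cap A}\mathbf{1}(\mathbf{u}\text{ covered})\Big]}{\nbbE[\Phi^u(A)]}.
\end{align*}
By linearity of expectation the numerator splits into one sum over $\Phi^{u\rm(PPP)}\cap A$ and one sum over each $\Phi_i^u\cap A$; this step uses only that the superposition is a disjoint union of the constituent processes.

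Next I would normalize each piece by the mean measure of its own process. Using the refined Campbell--Mecke theorem for each (independent) constituent process, the quantity $\nbbE[\sum_{\mathbf{u}\in\Phi^{u\rm(PPP)}\cap A}\mathbf{1}(\cdot)]/\nbbE[\Phi^{u\rm(PPP)}(A)]$ converges to the typical-user coverage probability $\pc^{\rm(PPP)}$ of Eq.~\ref{eq::lemma_limit}, and likewise the $\Phi_i^u$ term converges to $\pc^{(i)}$ of Eq.~\ref{eq::coverage_theorem_coverage}; here I would invoke that the BS processes seen by a user are statistically identical across all constituent user processes, so that conditioning the typical point of the superposition on belonging to a given component returns exactly the Palm distribution of that component, with the remaining user components irrelevant because they never enter the $\sinr$. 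Finally, since $\nbbE(\Phi^{u\rm(PPP)}(A))=\lambda^{\rm(PPP)}|A|$ and $\nbbE(\Phi_i^u(A))=N_i\lambda_i|A|$, the factor $|A|$ cancels and the normalizing ratios tend to $p_0$ and $p_i$ respectively, yielding $\pc=p_0\pc^{\rm(PPP)}+\sum_{i\in\ncalB}p_i\pc^{(i)}$.

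The main obstacle is this middle step: justifying that the Palm expectation of the superposition decomposes as a $p$-weighted mixture of the Palm expectations of the components. This is precisely where independence of the user sub-processes and the absence of user-generated interference must be used carefully---one has to argue that enlarging the tagged user's configuration by the other (independent) user components leaves its coverage event unchanged, so that each component contributes its own per-process coverage probability unmodified. Everything else is bookkeeping: linearity of expectation, the Campbell--Mecke normalization, and cancellation of $|A|$.
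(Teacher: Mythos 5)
Your proposal is correct and follows essentially the same route as the paper: the paper offers no formal proof beyond defining $p_0$ and $p_i$ as the intensity-proportional probabilities that a randomly chosen user belongs to $\Phi^{u\rm(PPP)}$ or $\Phi_i^u$ and then invoking the law of total probability, which is precisely the decomposition you formalize. Your Campbell--Mecke/Palm-calculus treatment (splitting the superposition, using independence of the user components and the fact that users generate no interference, and cancelling $|A|$ in the intensity ratios) is simply the rigorous version of that same argument.
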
 
\section{Numerical Results and Discussions}
\label{section::numerical_results}
\subsection{Validation of Results}\label{subsec::validation}
In this section, the analytical results derived so far are validated and key insights for the new HetNet system model with users clustered around BSs are provided.  For the sake of concreteness, we restrict our simulation to two tiers: one macrocell tier ($\Phi_1^{\rm(BS,o)}$) with density $\lambda_1$ with all open access BSs, and one small cell tier ($\Phi_2^{\rm(BS)}$) with a mix of open and closed access BSs. For $\Phi_2^{\rm(BS)}$, the open and closed access BS densities are $\lambda_2$ and $\lambda_2'$, respectively. We choose $\lambda_2 = \lambda_2' = 100\lambda_1 = 100$ BSs per $\pi(500)^2$ m$^2$. We assume the transmit powers are related by $P_1=10^3{P_2}$. The user process is considered to be $\Phi^u_{2}$ only, i.e., a PCP around $\Phi_2^{\rm(BS)}$. For every realization, a BS in the $i^{th}$ tier is randomly selected and location of a typical user is generated according to the density function of {(i)} Thomas cluster process (Eq.~\ref{eq::thomas}), and {(ii)} \matern\  cluster process (Eq.~\ref{eq::matern}).  For shadowing,
 we have chosen log-normal distribution parameters as $\mu_k=0$, $\eta_k=8\ {\rm dB}$, $4\ {\rm dB}$ and $0\ {\rm dB}$ (no shadowing)  for all $k=0,1,2$. 
\begin{figure}[t]
  \centering
  \subfloat[Users in Thomas cluster process\label{fig::comparison_thomas_sh}]{\includegraphics[width=0.4\textwidth]{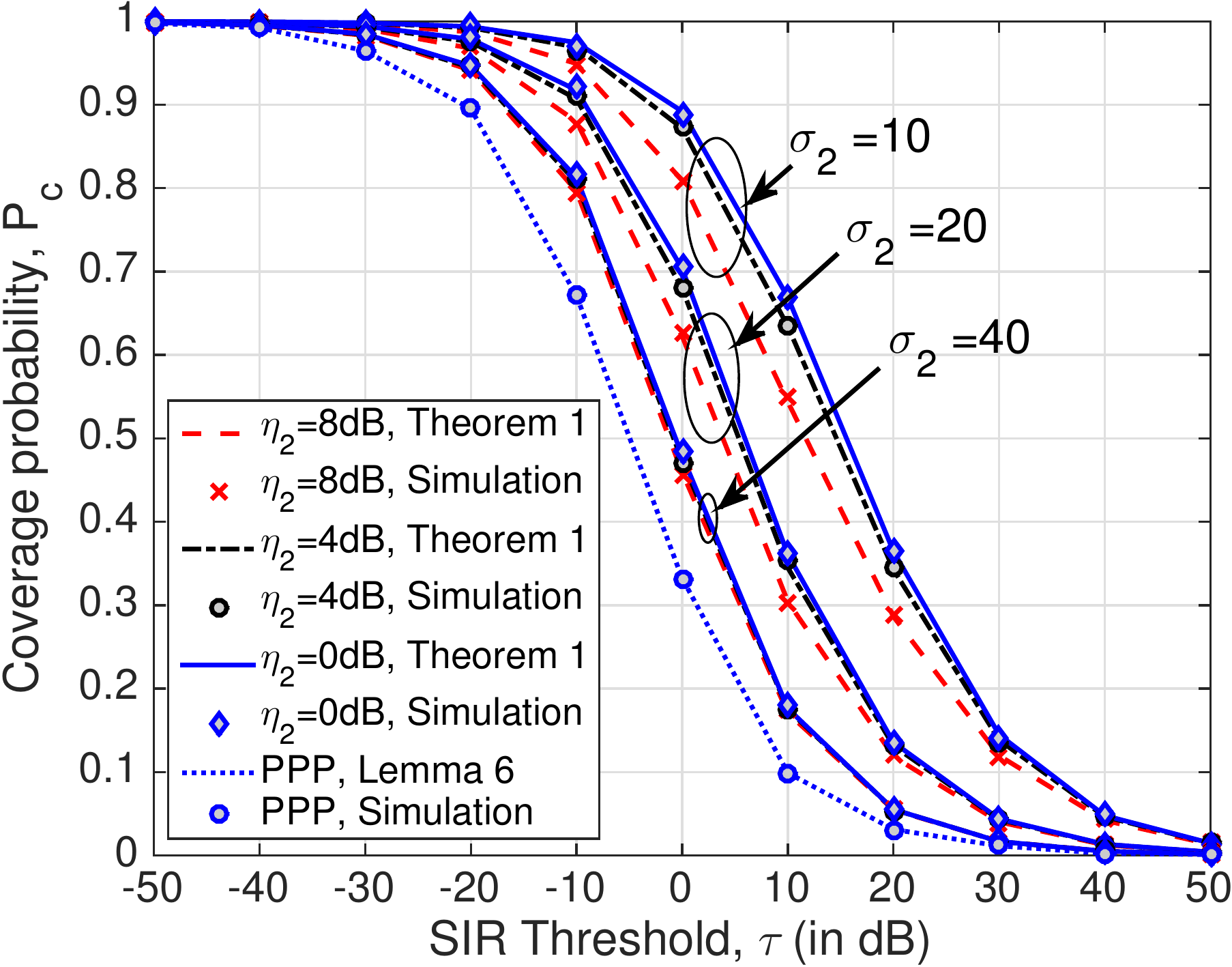}}\quad
  \subfloat[Users in \matern\  cluster process\label{fig::comparison_matern_sh}]{\includegraphics[width=0.4\textwidth]{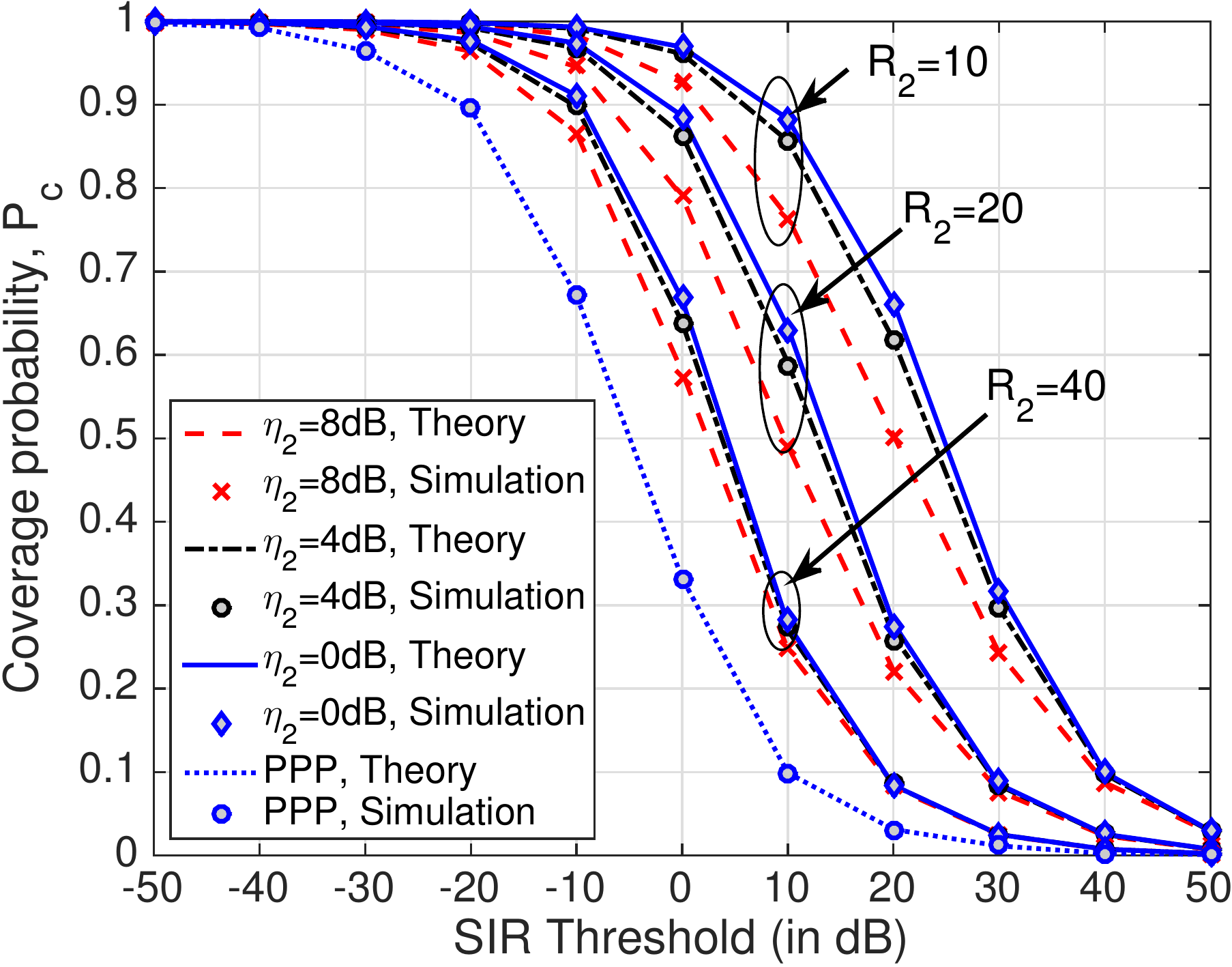}}
  \caption{\small Comparison of coverage probabilities with cluster size for various shadowing environments. The baseline case when the user distribution is a PPP is also included. The lines and markers correspond to the analytical and simulation results, respectively. }
  \label{fig::comparison_coverage} 
\end{figure}
In \figref{fig::comparison_coverage}, the coverage probability ($\pc$, equivalently $\pc^{(2)}$) is plotted  for different values of $\sir$ threshold $\tau$ and cluster size (i.e. different $\sigma_2$-s for Thomas and ${\ncalR}_{2}$-s for \matern\  cluster processes) {for an interference limited network ($N_0 = 0$). The validity of this assumption will be justified in the next subsection.} It can be observed that the analytically obtained results exactly match the simulation results. For comparison, $\pc^{\rm(PPP)}$, i.e., the coverage probability assuming homogeneity of users (i.e., independent PPP assumption) is also plotted. The plots clearly indicate that under clustering, $\pc$ is significantly higher than $\pc^{\rm(PPP)}$ and increases for denser clusters.  Also the convergence towards $\pc^{\rm(PPP)}$ is evident as cluster size increases.  
In \figref{fig::association},  the association probabilities are plotted for different cluster size with $\eta_k=4\ {\rm dB}$. The figure clearly illustrates  that a user is more likely to be served by its cluster center if the distribution is more ``dense'' around the cluster center. As the cluster expands, association probability to the BS at cluster center (equivalently the $0^{th}$ tier) decreases whereas the association probabilities to the other open access tiers increase.   
\begin{figure}[t]
\centering
\subfloat[Users in Thomas cluster process\label{fig::association_thomas}]{\includegraphics[trim={0cm 0cm 0cm 0cm},clip,width=0.4\textwidth]{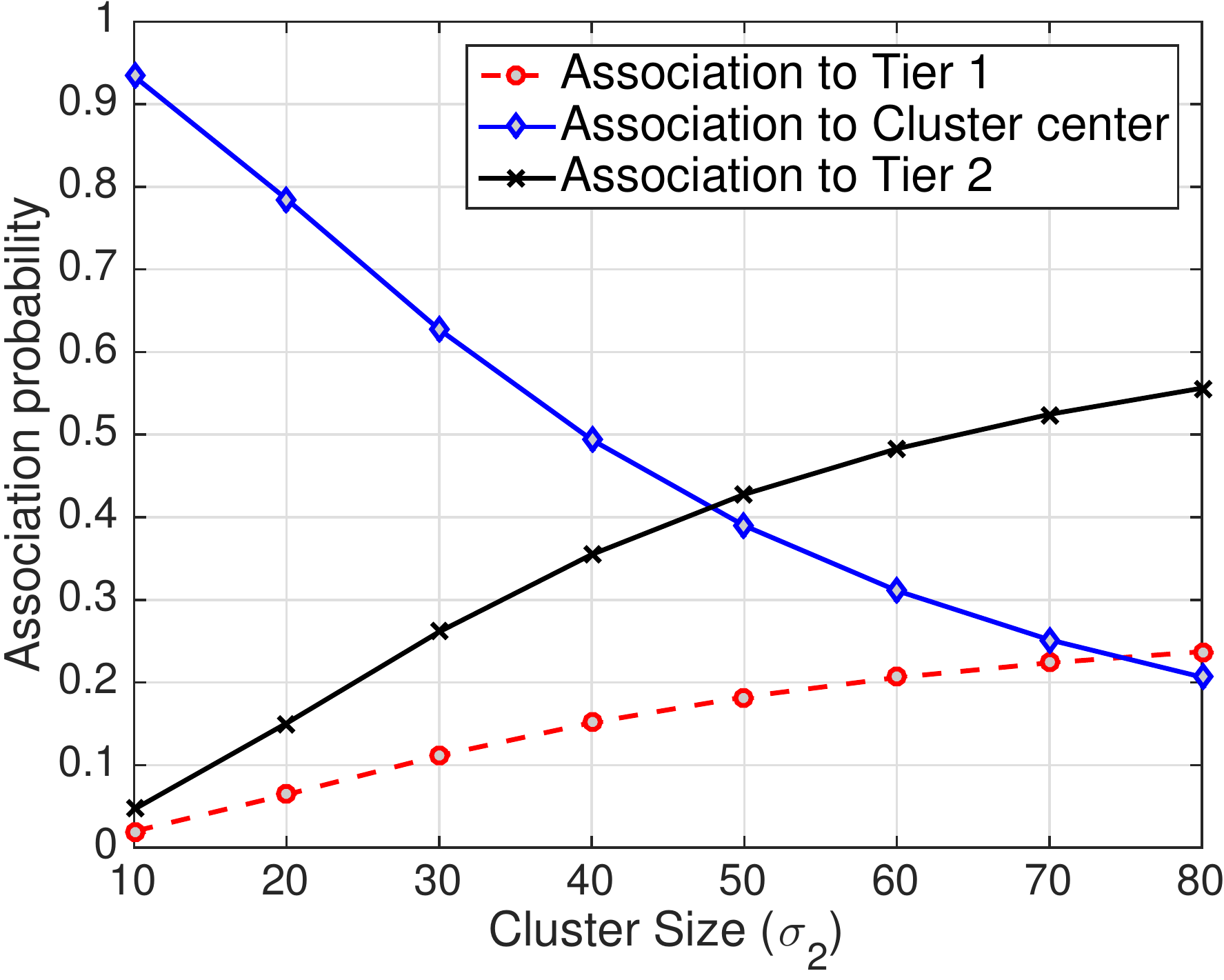}}\quad
\subfloat[Users in \matern\  cluster process\label{fig::association_matern}]{\includegraphics[trim={0cm 0cm 0cm 0cm},clip,width=0.4\textwidth]{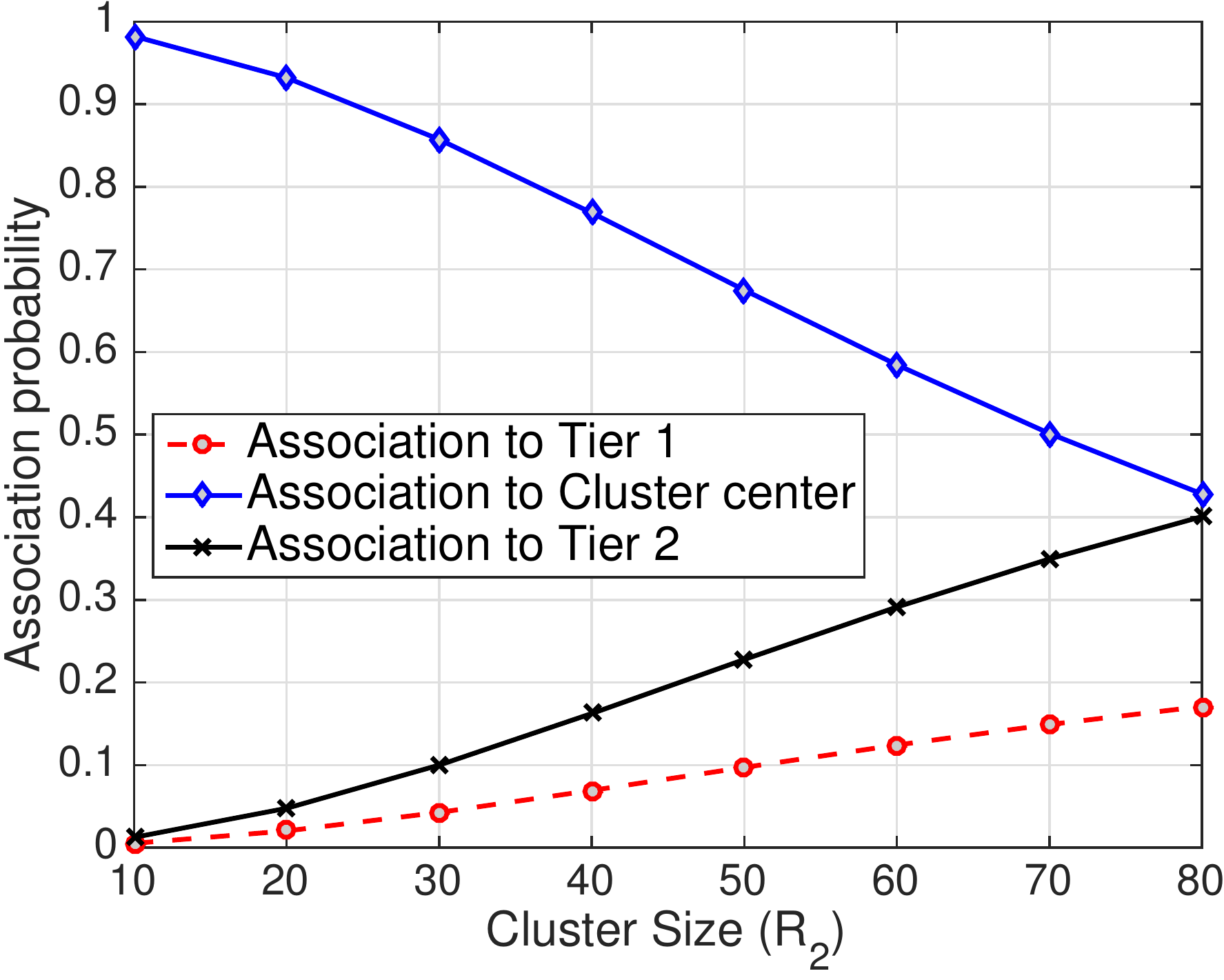}}
\caption{\small Comparison of the association probabilities to the two tiers and the cluster center.}
\label{fig::association}
\end{figure}
\begin{figure}
  \centering
  \subfloat[Users in Thomas cluster process\label{fig::bound_thomas}]{\includegraphics[width=0.4\textwidth]{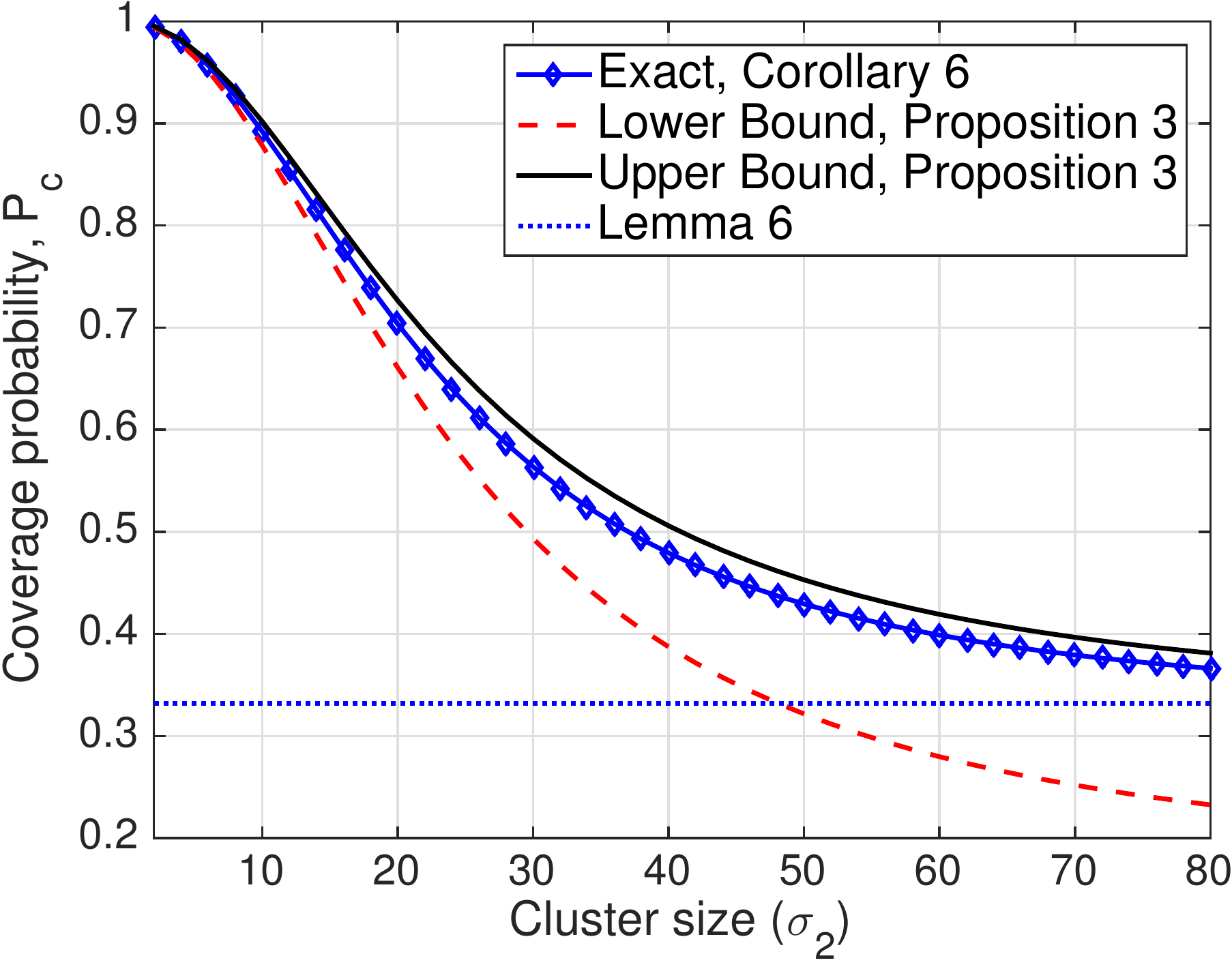}}\quad
  \subfloat[Users in \matern\ cluster process\label{fig::bound_matern}]{\includegraphics[width=0.4\textwidth]{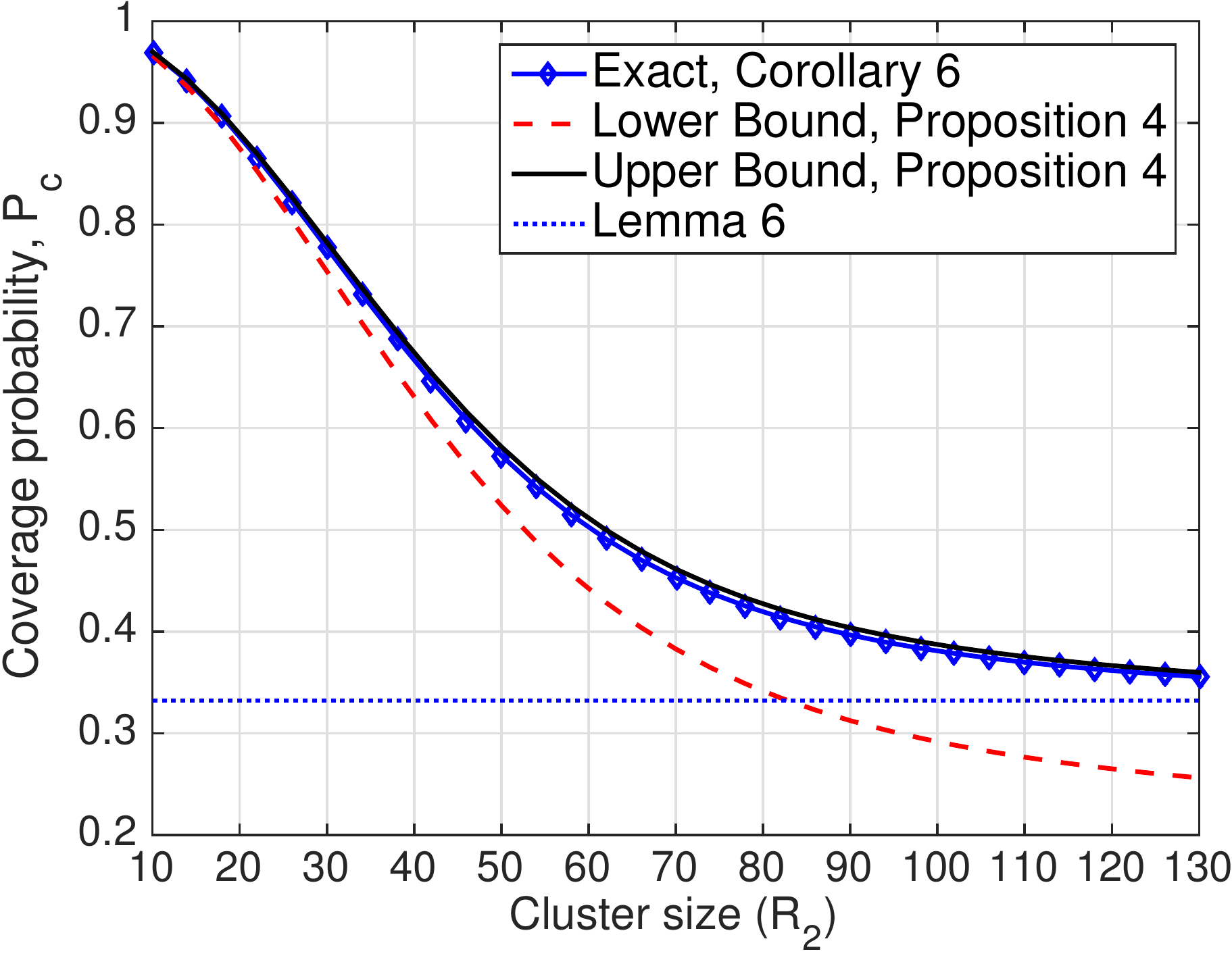}}
  \caption{{\small Inspection of the proposed closed form bound with variation of cluster size for constant $\sir$ threshold, $\tau = 0\  dB$}}
  \label{fig::bound}
\end{figure}
\subsection{Effect of Thermal Noise}\label{subsec::noise}
In this subsection, we investigate the effect of thermal noise on the coverage probability in the two-tier setup described in the previous section. In order to do this, we need to first fix a realistic reference point relative to which the noise variance $N_0$ will be decided. For that we choose the reference signal-to-noise ratio observed at the cell edge of a macrocell. Fixing this value to say $0$ dB we can then calculate the noise variance $N_0$ using the same procedure that we used in \cite[Section V-A]{dhillonHetNet}. Plugging this value in the theoretical results, we compare the coverage probability obtained under this setup with its no-noise counterpart under no shadowing in Fig.~\ref{fig::noise}. As expected, it is observed that the noise does not have any noticeable effect on the coverage probability due to which we will simply ignore the effect of noise in the rest of this section. 
\begin{figure}[t]
\centering
\subfloat[Users in Thomas cluster process\label{fig::noise_thomas}]{\includegraphics[trim={0cm 0cm 0cm 0cm},clip,width=0.4\textwidth]{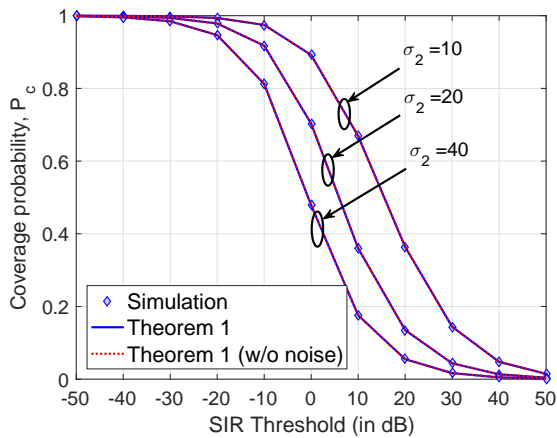}}\quad
\subfloat[Users in \matern\  cluster process\label{fig::noise_matern}]{\includegraphics[trim={0cm 0cm 0cm 0cm},clip,width=0.4\textwidth]{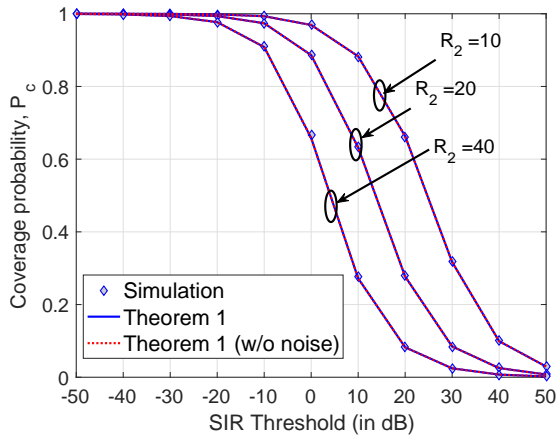}}
\caption{\small Comparison of coverage probabilities with and without thermal noise under no shadowing.}
\label{fig::noise}
\end{figure}
\subsection{Tightness of the Bounds}\label{sub::tightness_of_bounds}
In Proposition~\ref{prop::bound_cov}, we derived upper and lower bounds on $\pc^{(i)}$. We found that for no shadowing, these bounds reduce to closed form expression when $\Phi^u_{i}$ is Thomas or \matern\  cluster process (Propositions~\ref{prop::coverage_thomas_no_shadow_bound} and  \ref{prop::coverage_matern_no_shadow_bound}). In \figref{fig::bound}, we plot these upper and lower bounds on $\pc$. Recall that the lower bound was obtained by placing the BS of the cluster-center (in the representative cluster) on the boundary of the exclusion disc when the typical user connects to  other BSs and the upper bound was found by simply ignoring the interference from this BS (see Corollary~\ref{corr::interference_cc_lower_bound} for details). We observe that the lower bound becomes loose as the cluster size increases and for large user clusters, $\pc^{\rm(PPP)}$ becomes  tighter lower bound. 
This is because the interference from the cluster center is significantly overestimated by placing the BS of the cluster center at the boundary of the exclusion zone. The upper bound remains tight for the entire range of cluster sizes. This can be explained by looking at the cases of small and large clusters separately. For small clusters, the typical user will likely connect to the BS at its cluster center most of the time and hence the interference term in question (Laplace transform of interference from the cluster center; see Corollary~\ref{corr::interference_cc_lower_bound}) will not even appear in the coverage probability expression. On the other hand, for large clusters, the interference from the BS at the cluster center of the representative cluster will be negligible compared to the other interference terms due to large distance between the typical user and this BS. 
\subsection{Power Control of small cell BSs}\label{subsec::power_control}
{If $\Phi^u$ is a PPP independent to BS locations, then $\pc^{\rm(PPP)}$ is independent of the BS transmission power and it predicts that no further gain in coverage can be achieved by increasing $P_2/P_1$ (for interference-limited HetNet consisting of open access BSs under the assumption that the target SIR is the same for all the tiers)~\cite{dhillonHetNet}. In the typical two tier HetNet setup described in Section~\ref{subsec::validation}, we set the density of closed access tiers, $\lambda_1'=\lambda_2'=0$ and  vary  $P_1$ keeping  $P_2$ constant and plot $\pc$ in Figs.~\ref{fig::power_control_thomas} and \ref{fig::power_control_matern} and fix $\tau = 0\ {\rm dB}$.}  It is evident that $\pc$ improves significantly with $P_2/P_1$. In the figures, we can identify three regions of $\pc$: (i) For lower value of $P_2/P_1$, $P_c$ is close to $\pc^{\rm(PPP)}$, (ii) $\pc$ is enhanced as $P_2/P_1$ increases since the user is likely to be served by the cluster center, (iii) if $P_2/P_1$ is further increased, $\pc$ is saturated since association probability to other BSs will diminish. Again, the gain of $\pc$ is stronger for denser clusters. Thus, coverage gain can be harnessed by  increasing the transmit powers of small cell BSs in a certain range. 
\begin{figure}
\centering
\subfloat[Users in Thomas cluster process\label{fig::power_control_thomas}]{\includegraphics[trim={0cm 0cm 0cm 0cm},clip,width=0.42\textwidth]{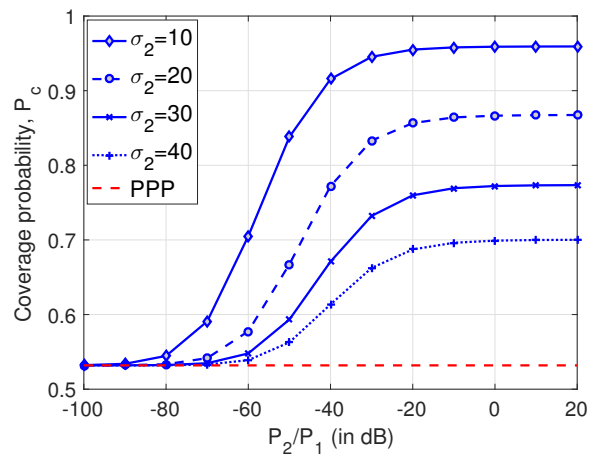}}
\hspace{.5cm}
\subfloat[Users in \matern\  cluster process\label{fig::power_control_matern}]{\includegraphics[trim={0cm 0cm 0cm 0cm},clip,width=0.42\textwidth]{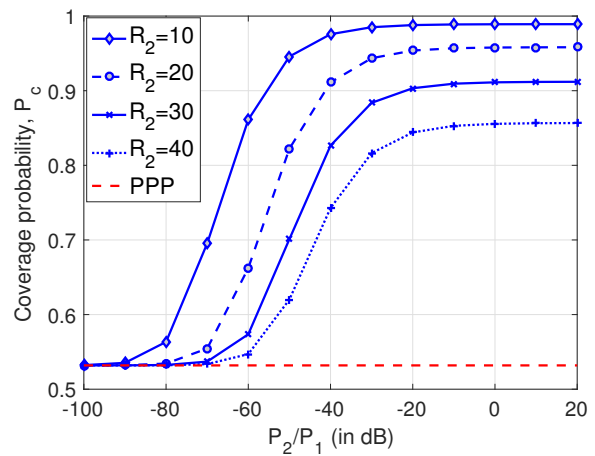}}
\caption{\small Effect of increasing small cell power on coverage.}
\label{fig::power_control}
\end{figure}
\section{Conclusion}
While random spatial models have been used successfully 
to study various aspects of HetNets in the past few years,
quite remarkably all these works assume the BS and user
distributions to be independent. In particular, the analysis is
usually performed for a typical user whose location is sampled
independently of the BS locations. This is clearly not the
case in current capacity-driven user-centric deployments where the BSs are
deployed in the areas of high user density. This paper presented
a  comprehensive analysis of such user-centric HetNet deployments in which the user and BS locations are naturally correlated. In particular, modeling the user
locations as a general Poisson cluster process, with BSs being
the cluster centers, we have developed new tools leading to
tractable results for the downlink coverage probability of a
typical user. We have specialized the results for the case of Thomas cluster process in which
the users are Gaussian distributed around BSs, and \matern\  cluster process where the users are
uniformly distributed inside a disc centered at the BS. We have also examined the bounds and
the limiting nature of the coverage probability as cluster size goes to infinity. We have  derived
the overall coverage probability for a mixed user distribution containing users uniformly distributed and clustered around small cell BSs. 
Overall, this work opens up a new dimension in the HetNet analysis by
providing tools for the analysis of non-uniform user distributions correlated to the BS locations.

This work has numerous extensions. From the system model side, one can perform measurement campaigns to characterize the nature of different user clusters at hotspot locations, such as restaurants, sports bars, and airports. Various cluster process models can then be fitted to this real-world data to obtain accurate user location models, which can then be used for more accurate performance analyses.  From the analytical point of view, an immediate extension is to perform the rate analysis and study the effect of traffic offloading from macrocells to small cells in the current setup. Also, in this work, we assumed the BS locations to be independent from each other. This may not always be the case. For instance, small cells, such as picocells, may not be deployed close to macrocells. Such dependencies have been modeled recently in~\cite{SpatiotemporalHaenggi2015,AfZhMuCh2015} by modeling the BS distribution asusing Poisson Hole Process~\cite{yazdanshenasan2016poisson}. Also, the smallcells may be densely deployed in user hotspot zones and the spatial distribution can be modeled by PCP ~\cite{AfshDhiHetNet2016}. Other considerations, such as device-to-device (D2D) communication in clusters can also be incorporated in this model~\cite{AfshDhiAsilomar2015}.

\appendix
\subsection{Proof of Lemma~\ref{lemma::association_prob_sh}}
\label{app::association_prob_sh}
  According to the definition of $S_{\Phi_j}$ in Eq.~\ref{eq::def_s_phi}, we can write from Eq.~\ref{eq::association_prob_defn}, 
\begin{align}
&\ncalA_{j|v_0}=\nbbE_{\mathbf{R}}\left[\bigcap\limits_{{k\in\ncalK_1\setminus \{j\}}}\mathbf{1}\left(R_k>\pow{j}{k}R_j\right)|\ncalV_0=v_0\right]\notag\\ 
&\myeq{a}\nbbE_{R_j}\prod_{\substack{k=0\\k\neq j}}^K\nbbP\left(R_k>\pow{j}{k}R_j|v_0\right)=\nbbE_{R_j}\prod_{\substack{k=0\\k\neq j}}^K\overline{F}_{R_k}(\pow{j}{k}R_j|v_o), \label{eq::proof_lemma_1}
\end{align}
where (a) comes from the fact that $\Phi_k$-s are independent, hence are $R_k$-s. 
For the rest of the proof, we need to consider the two cases of $j=0$ and $j\neq 0$ separately. Note that only the RV $R_0$ among all $R_j$-s is the function of $\ncalV_0$. For \textbf{case 1:} $j=0$ 
\begin{align*}
\ncalA_{j|v_0}=\nbbE_{R_0}\prod_{k=1}^K\overline{F}_{R_k}(\Scale[0.9]{\pow{j}{k}R_0|v_0})=\nbbE_{Y_0}\prod_{k=1}^K
\overline{F}_{R_k}(\pow{0}{k}v_0^{-\frac{1}{\alpha}}Y_0),  
\end{align*}
{and for \textbf{case 2:} $j\in \ncalK$,
\begin{align*}
 \ncalA_{j|v_0}&=\nbbE_{\mod_j}\bigg[\prod\limits_{\substack{{k=0}\\{k\neq j}}}^K\nbbP(\mod_k>\pow{j}{k}\mod_j|v_0)\bigg]\\&=\nbbE_{\mod_j}\bigg[\nbbP(v_0^{-\frac{1}{\alpha}}Y_0>\pow{j}{0}\mod_j)\prod\limits_{\substack{{k=1}\\{k\neq j}}}^K\nbbP(\mod_k>\pow{j}{k}\mod_j)\bigg]\\&=\nbbE_{\mod_j}\bigg[\overline{F}_{Y_0}(v_0^{\frac{1}{\alpha}}\pow{j}{0}R_j)\prod\limits_{\substack{k=1\\k\neq j}}^K\overline{F}_{\mod_k}(\pow{j}{k}R_j)\bigg].
 \end{align*}
\subsection{Proof of Corollary~\ref{cor::association_thomas}}
\label{app::association_thomas}
When $j=0$, from Eq.~\ref{eq::association_prob_main_lemma}, we get,
$$\ncalA_{0|v_0}=\int_{y_0>0}\prod\limits_{k=1}^K\overline{F}_{R_k}(\pow{0}{k}v_0^{-\frac{1}{\alpha}}y_0)\:f_{Y_0}(y_0)\:{\rm d}y_0.$$

Substituting $f_{Y_0}(y_0)$ from Eq.~\ref{eq::dist_thomas} and $\overline{F}_{R_k}(\pow{0}{k}v_0^{-\frac{1}{\alpha}}y_0)$ from Eq.~\ref{eq::nn_distribution_ccdf}, we get,
\begin{align*}
&\ncalA_{0|v_0}=\int\limits_{0}^{\infty}\exp\big(-\Scale[0.9]{\pi\sum\limits_{k=1}^{K}\lam_k\pow{0}{k}v_0^{-\frac{2}{\alpha}}y_0^2}\big)
\frac{y_{0}}{\sigma_i^2}\exp\bigg(-\frac{y_0^2}{2\sigma_i^2}\bigg){\rm d}y_0\\&=\frac{\frac{v_0^{\frac{2}{\alpha}}}{2\pi\sigma_i^2}}{\frac{v_0^{\frac{2}{\alpha}}}{2\pi\sigma_i^2}+\sum\limits_{k=1}^K\pow{0}{k}^2\lam_k}.
\end{align*}
Putting $\lam_0=\frac{v_0^{\frac{2}{\alpha}}}{2\pi\sigma_i^2}$, we get the desired result. Note that $\pow{0}{0}=1$.
For $j\neq 0$, 
\begin{align*}
&\ncalA_{j|v_0}=\int\limits_{0}^{\infty}\overline{F}_{Y_0}(v_0^{\frac{1}{\alpha}}\pow{j}{0}R_j)\prod\limits_{\substack{k=1\\k\neq j}}^K\overline{F}_{R_k}(\pow{j}{k}r_j)f_{R_j}(r_j)\:{\rm d}r_j\\&=\int\limits_{0}^{\infty}\exp\left(-\frac{(\pow{j}{0}v_0^{\frac{1}{\alpha}}r_j)^2}{2\sigma_i^2}\right)\exp\bigg(-\pi\sum\limits_{\substack{{k=1}\\{k\neq j}}}^{K}\lam_k\pow{j}{k}^2r_j^2\bigg)
2\pi\lam_j\times\\
&\exp(-2\pi\lam_jr_j^2)\:r_{j}\:{\rm d}r_j=\frac{\lam_j}{\frac{\pow{j}{0}^2}{2\pi\sigma_i^2}+\sum\limits_{k=1}^K\pow{j}{k}^2\lam_k}=\frac{\lam_j}{\sum\limits_{k=0}^K\pow{j}{k}^2\lam_k}.
 \end{align*}
 The last step was derived by putting $\lam_0=\frac{v_0^{\frac{2}{\alpha}}}{2\pi\sigma_i^2}$.
\subsection{Proof of Corollary~\ref{cor::association_matern}}
\label{app::association_matern}
Similar to Corollary~\ref{cor::association_thomas}, for $j=0$, plugging Eq.~\ref{eq::nn_distribution_ccdf} and Eq.~\ref{eq::dist_matern} in Eq.~\ref{eq::association_prob_main_lemma}, we get, 
\begin{align*}
\ncalA_{0|v_0}&= \int\limits_{0}^{\Ri}\exp\bigg(-\pi\sum\limits_{k=1}^{K}\lam_k\pow{0}{k}^2v_0^{-\frac{2}{\alpha}}y_0^2\bigg)\:\frac{2y_0}{\Ri^2}\: {\rm d}y_0\\&
= \frac{v_0^{\frac{2}{\alpha}}}{\Ri^2\ncalZ_0}\bigg(1-\exp\bigg(-{v_0^{-\frac{2}{\alpha}}}\ncalZ_0\Ri^2\bigg)\bigg),
\end{align*}
where $\ncalZ_0=\pi\sum\limits_{k=1}^{K}\lam_k\pow{0}{k}^2$.
Now for $j\in \ncalK$, using Eq.~\ref{eq::nn_distribution}, Eq.~\ref{eq::nn_distribution_ccdf} and Eq.~\ref{eq::dist_matern_ccdf} in Eq.~\ref{eq::association_prob_defn} and proceeding according to the proof of Corollary~\ref{cor::association_thomas}, we get, 
\begin{align*}
&\ncalA_{j|v_0}= 2\pi\lam_j\int\limits_{0}^{\frac{\Ri}{\pow{j}{0}{v_0^{\frac{1}{\alpha}}}
}}\exp\left(-\pi\sum\limits_{k=1}^{K}\lam_k\pow{j}{k}^2r_j^{2}\right)\Scale[0.95]{\frac{\Ri^2-(\pow{j}{0}{v_0^{\frac{1}{\alpha}}}r_j)^2}{\Ri^2}}r_j {\rm d}r_j\\
&=\frac{\pi\lam_j}{{\cal Z}_j}-\frac{\lam_j\pi\pow{j}{0}^2v_0^{\frac{2}{\alpha}}}{\Ri^2{\cal Z}_j^2}\bigg(1-\exp\big(-\frac{{\cal Z}_j\Ri^2}{\pow{j}{0}^2v_0^{\frac{2}{\alpha}}}\big)\bigg),\\&\quad \text{where $\ncalZ_j=\pi\sum\limits_{k=1}^K\lam_k\pow{j}{k}^2$ for $j\in\ncalK$.
}
\end{align*}
\subsection{Proof of Lemma~\ref{lemma::serving_distance_dist_sh}}
\label{app::serving_distance_dist_sh}
The conditional  CCDF of $W_j$ in this case is,  
\begin{align}
 &\nbbP[W_j>w_j|\ncalV_0]=\nbbP[R_j>w_j|S_{\Phi_j},\ncalV_0]=\Scale[0.97]{\frac{\nbbP\left(R_j>w_j|\ncalV_0,S_{\Phi_j}|\ncalV_0\right)}{\nbbP(S_{\Phi_j}|\ncalV_0)}}\notag\\&\myeq{a}\frac{1}{\condA{j}}\prod_{\substack{k=1\\k\neq j}}^K\left[\nbbP(P_j  R_j^{-\alpha}>P_k R_k^{-\alpha}|R_j>w_j, \ncalV_0)\right]\notag 
,\end{align}
where (a) follows from Eq.~\ref{eq::proof_lemma_1}. For 
\textbf{case 1:} when $j=0$, given $\ncalV_0=v_0$, $\nbbP[W_0>w_0|\ncalV_0=v_0]=$
\begin{align*}
 \label{eq::intermediate_cdf}
 &\frac{1}{\ncalA_{0|v_0}}\prod_{k=1}^{K}\nbbP(P_0 v_0 Y_0^{-\alpha}>P_k \mod_k^{-\alpha}|v_0^{-\frac{1}{\alpha}}Y_0>w_0)\\&=\frac{1}{\ncalA_{0|v_0}}\int\limits_{v_0^{\frac{1}{\alpha}}w_0}^{\infty}\prod_{k=1}^{K}\overline{F}_{\mod_k}(\pow{0}{k}v_0^{-\frac{1}{\alpha}}y_0)f_{Y_0}(y_0)\:{\rm d}y_0.
\end{align*}
 Thus, the conditional distribution of $W_0$ is obtained by
  \begin{align*}
 & \fservcond{0}(w_0|v_0)=\frac{\rm d}{{\rm d}w_0}\left(1- \nbbP[W_0>w_0|\ncalV_0=v_0]\right)
 \\&=v_0^{\frac{1}{\alpha}}\frac{\prod\limits_{k=1}^{K}\overline{F}_{R_k}\left(\pow{0}{k}w_0\right)f_{Y_0}(v_0^{\frac{1}{\alpha}}w_0)}{\ncalA_{0|v_0}}.
 \end{align*}
 For \textbf{case 2:} when $j\in\ncalK$, $\nbbP[W_j>w_j|v_0]=\frac{1}{\ncalA_{j|v_0}}\nbbP(v_0^{-\frac{1}{\alpha}}Y_0>\pow{j}{0}\mod_j)\prod_{k=1\ k\neq j}^{K}\nbbP(P_j  \mod_j^{-\alpha}>P_k \mod_k^{-\alpha}|R_j>w_j).
 $
The rest of the proof continues in the same line of case 2 in Lemma~\ref{lemma::association_prob_sh}. 
 \subsection{Proof of Corollary~\ref{cor::f_dist_thomas}}
 \label{app::f_dist_thomas}
 The serving distance distribution when the user is served by its own cluster center is
\begin{align*}
\fservcond{0}(w_0|v_0)&=\frac{v_0^{\frac{1}{\alpha}}}{\ncalA_{0|v_0}}\prod\limits_{k=1}^K\overline{F}_{R_k}\left(\pow{0}{k}w_0\right)f_{Y_0}(v_0^{\frac{1}{\alpha}}w_0).
\end{align*}
Substituting $\overline{F}_{R_k}(\pow{0}{k}w_0)$ from Eq.~\ref{eq::nn_distribution_ccdf} and $f_{Y_0}(v_0^{\frac{1}{\alpha}}w_0)$ from Eq~\ref{eq::dist_thomas}
\begin{align}
 \fservcond{0}(w_0|v_0)&=\frac{v_0^{\frac{1}{\alpha}}}{\ncalA_{0|v_0}}\prod\limits_{k=1}^K\exp\left(-\pi\sum\limits_{k=1}^K\lam_k\pow{0}{k}^2w_0^2\right)\notag\\&\qquad\times\frac{v_0^{\frac{1}{\alpha}}w_0}{\sigma_i^2}\exp\bigg(-\frac{v_0^{\frac{2}{\alpha}}w_0^2}{2\sigma_i^2}\bigg).
\end{align}
Putting $\lam_0$ as defined before, we obtain the desired result. 
 For other open access tiers except the $0^{th}$ tier we can perform similar steps to find $\fservcond{j}(w_j|v_0)$. Starting from Lemma~\ref{lemma::serving_distance_dist_sh}, 
\begin{align*}
&\fservcond{j}(w_j|v_0)=\frac{1}{\ncalA_{j|v_0}}\Scale[0.9]{\overline{F}_{Y_{0}}(v_0^{\frac{1}{\alpha}}\pow{j}{0}w_j)\prod\limits_{\substack{k=1\\k\neq j}}^K\overline{F}_{\mod_{k}}(\pow{j}{k}w_j)f_{R_j}(w_j)}\\
 &\myeq{a}\frac{1}{\ncalA_{j|v_0}}\exp\bigg(-\frac{v_0^{\frac{2}{\alpha}}\pow{j}{0}^2w_j^2}{2\sigma_i^2}\bigg)\exp\bigg(-\pi\sum\limits_{\substack{k=1\\k\neq j}}^K\lam_k\pow{j}{k}^2w_j^2\bigg)2\pi\lam_j\\&\times \exp(-\pi\lam_jw_j^2)w_j
 =\frac{2\pi\lam_j}{\ncalA_{j|v_0}}\exp\big(-{\pi\sum\limits_{k=0}^K\lam_k\pow{j}{k}^2w_j^2}\big)w_j,
\end{align*}
where (a) follows from substitution of ${f}_{R_j}(\cdot)$, $\overline{F}_{R_k}(\cdot)$, $\overline{F}_{R_0}(\cdot)$ by Eq.~\ref{eq::nn_distribution}, Eq.~\ref{eq::nn_distribution_ccdf} and Eq.~\ref{eq::dist_thomas_ccdf}.
\subsection{Proof of Theorem~\ref{th::coverage}}
\label{app::coverage}
Recalling the definition of $\pc_j^{(i)}$ in Eq.~\ref{eq::coverage_per_tier_no_sh}, we first calculate the conditional probability, $\nbbP(\sinr(W_j)>\tau|W_j=w_j,\ncalV_0=v_0)$  $\forall j \in \ncalK_{1}$. The final result can be obtained by taking expectation with respect to $W_j$ and $\ncalV_0$. For \textbf{case 1:} when $j\in\ncalK$,
\begin{align*}
&\nbbP\biggl(\frac{P_j  \h{x}{j} w_j^{-\alpha}}{N_0+\sum\limits_{k =0}^{K} \ncalI_{o(j,k)}+\ncalI_c}>\tau|\ncalV_0=v_0\biggr)\\&
    \myeq{a}\nbbE\exp\left(-\frac{\tau w_j^{\alpha}}{P_j}\bigg(N_0+\sum\limits_{k =0}^{K} \bigg(\ncalI_{o(j,k)}+\ncalI_{c(k)}\bigg)\bigg)|\ncalV_0=v_0\right)\\
    &\myeq{b}\Scale[0.90]{\exp\bigg(-\frac{\tau N_0 w_j^{\alpha}}{P_j}\bigg)\nbbE\exp\bigg(-\frac{\tau w_j^{\alpha}}{P_j}\ncalI_{o(j,0)}|\ncalV_0=v_0\bigg)}\notag\\&{ \times \nbbE\exp\bigg(-\frac{\tau w_j^{\alpha}}{P_j}\sum\limits_{k =1}^{K} \ncalI_{o(j,k)}\bigg)\nbbE\exp\bigg(-\frac{\tau w_j^{\alpha}}{P_j}\sum_{k=1}^K\ncalI_{c(k)}\bigg)}\\
&=\exp\bigg(-\frac{\tau N_0 w_j^{\alpha}}{P_j}\bigg)\ncalL_{\ncalI_{o(j,0)}}\left(\frac{\tau w_j^{\alpha}}{P_j}|v_0\right)\prod\limits_{k =1}^{K}\ncalL_{\ncalI_{o(j,k)}}\left(\frac{\tau w_j^{\alpha}}{P_j}\right)\\&\qquad\qquad\times\prod_{k=1}^{K}\ncalL_{\ncalI_{c(k)}}\left(\frac{\tau w_j^{\alpha}}{P_j}\right)   ,
 \end{align*}
where (a) follows from $h_j\sim\exp(1)$, (b) is due to the independence of the interference from open and closed access tiers. Also note that none of the interference components except $\ncalI_{o(j,0)}$ depends on $\ncalV_0$. 

\textbf{Case 2:} For $j=0$, no contribution due to $\ncalI_{o({0,0})}$ will be accounted for. Hence, 
\begin{align*}
&\nbbP(\sir(W_0)>\tau|\serv_0=w_0)=\exp\bigg(-\frac{\tau N_0 w_0^{\alpha}}{P_j}\bigg)\\&\qquad\qquad\qquad\times\prod\limits_{k =1}^{K}\ncalL_{\ncalI_{0(0,k)}}\left(\frac{\tau w_0^{\alpha}}{P_j}\right)\prod\limits_{k=1}^K\ncalL_{\ncalI_{c(k)}}\left(\frac{\tau w_0^{\alpha}}{P_j}\right). 
\end{align*}

\subsection{Proof of Lemma~\ref{lemma::L_I_0(j,k)}}
\label{app::I_o_jk}
By definition, the Laplace transform of interference is  $\ncalL_{\ncalI_{o(j,k)}}(s)=\nbbE(\exp(-s\ncalI_{o(j,k)}))$
\begin{align}&=\nbbE\bigg[\exp\bigg(-s\sum\limits_{\scriptscriptstyle\substack{\mathbf{x}_k\in \Phi_k\\ \setminus b(\mathbf{0},\pow{j}{k}\serv_j)}}P_k \h{x}{k}\|\mathbf{x}_k\|^{-\alpha} \bigg)\bigg]\notag\\
 &\myeq{a}\nbbE_{\Phi_k}\left[\prod \limits_{\substack{\mathbf{x}_k\in \Phi_k\\\setminus b(\mathbf{0},\pow{j}{k}\serv_j)}}\nbbE_{h_k}\left(\exp\left(-sP_k \h{x}{k} \|\mathbf{x}_k\|^{-\alpha}\right)\right)\right]\notag\\
&\myeq{b}\nbbE_{\Phi_k}\left[\prod \limits_{\substack{\mathbf{x}_k\in \Phi_k\\\setminus b(\mathbf{0},\pow{j}{k}\serv_j)}}\frac{1}{1+sP_k\|\mathbf{x}_k\|^{-\alpha}}\right]\notag\\
&\myeq{c}\exp\bigg(-2\pi\lam_k\int\limits_{\pow{j}{k}\serv_j}^{\infty}\left(1-\frac{1}{1+sP_kr^{-\alpha}}\right)\:r\: {\rm d}r\bigg),\label{eq::lemma_3_intermediate}
 \end{align}
 where (a) is due to the i.i.d. assumption of $h_k$, (b) follows from $h_k \sim \exp(1)$, (c) follows from the transformation to polar coordinates and probability generating functional of homogeneous PPP \cite{HaeB2013}.
The final result can be obtained by using the integral in \cite[Eq. 3.194.1]{zwillinger2014table}.
\subsection{Proof of Lemma~\ref{lemma::lemma_limit}}
\label{app::lemma_limit}
As will be evident from the proof, the limiting arguments as well as the final limit remain the same irrespective of the value of random variable $\mathcal{V}_0$. Therefore, for notational simplicity, we provide this proof for the no shadowing scenario, which is without loss of generality. The Euclidean
distance from the typical user to its cluster center is now $Z = \std Y_0$.  In the expression of $\pc^{(i)}_{j}$ in Eq.~\ref{eq::p_c_j_compact}, we are particularly interested in inner integral which comes from the Laplace transform of interference from the BS at $0^{th}$ tier derived in Lemma~\ref{lemma::I_o_j0}. From Eq.~\ref{eq::Io_j0_general}  with the substitution ${\pow{j}{i}w_j}=x$ and $v_0=1$, we can write:
\begin{align*}
&\int\limits_{x}^{\infty}\frac{1}{1+\tau(\frac{z}{x})^{-\alpha}}\frac{f_{Z}(z)}{\overline{F}_{Z}(x)}{\rm d}z=\nbbE_{Z}\left[\frac{1}{1+\tau{\left(\frac{Z}{x}\right)}^{-{\alpha}}}|Z>x\right]\\&=\nbbE_{Y_0}\left[\frac{1}{1+\tau{\left(\frac{\std Y_0}{x}\right)}^{-{\alpha}}}|Y_0>\frac{x}{\std}\right].
\end{align*}
Hence, $\lim\limits_{\std\to\infty}\int\limits_{\frac{x}{\std}}^{\infty}\frac{1}{1+\tau{\left(\frac{\std Y_0}{x}\right)}^{-{\alpha}}}\frac{f_{Y_0}(y_0)}{\overline{F}_{Y_0}(\frac{x}{\std })}{\rm d}y_0=\int\limits_{0}^{\infty}f_{Y_0}(y_0){\rm d}y_0=1.
$
Thus, as $\std\to\infty$, the inner integral tends to 1. So, $
\lim\limits_{\std\to\infty}\pc_j^{(i)}=\lam_j/{\sum\limits_{k=1}^K\pow{j}{k}^2\bigg(\lam_k(\G+1)+\lamc_k\H\bigg)}$.
Now we are left with the limit of the first term $\ncalA_0\pc_0^{(i)}$ i.e., the contribution of the $0^{th}$ tier in $\pc^{(i)}$ which will obviously go to zero as $\std\to \infty$.
{
\bibliographystyle{IEEEtran}
\bibliography{Paper_TW_Jan_04_0873_R1.bbl}
}
\end{document}